\newcommand{\bra}[1]      {\langle #1|}
\newcommand{\ket}[1]      {|#1\rangle}
\newcommand{\braket}[2]   {\left\langle #1\middle|#2\right\rangle}
\newcommand{\braopket}[3] {\left\langle #1\middle|#2\middle|#3\right\rangle}
\newcommand{\comm}[2]     {\left[ #1 , #2 \right]}
\newcommand{\twicomm}[2]  {\comm{#1}{#2}_\alpha}
\newcommand{\abs}[1]      {\left|#1\right|}
\newcommand{\trivert}     {{\vert\kern-0.4ex\vert\kern-0.4ex\vert}}
\newcommand{\trivertbig}  {{\big\vert\kern-0.4ex\big\vert\kern-0.4ex\big\vert}}
\newcommand{\trivertBig}  {{\Big\vert\kern-0.4ex\Big\vert\kern-0.4ex\Big\vert}}
\newcommand{\norm}[1] 	  {\left\|#1\right\|}
\newcommand{\normF}[1] 	  {\left\|#1\right\|_{F}}
\newcommand{\normpk}[1]   {\left\|#1\right\|_{(p,k)}}
\newcommand{\normu}[1]    {{\left\vert\kern-0.4ex\left\vert\kern-0.4ex\left\vert #1 
		\right\vert\kern-0.4ex\right\vert\kern-0.4ex\right\vert}}
\theoremstyle{plain}
\newtheorem{thm}{Theorem} 
\newtheorem{thm2}{Theorem} 
\newtheorem{appthm}{Theorem} 
\newtheorem{lem}{Lemma}
\newtheorem{corr}[thm]{Corollary}
\newtheorem{defn}{Definition}
\numberwithin{lem}{section}
\numberwithin{appthm}{section}
\newcommand{\Pic}{\bar{\Pi}}
\newcommand{\round}[1]{\left\lfloor#1\right\rceil}
\let\Re\relax
\DeclareMathOperator{\Re}{Re}
\DeclareMathOperator{\Tr}{Tr}
\DeclareMathOperator{\Span}{Span}
\newcommand*{\STF}[1]{}
\newcommand*{\CTC}[1]{}
\newcommand*{\CTCsout}[1]{}
\newcommand*{\ctc}[1]{}
\newcommand*{\stf}[1]{}
\newcommand*{\stfsout}[1]{}
\newcommand*{\add}[1]{{#1}}
\newcommand*{\del}[1]{}
\def\lastcountervalue#1#2{\expandafter\xdef\csname l@#1@value\endcsname{#2}}
\def\savelastvalue#1{\AtEndDocument{\immediate\write\@auxout{\string\lastcountervalue{#1}{\arabic{#1}}}}}
\def\getlastvalue#1{\@ifundefined{l@#1@value}{??}{\@nameuse{l@#1@value}}}
\newif\ifcompile\compilefalse
\newif\ifcomments\commentsfalse
\newif\ifchanges\changesfalse
	\pgfplotsset{compat=newest}
	\newlength{\figheight}   \newlength{\figwidth}
	\newcounter{stfcomments}
	\newcounter{ctccomments}
	\newcounter{totcomments}
	\renewcommand*{\STF}[1]{
		\protect\stepcounter{stfcomments}\protect\stepcounter{totcomments}{{\bfseries\color{red}\centering\begin{framed}STF: #1\end{framed}}}}
	\renewcommand*{\CTC}[1]{
		\protect\stepcounter{ctccomments}\protect\stepcounter{totcomments}{{\bfseries\color{blue}\centering\begin{framed}CTC: #1\end{framed}}}}
	\renewcommand*{\CTCsout}[1]{
		{{\bfseries\color{blue}\centering\begin{framed}CTC: \sout{#1}\end{framed}}}}
	\renewcommand{\stf}[1]{\protect\stepcounter{stfcomments}\protect\stepcounter{totcomments}{\color{red}\text{stf:} #1}}
	\renewcommand{\stfsout}[1]{{\color{red}{\sout{stf: #1}}}}
	\renewcommand{\ctc}[1]{{\protect\stepcounter{ctccomments}\protect\stepcounter{totcomments}\color{blue}ctc: #1} }
	\renewcommand*{\add}[1]{{\color{OliveGreen}#1\color{black}}}
	\renewcommand*{\del}[1]{{\color{red}#1\color{black}}}
\title{Approximate symmetries of Hamiltonians}
\author{Christopher T.\ Chubb$^1$~~and~~Steven T.\ Flammia$^{1,2}$\\
	{\small\itshape $^{1}$Centre for Engineered Quantum Systems, University of Sydney, Sydney, NSW, Australia.}\\
	\add{{\small\itshape $^{2}$Center for Theoretical Physics, Massachusetts Institute of Technology, Cambridge, MA, USA.}}}
\begin{document}
\maketitle

\begin{abstract}
We explore the relationship between approximate symmetries of a gapped Hamiltonian and the structure of its ground space. 
We start by considering approximate symmetry operators, defined as unitary operators whose commutators with the Hamiltonian have norms that are sufficiently small. 
We show that when approximate symmetry operators can be restricted to the ground space while approximately preserving certain mutual commutation relations. 
We generalize the Stone-von Neumann theorem to matrices that \emph{approximately} satisfy the canonical (Heisenberg-Weyl-type) commutation relations, and use this to show that approximate symmetry operators can certify the degeneracy of the ground space even though they only approximately form a group. 
Importantly, the notions of ``approximate'' and ``small'' are all independent of the dimension of the ambient Hilbert space, and depend only on the degeneracy in the ground space. 
Our analysis additionally holds for any gapped band of sufficiently small width in the excited spectrum of the Hamiltonian, and we discuss applications of these ideas to topological quantum phases of matter and topological quantum error correcting codes. 
Finally, in our analysis we also provide an exponential improvement upon bounds concerning the existence of shared approximate eigenvectors of approximately commuting operators\add{ under an added normality constraint,} which may be of independent interest.
\end{abstract}



\ifcomments
	\begin{framed}
		\vspace{-.5cm}
		\begin{center}
			\Large
			STF:~\getlastvalue{stfcomments}\hspace{2cm}
			CTC:~\getlastvalue{ctccomments}\hspace{2cm}
			Total:~\getlastvalue{totcomments}
		\end{center}
		\vspace{-.5cm}
	\end{framed}
\fi

\hypersetup{linkcolor=black}
\tableofcontents
\hypersetup{linkcolor=red}


\section{Introduction}

Given a quantum system described by a Hamiltonian $H$, a symmetry is simply an 
operator that \del{exactly }commutes with $H$. \del{Often symmetries are 
restricted to be hermitian or unitary. $H$ can be simultaneously block 
diagonalized with the symmetry operator}\add{The symmetry can be block 
diagonalized with respect to the energy eigenspaces}, and \add{so }the 
degeneracy within these blocks 
is \add{constrained}\del{determined} by the symmetry. In a system that 
possesses exact symmetries, a 
sufficiently weak perturbation will preserve the number of states of any band 
gapped away from the rest of the spectrum, but the symmetries will generally 
become only approximate. 

In this work we consider a natural converse to this: suppose we know that a system has some approximate symmetries and a gapped band, such as the ground space band. Can we ``unperturb'' the symmetries into exact symmetries within the given band? Can we also use the approximate group structure of the approximate symmetries to count the degeneracy within the band? We answer these questions in the affirmative, giving quantitive bounds on when such a procedure can be performed, and thus when such approximate symmetries can be used as certificates of ground space degeneracy.

A related area of mathematical research with a long and rich history is the 
relationship between the properties of approximately and exactly commuting 
matrices. An exemplary problem which dates back as far as the 
1950s~\cite{Rosenthal1969, Halmos1976, Szarek1990, Davidson1985, Berg1991} is 
whether a pair of approximately commuting matrices lie near an exactly 
commuting pair, i.e.\ whether there exists a dimension independent $\delta>0$ 
for each $\epsilon>0$ such that \add{for all $H,S$ with $\norm{H}\!,\norm{S}\leq 1$,}
\[ \norm{\comm{H}{S}}\leq \delta\qquad \implies \qquad \exists 
\tilde{H},\tilde{S}:\bigl[\tilde{H},\tilde{S}\bigr]=0,\text{ where } 
\bigl\|H-\tilde{H}\bigr\|,\bigl\|S-\tilde{S}\bigr\|\leq \epsilon, 
\]
\add{where here and throughout the norm $\| \cdot \|$ is the operator norm.} 
 Interpreting $H$ as the Hamiltonian and $S$ as\del{the action of} a symmetry, this problem can be interpreted as whether approximate symmetries are necessarily near exact symmetries of a perturbed system. 
It has been shown that just such a theorem holds if all matrices are 
Hermitian~\cite{Lin1997, Kachkovskiy2014, Friis1996, hastings09b}. 
A\del{nother} physical consequence of this is that a pair \add{approximately 
}commuting observables can be approximately simultaneously 
measured~\cite{hastings09b}. 

For \del{discrete symmetries, described by }unitary matrices\del, the above is \add{however }known to be generally false~\cite{Voiculescu1983}. This is due to a K-theoretic obstruction~\cite{Exel1989, Exel1991, Choi1988}, though it is true if this obstruction vanishes~\cite{Lin1997,Kachkovskiy2014, Lin1998}, or under the assumption of a spectral gap~\cite{Osborne2008}. Imposing a form of self-duality analogous to time-reversal symmetry the relevant K-theoretic obstruction reduces to the spin Chern number of a fermionic system~\cite{Loring2013}, highlighting a link between the fields of topologically ordered quantum systems~\cite{Wen2012} and approximately commuting matrices.

Here we will consider Hamiltonians $H$ with multiple non-commuting approximate 
symmetries, and establish a connection to the ground space degeneracy. 
\add{Ground space degeneracy is a property of a quantum system that plays a 
special role in several important applications, such as quantum coding theory 
and the study of phases of matter. Quantum codes, especially those encoded into 
ground spaces of local Hamiltonians, are the leading candidates for 
thermally stable quantum memories~\cite{Preskill1999,Brown2016}; in these 
models approximate symmetries constitute approximate logical operators and the 
ground space degeneracy corresponds to the code size. In the context of 
condensed matter systems, the link between symmetries and degeneracies plays an 
important role both in classical symmetry-breaking 
phases~\cite{LandauLifshitz}, and also in exotic quantum phases, such as those 
exhibited by topologically ordered models~\cite{Wen2012}. 
Unfortunately, determining the ground space degeneracy of (finite) systems is generally 
\#\textsf{P}-complete, even for gapped bands~\cite{Brown2011}. 
However, if we restrict to more structured examples such as 1D-local spin 
systems, then ground spaces can in fact be efficiently approximated~\cite{ChubbFlammia2015,Huang2014}.
Our results show that when structure is present in the form of certain mutual commutation relations, one can obtain certifiable bounds on the degeneracy of a ground space by only knowing bounds on these relations. 
We go into more detail about these two applications in \cref{sec:localapplications}. }

The form of non-commutation we will consider will involve \emph{twisted} commutation relations.

\begin{defn}[Twisted commutator]
	For $\alpha\in[0,1)$, the \emph{twisted commutator} is defined as 
	\[\twicomm{X}{Y}:=XY-e^{2i\pi\alpha}YX.\]
	We will refer to $\alpha$ as the \emph{twisting parameter}, and for some 
	unitarily invariant norm $\normu{\cdot}$ we will refer to 
	$\normu{\twicomm{\cdot}{\cdot}}$ as the \emph{twisted commutation value}. 
	When considering a pair of operators in tandem such that each has a small 
	twisted commutation value we will refer to it as a \emph{twisted pair}. 
\end{defn}
We \add{note}\del{remark} that the $\alpha=0$ and $\alpha=1/2$ cases correspond to the commutator and anti-commutator respectively.

Commuting operators exist in all dimensions, finite or infinite. Twisted commuting operators on finite-dimensional spaces, however, only exist in certain dimensions depending on the twisting parameter, e.g.\ no $\alpha\neq 0$ twisted commutator can non-trivially vanish in a one-dimensional space. For general operators, twisted commuting operators were studied in some detail in Ref.~\cite{Yang2004}. If we restrict to unitary operators however, the Stone-von Neumann Theorem\footnote{As usually stated, the Stone-von Neumann theorem is much more general than \cref{thm:svnt}. We will only be concerned with twisted commutation in finite-dimensional spaces, and unconcerned with uniqueness, so this form will suffice for our purposes.}~\cite
{Hall2013,Rosenberg2004} classifies the dimensions in which twisted commutation can occur. \del{In this paper we will generalize this connection into the regime of \emph{approximately} twisted commuting operators.}

\begin{thm}[Finite-dimensional Stone-von Neumann theorem]
	\label{thm:svnt}
	Given $\alpha=p/q$ with $p,q$ coprime, then unitary operators $X$ and $Y$ which exactly twisted commute as
	\[ \twicomm{X}{Y}=0 \]
	only exist in dimensions which are multiples of $q$.
\end{thm}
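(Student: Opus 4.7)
The plan is to pass to determinants, which collapses the twisted commutation relation into a scalar constraint on the dimension. Rewriting the hypothesis $\twicomm{X}{Y}=0$ as $XY = e^{2i\pi\alpha} YX$ and applying $\det$ to both sides, I would use multiplicativity $\det(AB) = \det(A)\det(B)$ and the identity $\det(\lambda M) = \lambda^d \det(M)$ for a $d \times d$ matrix $M$ to obtain
\[
\det(X)\det(Y) \;=\; e^{2i\pi\alpha d}\det(Y)\det(X),
\]
where $d$ is the dimension of the ambient Hilbert space. Since $X$ and $Y$ are unitary their determinants are nonzero (indeed, of unit modulus), so they can be cancelled on both sides, yielding the scalar equation $e^{2i\pi\alpha d} = 1$.

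Substituting $\alpha = p/q$ with $\gcd(p,q)=1$, this becomes $e^{2i\pi p d/q} = 1$, i.e.\ $pd/q \in \mathbb{Z}$, hence $q \mid pd$. Coprimality of $p$ and $q$ then forces $q \mid d$, which is the desired conclusion.

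If a more structural argument is preferred, I would instead observe that the twisted commutation relation is equivalent to $Y X Y^{-1} = e^{-2i\pi\alpha} X$, so $X$ is unitarily similar to $e^{-2i\pi\alpha} X$. Hence the spectrum of $X$ is invariant under multiplication by $e^{-2i\pi p/q}$, which generates a cyclic group of order $q$ in $U(1)$. The eigenvalues of $X$ therefore partition into orbits of size exactly $q$ under this action (orbits of smaller size are impossible since the group acts freely on $U(1)$), and this orbit structure forces $q \mid d$.

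There is no real obstacle to either argument; the only subtlety is the coprimality step, which ensures that we cannot reduce the root-of-unity condition $e^{2i\pi p d/q} = 1$ to a weaker divisibility requirement. The determinant proof is essentially a two-line calculation, so I expect this is the form the authors will present.
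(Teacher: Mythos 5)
Both of your arguments are correct, but your prediction about which one the paper uses is backwards: the authors present (in Section 3.1, as \cref{fig:exact} illustrates) essentially your second, structural argument, not the determinant one. They take an eigenpair $(\lambda,\ket{\psi})$ of $u$, observe that $v\ket{\psi}$ is a $\lambda e^{2i\pi\alpha}$-eigenvector, conclude that $v$ gives an isomorphism between the $\lambda$- and $\lambda e^{2i\pi\alpha}$-eigenspaces (so multiplicities are constant along each orbit --- the one point your sketch leaves implicit when you say the "orbit structure forces $q\mid d$"), and then apply the divisibility argument for $\alpha=p/q$ coprime. The reason for this choice is that the eigenspace/orbit argument is the one that survives perturbation: the rest of \cref{sec:collection} (\cref{lem:t-col:change1,lem:t-col:eigen1} and \cref{thm:single}) is precisely a quantitative version of "the orbit of an eigenvector under $v$ picks out $q$ well-separated eigenvalues," now with error terms. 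Your determinant proof is a perfectly valid and slicker route to the exact statement --- cancelling $\det(X)\det(Y)$ gives $e^{2i\pi\alpha d}=1$ and coprimality finishes it --- but it is brittle: a small twisted commutation value gives no useful control on determinants, so it offers no foothold for the approximate generalization that is the paper's actual goal.
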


\add{In this paper we will generalize this connection into the regime of \emph{approximately} twisted-commuting operators. Properties of both approximate commuting, and approximately twisted-commuting operators are reviewed in Ref.~\cite{Said2014}. The rigidity of algebraic structures to small perturbations in the commutation relations that define them has been studied in several other settings, such as the soft torus~\cite{softtorus,softtorus2} and approximate representations of groups~\cite{Babai,Friedl,Moore2010}.}

Suppose we have a physical system with a \add{self-adjoint }Hamiltonian $H$\add{, 
acting on a possibly infinite-dimensional Hilbert space}. Let $\Pi$ be the 
orthogonal projector onto the \add{finite-dimensional }ground space, and 
$\Pic:=I-\Pi$. For simplicity, take the ground state energy of $H$ to be zero, 
such that $\Pi H= 0$. As well as this, we will assume that the excited states 
are gapped away from the ground space, such that they all have an energy at 
least $\Delta$, i.e. $H\geq \Delta \Pic$. For such a system there exist two 
notions of symmetry we will discuss.

\begin{defn}[Symmetry]
	\label{defn:symmetry}
	\del{Let $U$ be a unitary operator. We will refer to it as a \emph{ground symmetry} if it commutes with the ground space projector}
	\add{We define a \emph{ground symmetry} as an operator $U$ that commutes 
	with the ground space projector}
	\[ \comm{U}{\Pi}=0, \]
	\add{and acts unitarily on the ground space $\Pi U^\dag U\Pi=\Pi UU^\dag \Pi=\Pi$. Moreover, we refer to a unitary} 
	\del{and }as an \emph{$\epsilon$-approximate symmetry} if it approximately commutes with the Hamiltonian with respect to a given unitarily invariant norm
	\[ \normu{\comm{U}{H}}\leq \epsilon. \]
\end{defn}

\add{Here we use $\normu{\cdot}$ to denote any unitarily invariant norm. }

The error thresholds we are going to consider will depend on the spectral gap $\Delta$ of the system in question. One way to improve the scaling with the gap would be to consider symmetries defined not by commutation with the Hamiltonian, but by commutation with functions of the Hamiltonian. For example we could consider commutation with an (unnormalized) Gibbs state
\[ \normu{\comm{U}{e^{-\beta H}}}\leq \epsilon. \]
Such a symmetry can be seen to be an $\epsilon$-approximate symmetry of $H'=I-e^{-\beta H}$, which shares a ground space with $H$ and has a gap of $1-e^{-\beta \Delta}$. If we have some control over the temperature, such as in Monte Carlo simulations, then this gives a tradeoff we can use to improve the gap scaling. If for example we set $\beta = \ln (2)/\Delta$, then we get a fixed gap of $1/2$. A similar analysis could be performed with any function of $H$ which leaves the relevant band gapped.

\subsection{Results}
\label{subec:results}

\add{The main goal of this paper will be to establish a connection between twisted commuting symmetries and the ground space dimension, even when the relevant commutation relations are only approximate. A key feature of our bounds is that they can be expressed entirely in terms of the Hamiltonian, and do not require objects such as the ground space projector, which can often be prohibitively difficult to calculate, represent, or perform calculations with. Without access to the ground space projector, whether or not a unitary is a ground symmetry cannot be directly verified.} 
	
In \cref{sec:inexact} we will explore the relationship between approximate and ground symmetries, showing that an approximate symmetry is always near a ground symmetry. Extending this to the case of multiple symmetries, we will see that approximate symmetries can be restricted to the ground space with low distortion, implying the existence of \add{unitaries}\del{operators} on the ground space with certain twisted commutation relations. 
\add{In showing these results, we will make repeated use of the following function and note some simple bounds on it,
\begin{align*}
	f:[0,1] \to [0,1]\,, \qquad f(x) := 1-\sqrt{1-x}\,, \qquad \tfrac{x}{2} \le f(x) \le x \,. 
\end{align*}
Then our first main result is the following.}

\begin{thm}[Restriction to the ground space]
	\label{thm:restriction}
	For two $\epsilon$-approximate symmetries $U$ and $V$ which approximately twisted commute
	\[ \normu{\twicomm{U}{V}}\leq \delta, \]
	\add{then if $\xi:=\epsilon/\Delta<1$ }there exists unitaries $u$ and $v$ acting on the ground space which also approximately twisted commute as
	\add{\begin{align*}
		\normu{\twicomm{u}{v}}&\leq \delta+2\xi^2 +4f(\xi^2)\,.
	\end{align*}}
\end{thm}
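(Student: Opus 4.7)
The plan will be to define $u$ and $v$ as the unitary parts of the polar decompositions of $A_U := \Pi U \Pi$ and $A_V := \Pi V \Pi$ on the ground space, then pass from $\twicomm{u}{v}$ to $\twicomm{U}{V}$ through the chain $\twicomm{u}{v} \approx \twicomm{A_U}{A_V} \approx \Pi \twicomm{U}{V} \Pi$. The enabling observation is that any approximate symmetry must be nearly block-diagonal with respect to the ground/excited splitting, so each step in the chain incurs only corrections controlled by $\xi$.

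First I bound the off-diagonal blocks. Since $\Pi H = 0$ and $H \geq \Delta \Pic$, the Hamiltonian admits a pseudoinverse $H^+$ supported on the excited subspace with $\|H^+\| \leq 1/\Delta$. Using $\Pi H = 0$ one has $\Pi U \Pic = \Pi U \Pic H H^+ = \Pi U H \cdot H^+ = \Pi \comm{U}{H} H^+$, so submultiplicativity of any unitarily invariant norm gives $\normu{\Pi U \Pic} \leq \normu{\comm{U}{H}}\,\|H^+\| \leq \xi$, and taking adjoints $\normu{\Pic U \Pi} \leq \xi$; the same bounds hold for $V$.

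Next, observe that $A_U^\dag A_U = \Pi - K_U$ where $K_U := (\Pic U \Pi)^\dag (\Pic U \Pi)$, so both $\|K_U\|$ and $\normu{K_U}$ are at most $\xi^2$ by the product bound. Since $\xi < 1$, $A_U^\dag A_U \geq (1-\xi^2)\Pi$ is invertible on the ground space, and polar decomposition $A_U = u (A_U^\dag A_U)^{1/2}$ defines a unitary $u$ on that space. Exploiting the algebraic identity
\[ \Pi - \sqrt{\Pi - K_U} \;=\; K_U \bigl(\Pi + \sqrt{\Pi - K_U}\bigr)^{-1} \]
together with the operator-norm bound $\bigl\|(\Pi + \sqrt{\Pi - K_U})^{-1}\bigr\| \leq \bigl(1+\sqrt{1-\xi^2}\bigr)^{-1}$, submultiplicativity yields
\[ \normu{u - A_U} \;\leq\; \frac{\xi^2}{1+\sqrt{1-\xi^2}} \;=\; f(\xi^2), \]
using the elementary identity $f(x)\bigl(1+\sqrt{1-x}\bigr) = x$. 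An analogous estimate holds for $v$.

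To close, substitute $u = A_U + (u-A_U)$ and $v = A_V + (v-A_V)$ into $\twicomm{u}{v}$ and collect the linear correction terms to obtain $\normu{\twicomm{u}{v} - \twicomm{A_U}{A_V}} \leq 2\normu{u-A_U} + 2\normu{v-A_V} \leq 4f(\xi^2)$. For the second bridge, expanding $A_U A_V = \Pi U V \Pi - \Pi U \Pic V \Pi$ and the mirrored expression for $A_V A_U$ gives
\[ \twicomm{A_U}{A_V} \;=\; \Pi \twicomm{U}{V} \Pi \;-\; \Pi U \Pic V \Pi \;+\; e^{2i\pi\alpha} \Pi V \Pic U \Pi, \]
with each cross-term bounded in the unitarily invariant norm by $\|\Pi U \Pic\|\,\normu{\Pic V \Pi} \leq \xi^2$. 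Combining with the hypothesis $\normu{\twicomm{U}{V}} \leq \delta$ then yields the claimed inequality. The main technical hurdle is the polar-decomposition estimate: a naive bound would give only $\normu{u-A_U} \leq \xi^2$, weakening the final result from $\delta + 2\xi^2 + 4f(\xi^2)$ to $\delta + 6\xi^2$, so the resolvent rewrite is what sharpens the constants.
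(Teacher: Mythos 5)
Your proposal is correct and follows essentially the same route as the paper's proof: define the ground-space unitaries as the unitary (polar) parts of $\Pi U\Pi$ and $\Pi V\Pi$, bound the off-diagonal blocks by $\xi=\epsilon/\Delta$, bound the unitarization correction by $f(\xi^2)$, and assemble the twisted commutator block-wise to get $\delta+2\xi^2+4f(\xi^2)$. The only differences are at the level of the two auxiliary estimates — the paper bounds the off-diagonal blocks via a majorization/Weyl argument using $H^2\geq\Delta^2\Pic$ rather than your pseudoinverse $H^{+}$ with $\|H^{+}\|\leq 1/\Delta$, and obtains $\normu{\Pi-\abs{\Pi U\Pi}}\leq f(\xi^2)$ via a positive-coefficient power-series (Jensen-type) inequality rather than your resolvent identity $1-\sqrt{1-x}=x/\bigl(1+\sqrt{1-x}\bigr)$ — and both variants yield identical bounds.
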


\add{Rather importantly, we note that the above theorem holds independent of the ground space dimension. This will allow us to use approximate symmetries alone as witnesses of ground space degeneracy, circumventing the need for direct access to the ground space, which is often inaccessible in non-exactly solvable models.}

Note that for simplicity we will henceforth take the band in consideration to 
be an exactly degenerate ground space. We will see however that our proof will 
rely not on the bound $H\geq \Delta\Pic$, but on its relaxation $H^2\geq 
\Delta^2\Pic$, meaning that the band could be anywhere in the spectrum, so long 
as it is gapped on both sides by at least $\Delta$. Furthermore we can take 
$w:=\normu{H\Pi} \ge 0$ when our band has a potentially non-zero width. By 
considering the new Hamiltonian $H':=H-H\Pi$, we get that our restricted result 
holds for more general bands once the necessary changes have been made.

\begin{corr}[Restriction to a general band]
	\label{corr:band}
	If there are two $\epsilon$-approximate symmetries $U$ and $V$ which approximately twisted commute
	\[ \normu{\twicomm{U}{V}}\leq \delta, \]
	\add{then if $\xi':=(\epsilon+w)/\Delta<1$ }there exists unitaries $u$ and $v$ acting on band of gap $\Delta$ and width $w$ which also approximately twisted commute
	\add{\begin{align*}
		\normu{\twicomm{u}{v}}&\leq \delta+2\xi'^2 +4f(\xi'^2)\,.
	\end{align*}}
\end{corr}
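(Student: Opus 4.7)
The plan is to reduce the band case directly to Theorem~\ref{thm:restriction} by replacing $H$ with the shifted Hamiltonian $H':=H-H\Pi$. Since $\Pi$ is a spectral projector of $H$ we have $[H,\Pi]=0$, so $H'=\Pic H\Pic$; this kills the band ($H'\Pi=0$) and leaves the excited spectrum intact, in the sense that $(H')^2 = H^2\Pic \ge \Delta^2\Pic$. This relaxed-gap condition is exactly the structural input needed by the proof of Theorem~\ref{thm:restriction}, so $H'$ behaves like a gapped Hamiltonian with zero-energy ground space $\Pi$ and gap $\Delta$.

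The main subtlety lies in how the approximate-symmetry parameter transforms. A naive triangle-inequality bound $\normu{[U,H']} \le \normu{[U,H]} + \normu{[U,H\Pi]} \le \epsilon + 2w$ would only give $(\epsilon+2w)/\Delta$, losing a factor of two compared to the claimed $\xi'=(\epsilon+w)/\Delta$. To recover the sharper constant I would not invoke Theorem~\ref{thm:restriction} as a black box, but rather follow its proof and observe that the Hamiltonian enters only through a bound on the off-diagonal block $\normu{\Pic U\Pi}$. Using $[H,\Pi]=0$ one computes
\[
\Pic[U,H]\Pi \;=\; \Pic U H\Pi - H\Pic U\Pi,
\]
from which the triangle inequality together with $\normu{H\Pi}\le w$ yields $\normu{H\Pic U\Pi} \le \epsilon + w$. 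Inverting $H$ against $\Pic$ via the relaxed gap then gives the sharpened off-diagonal bound $\normu{\Pic U\Pi}\le \xi'$, and identically for $V$.

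With these off-diagonal bounds rescaled from $\xi$ to $\xi'$, the remainder of the proof of Theorem~\ref{thm:restriction} --- constructing the ground-space restrictions $u$ and $v$, controlling the distortion from projecting and unitarizing via the function $f$, and propagating the hypothesis $\normu{\twicomm{U}{V}}\le \delta$ to the restricted operators --- applies verbatim. Consequently the main obstacle is not an analytic one but a bookkeeping check that every appearance of the approximate-symmetry parameter in the original proof factors through $\normu{\Pic U\Pi}$ and $\normu{\Pic V\Pi}$, rather than through $\normu{[U,H]}$ or $\normu{[V,H]}$ directly. Once this is verified, the substitution $\xi \mapsto \xi'$ throughout the bound $\delta + 2\xi^2 + 4f(\xi^2)$ of Theorem~\ref{thm:restriction} produces the stated $\delta + 2\xi'^2 + 4f(\xi'^2)$.
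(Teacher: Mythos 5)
Your proposal is correct and follows the same route the paper itself indicates for \cref{corr:band}: pass to $H'=H-H\Pi$ (equivalently, use only the relaxed condition $H^2\Pic\geq\Delta^2\Pic$) and rerun the ground-space argument; since the paper's own justification is just the remark preceding the corollary, the ``necessary changes'' are exactly what you have to supply, and your refinement is the right one. Invoking \cref{thm:restriction} as a black box through $\normu{\comm{U}{H'}}\leq\epsilon+2w$ does lose a factor of two on $w$, and even rerunning the combined-block version of \cref{lem:off-diag} with $H'$ only yields $\epsilon+2w$ for a general unitarily invariant norm, because the two off-diagonal corners can contribute additively; bounding each corner separately, via $\Pic\comm{U}{H}\Pi=\Pic UH\Pi-H\Pic U\Pi$, $\normu{\Pic UH\Pi}\leq w$, and the gap inversion $\normu{H\Pic U\Pi}\geq\Delta\normu{\Pic U\Pi}$, is what actually secures $\normu{\Pic U\Pi}\leq\xi'$. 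Your bookkeeping claim is then essentially right: the approximate-unitarity estimate and the $4f(\xi'^2)$ term use only $\normu{\Pic U\Pi}$ together with unitarity of $U$, the $2\xi'^2$ cross terms use only individual off-diagonal blocks, and the full $\normu{U-\tilde U}$ bound of \cref{lem:as=n-les} is never consumed in the proof of \cref{thm:restriction}. The one item to add is that the proof also uses $\normu{\Pi U\Pic}$ and $\normu{\Pi V\Pic}$ individually, not just $\normu{\Pic U\Pi}$ and $\normu{\Pic V\Pi}$; the mirrored identity $\Pi\comm{U}{H}\Pic=(\Pi U\Pic)H-(H\Pi)(\Pi U\Pic)$ with the same inversion against the gap gives $\normu{\Pi U\Pic}\leq\xi'$, after which the substitution $\xi\mapsto\xi'$ in the bound of \cref{thm:restriction} goes through exactly as you claim.
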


Now that we have restricted our symmetries down to the ground space, by studying the relationship between dimensionality and approximate twisted commutation, we can hope to use these twisted symmetries as witnesses of ground space degeneracy. As above, we will henceforth adhere to the convention of upper case letters denoting operators which act on the system as a whole, and lower case operators which only act on the ground space.


In \cref{sec:collection} we start by giving a proof of \cref{thm:svnt}, and consider generalizing this argument to the case of \emph{approximately} twisted commuting operators.
We consider a twisted pair of unitaries, and construct states which can be used to lower bound the number of eigenvalues these operators possess. By doing so we will show that if these operators have a sufficiently small twisted commutation value in the operator norm, then a lower bound on their degeneracy can be inferred.

\begin{thm}
	\label{thm:single}
	If $u$ and $v$ are unitaries such that for some $d\in\mathbb{N}$
	\[ \norm{\comm{u}{v}_{1/d}}<\frac{2}{d-1}\Bigl[1-\cos\pi/d\Bigr],  \]
	then the dimension of each operator is at least $d$.
\end{thm}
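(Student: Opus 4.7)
The plan is to generalise the standard proof of the exact Stone--von Neumann theorem (\cref{thm:svnt}) by tracking errors. In the exact case one picks an eigenvector $\ket{\psi}$ of $u$ with eigenvalue $\lambda$ (unit-modulus, since $u$ is unitary) and observes via $uv = e^{2i\pi/d}vu$ that $v^k\ket{\psi}$ is an eigenvector of $u$ with eigenvalue $e^{2i\pi k/d}\lambda$. For $k=0,1,\ldots,d-1$ these eigenvalues are distinct, so the corresponding eigenvectors are orthogonal, forcing the dimension to be at least $d$. In the approximate setting, the same construction produces only approximate eigenvectors, and the task reduces to showing that these $d$ vectors are still linearly independent under the stated smallness hypothesis.

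Concretely, set $\ket{\phi_k} := v^k\ket{\psi}$ (unit vectors by unitarity of $v$) and $E := \comm{u}{v}_{1/d}$, so that $uv = e^{2i\pi/d}vu + E$. A telescoping expansion gives
\[
uv^k \;=\; e^{2i\pi k/d}v^k u \;+\; \sum_{j=0}^{k-1}e^{2i\pi j/d}\,v^j\,E\,v^{k-1-j}\,,
\]
and applying this to $\ket{\psi}$ together with unitarity and the triangle inequality yields
\[
\bigl\|u\ket{\phi_k} - e^{2i\pi k/d}\lambda\ket{\phi_k}\bigr\| \;\leq\; k\,\norm{E}\,,
\]
so that $\ket{\phi_k}$ is a $(k\norm{E})$-approximate eigenvector of $u$ with target eigenvalue $e^{2i\pi k/d}\lambda$, the worst case being $k=d-1$.

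Next, I would convert these approximate eigenvalue equations into a genuine linear-independence statement via a spectral-concentration argument. Writing a unit vector $\ket{\phi}$ in the eigenbasis of $u$ and using the identity $\norm{u\ket{\phi}-\mu\ket{\phi}}^2 = 2(1-\Re\bar\mu\braopket{\phi}{u}{\phi})$, the contribution of any eigenvalue of $u$ outside the arc of angular radius $\pi/d$ centred at $\mu$ to $\Re(1-\bar\mu\braopket{\phi}{u}{\phi})$ is bounded below by $1-\cos(\pi/d)$, yielding
\[
\norm{(I-P_k)\ket{\phi_k}}^2 \;\leq\; \frac{(k\norm{E})^2}{2(1-\cos(\pi/d))}\,,
\]
where $P_k$ is the spectral projector of $u$ onto the arc of angular radius $\pi/d$ about $e^{2i\pi k/d}\lambda$. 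The $d$ arcs are pairwise disjoint (centres separated by angular distance $2\pi/d$, radius $\pi/d$), so the $P_k$ project onto mutually orthogonal subspaces. Under the hypothesis on $\norm{E}$, each $\norm{P_k\ket{\phi_k}}$ is strictly positive, producing $d$ nonzero pairwise orthogonal vectors $\{P_k\ket{\phi_k}\}_{k=0}^{d-1}$ and hence forcing the ambient dimension to be at least $d$. The claim for $v$ follows symmetrically, since swapping $u \leftrightarrow v$ sends the twisting parameter $1/d$ to $(d-1)/d$, whose denominator in lowest terms is still $d$.

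The main obstacle is aligning the quantitative spectral-concentration bound with the precise threshold $\frac{2(1-\cos(\pi/d))}{d-1}$ of the statement. The naive Chebyshev-type bound above requires only $(d-1)^2\norm{E}^2 < 2(1-\cos(\pi/d))$, which is stronger for large $d$ but weaker for small $d$ than the theorem's linear-in-$\norm{E}$ condition; a careful tightening --- likely exploiting the linear Cauchy--Schwarz relation $|\braopket{\phi}{u}{\phi}-\mu|\leq \norm{u\ket{\phi}-\mu\ket{\phi}}$ in tandem with the convex-hull bound $\Re(\bar\mu\lambda_j)\leq \cos(\pi/d)$ for eigenvalues outside the arc --- is needed to match the stated constant exactly.
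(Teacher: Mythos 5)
Your construction is essentially the paper's: take an exact eigenvector $\ket{\psi}$ of $u$, act with powers of $v$, show the eigenvalue error grows linearly in the power, and then separate the resulting points on the unit circle to exhibit $d$ distinct eigenvalues (equivalently, nonzero components in mutually orthogonal spectral subspaces). Your telescoping bound $\norm{u\ket{\phi_k}-e^{2i\pi k/d}\lambda\ket{\phi_k}}\leq k\norm{E}$ and the concentration estimate $\norm{(I-P_k)\ket{\phi_k}}^2\leq (k\norm{E})^2/\bigl(2(1-\cos(\pi/d))\bigr)$ are both correct, and in fact your Chebyshev-type version already proves the theorem for every $d\geq 3$, because for $d\geq3$ one has $2(1-\cos(\pi/d))\leq 1$, so the stated threshold $\tfrac{2}{d-1}(1-\cos(\pi/d))$ is below your requirement $\tfrac{1}{d-1}\sqrt{2(1-\cos(\pi/d))}$. (Minor point: adjacent closed arcs of radius exactly $\pi/d$ share endpoints, so take them half-open to keep the $P_k$ orthogonal.)

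The genuine gap is the constant, and the patch you propose does not close it. For $d=2$ the theorem allows $\norm{E}<2$ while your condition is $\norm{E}<\sqrt{2}$; worse, replacing Chebyshev by the linear Cauchy--Schwarz bound while keeping the one-sided orbit $k=0,\dots,d-1$ and fixed arcs of radius $\pi/d$ requires $(d-1)\norm{E}<1-\cos(\pi/d)$, i.e.\ only \emph{half} the stated threshold, for every $d$. The two ingredients you are missing, which are exactly what the paper uses, are: (i) take the orbit symmetrically, $\ket{j}:=v^{j}\ket{\psi}$ for $j=-\lfloor\tfrac{d-1}{2}\rfloor,\dots,\lceil\tfrac{d-1}{2}\rceil$, so the worst deviation is $\lceil\tfrac{d-1}{2}\rceil\norm{E}$ rather than $(d-1)\norm{E}$; and (ii) instead of fixed arcs, localize the eigenvalue near $e^{2i\pi j/d}$ to within angle $\cos^{-1}(1-\abs{j}\delta)$ and argue distinctness \emph{pairwise}: since $\abs{j}+\abs{k}\leq d-1$ for any pair, concavity of $\cos^{-1}$ reduces separation to $\cos^{-1}\bigl(1-\tfrac{d-1}{2}\delta\bigr)<\pi/d$, which rearranges to exactly $\delta<\tfrac{2}{d-1}\bigl[1-\cos(\pi/d)\bigr]$. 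In particular for $d=2$ the exact $+1$ eigenvalue anchors one point with zero deviation, so the arc around $-1$ may grow to radius nearly $\pi$, which is how the threshold reaches $2$; no argument with a uniform arc radius of $\pi/d$ can recover this.
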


While we do not have a closed form bound on the twisted commutation value required to certify other dimensions ($d\neq 1/\alpha$), in \cref{app:lowerbound} we discuss an algorithm to determine which degeneracies are certified by twisted pairs of given parameters. Using this we will plot the dimension that can be certified as a function of both the twisting parameter and the corresponding twisted commutation value.

In \cref{app:approxeigen} we strengthen existing results on shared approximate eigenvectors for approximately commuting operators when a normality condition is introduced, exponentially improving the dimension dependence of the bounds relative to known results~\cite{Bernstein1971}. Using this, in \cref{subsec:double} we consider extending this procedure to the case of two pairs of twisted commuting unitaries. Here we will once again construct a set of ground states, showing that for sufficient parameters that they are linearly independent. Using this we can obtain a similar dimensionality lower bound.

\begin{thm}
	\label{thm:double}If $u_1$, $u_2$, $v_1$ and $v_2$ are unitaries such that 
	they satisfy the commutation relations
	\[ \norm{\comm{u_1}{u_2}}\leq\gamma \qquad \norm{\comm{u_1}{v_2}}\leq\delta \qquad \norm{\comm{u_2}{v_1}}\leq\delta\]
	and twisted commutation relations
	\[ \norm{\comm{u_1}{v_1}_{1/d_1}}\leq\delta \qquad \norm{\comm{u_2}{v_2}_{1/d_2}}\leq\delta \]
	with $d_1\leq d_2$ and
	\[ \sqrt{\gamma}d_1d_2+(d_1+d_2)\delta< \frac{\sin^2(\pi/2d_1)}{(d_1d_2-1)^2}, \]
	then the dimension of each operator is at least $d_1d_2$.
\end{thm}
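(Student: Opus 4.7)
My plan is to mimic the single-pair argument of \cref{thm:single} in a bipartite fashion: starting from a shared approximate eigenvector $\ket{\psi}$ of $u_1$ and $u_2$, generate $d_1d_2$ candidate vectors $\ket{\psi_{j,k}} := v_1^j v_2^k \ket{\psi}$ and show they are linearly independent. Linear independence of $d_1d_2$ vectors immediately forces the ambient Hilbert space dimension to be at least $d_1d_2$.

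The first step is to produce the shared approximate eigenvector. Since $u_1, u_2$ are unitary (hence normal) and $\norm{\comm{u_1}{u_2}}\leq\gamma$, I would invoke the improved joint-approximate-eigenvector bound of \cref{app:approxeigen} to extract a unit vector $\ket{\psi}$ and unimodular scalars $\lambda_1,\lambda_2$ for which $\norm{(u_i-\lambda_i)\ket{\psi}}$ is of order $\sqrt{\gamma}$, with at most a mild dimension dependence that produces the $\sqrt{\gamma}\,d_1d_2$ term in the hypothesis. The second step is to verify that each $\ket{\psi_{j,k}}$ is an approximate joint eigenvector of $(u_1,u_2)$ with eigenvalues $(e^{2\pi i j/d_1}\lambda_1,\, e^{2\pi i k/d_2}\lambda_2)$. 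This is a telescoping calculation: moving $u_1$ past $v_1^j$ picks up the desired phase $e^{2\pi i j/d_1}$ plus $j$ residual terms of operator-norm at most $\delta$ from the twisted commutator $\comm{u_1}{v_1}_{1/d_1}$, while moving $u_1$ past $v_2^k$ picks up $k$ residual terms of norm at most $\delta$ from $\comm{u_1}{v_2}$; the analysis for $u_2$ is symmetric, using $\comm{u_2}{v_1}$ and $\comm{u_2}{v_2}_{1/d_2}$. Hence uniformly in $(j,k)$ one obtains $\norm{(u_i-\lambda^{(i)}_{j,k})\ket{\psi_{j,k}}}\leq \eta$ with $\eta=(d_1+d_2)\delta + O(\sqrt{\gamma}\,d_1d_2)$.

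The third step converts approximate eigenvalue separation into linear independence via a Gram-matrix argument. Whenever $(j,k)\neq(j',k')$, at least one of $\lambda^{(1)}_{j,k}\neq\lambda^{(1)}_{j',k'}$ or $\lambda^{(2)}_{j,k}\neq\lambda^{(2)}_{j',k'}$ holds, with separation at least $2\sin(\pi/d_1)$ or $2\sin(\pi/d_2)$ respectively; since $d_1\leq d_2$, the worst-case gap is $2\sin(\pi/d_2)$, and the final bound is most naturally expressed in terms of $\sin(\pi/2d_1)$. A standard lemma---that unit vectors which are $\eta$-approximate eigenvectors of a unitary with distinct eigenvalues $\alpha\neq\alpha'$ have overlap bounded by $2\eta/|\alpha-\alpha'|$---then bounds every off-diagonal entry of the Gram matrix of $\{\ket{\psi_{j,k}}\}$ by $O(\eta/\sin(\pi/2d_1))$. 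A Ger\v sgorin-type row-sum bound on this Gram matrix, with $d_1d_2-1$ off-diagonal entries per row, yields invertibility exactly when the hypothesis $\sqrt{\gamma}\,d_1d_2+(d_1+d_2)\delta<\sin^2(\pi/2d_1)/(d_1d_2-1)^2$ is met (the squaring appearing because both the overlap-to-gap ratio and the row-sum-to-$1$ bound each contribute a factor of $\sin(\pi/2d_1)/(d_1d_2-1)$). A nonsingular $d_1d_2\times d_1d_2$ Gram matrix certifies linear independence of its generating vectors, and the conclusion follows.

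The main obstacle is the first step: the improved bound of \cref{app:approxeigen} must be tight enough in both its $\gamma$ and dimension dependencies to supply the $\sqrt{\gamma}\,d_1d_2$ term, since the naive Bernstein-style bound~\cite{Bernstein1971} is exponentially weaker in the dimension and would be unusable at the resolution needed here. Everything else is comparatively routine: the telescoping commutators in the second step give rise to the $(d_1+d_2)\delta$ term, and the Gram-matrix estimates in the third step produce the $\sin^2(\pi/2d_1)/(d_1d_2-1)^2$ threshold.
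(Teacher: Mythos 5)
Your outline has the same architecture as the paper's proof: take a shared approximate eigenvector of $u_1,u_2$ from \cref{appthm:fixed} (as in \cref{lem:t-col:startvec2} --- note you must first assume, towards a contradiction, that the dimension is at most $d_1d_2$ so that the factor $n$ in the appendix bound can be replaced by $d_1d_2$), act with $v_1^jv_2^k$ and telescope the (twisted) commutators, bound pairwise overlaps via eigenvalue separation, and conclude linear independence from a diagonally-dominant Gram matrix, which is exactly \cref{lem:t-col:threshold2}. Where you genuinely diverge is the overlap step. The paper only propagates \emph{expectation values} (\cref{lem:t-col:change2}) and converts them to overlaps with the bespoke \cref{lem:t-col:lowoverlap2}, whose bound scales like $\sqrt{2\zeta}\,\abs{\csc((\theta_y-\theta_x)/4)}$; you instead propagate vector-norm residuals $\norm{(u_i-\lambda^{(i)}_{j,k})\ket{\psi_{j,k}}}\le\eta$ (legitimate, since the appendix already gives a vector-norm bound and the $v$'s are unitary, so conjugation does not degrade it) and use the standard estimate $\abs{\braket{x}{y}}\le 2\eta/\abs{\alpha-\alpha'}$ for a unitary with well-separated approximate eigenvalues. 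Your route is correct and in fact quantitatively sharper: it yields a sufficient condition that is \emph{linear} in the errors, roughly $\sqrt{\gamma}\,d_1d_2+(d_1+d_2)\delta\lesssim \sin(\pi/d_2)/(d_1d_2-1)$, whereas the paper's expectation-value route is what forces the quadratic threshold in the statement.

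The one real flaw is your accounting of the constants. With your overlap lemma no square appears, so your explanation that the threshold $\sin^2(\pi/2d_1)/(d_1d_2-1)^2$ arises ``once from the overlap-to-gap ratio and once from the row sum'' is not correct for your own argument; the square in the paper comes from the $\sqrt{\zeta}$ scaling of \cref{lem:t-col:lowoverlap2}. This does not sink the proof: for all $1\le d_1\le d_2$ (excluding the trivial case $d_1d_2=1$) one checks $\sin^2(\pi/2d_1)/(d_1d_2-1)^2\le \sin(\pi/d_2)/(d_1d_2-1)$, so the stated hypothesis implies your linear condition and the theorem follows --- but you should carry out that comparison explicitly rather than assert the constants match. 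Relatedly, be precise that the worst-case eigenvalue gap is $2\sin(\pi/d_2)$ (pairs differing only in the $v_2$ exponent), and use the normal-normal case of \cref{appthm:fixed} so that the approximate eigenvalue of $u_2$ is an exact, unimodular eigenvalue; otherwise the separation of the scaled roots of unity degrades.
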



In \cref{sec:minimum} we provide a more comprehensive analysis for the case of a single twisted pair. Leveraging results from spectral perturbation theory, we find an explicit closed form for the minimum twisted commutation value\add{ for a class of norms known as  the $(p,k)$-Schatten-Ky Fan norms.  These are defined as the $p$-norm of the largest $k$ singular values, or more formally as 
\begin{align*}
	\normpk{X}:=\sup_A\left\lbrace(\Tr\abs{AX}^p)^{1/p}\,\middle|\,\norm{A}\leq1,\,\mathrm{rank}(A)\leq k\right\rbrace\,.	
\end{align*}
For a $g$-dimensional operator, the special case $k=g$ reduces to the Schatten $p$-norm, and the case $p=1$ reduces to the Ky Fan $k$-norm. 
In particular, the $p=\infty$ and $(p,k)=(2,g)$ special cases reduce to the operator and Frobenius norms respectively.}

\begin{thm}[Minimum twisted commutation value]
	\label{thm:mintcv}
	Suppose that $u$ and $v$ are $g$-dimensional unitaries, \add{then for any $p\geq2$ the twisted commutator is lower bounded 
	\[ \Bigl\|\twicomm{u}{v}\Bigr\|_{(p,k)}\geq 2k^{1/p}\sin\left(\pi \abs{\frac{\round{g\alpha}-g\alpha}{g}}\right), \]
	where $\normpk{\cdot}$ is the $(p,k)$-Schatten-Ky Fan norm.} 
	\del{then for any $p\geq 2$ we have
	\[ \Bigl\|\twicomm{u}{v}\Bigr\|_{p}\geq 2g^{1/p}\sin\left(\pi \abs{\frac{\round{g\alpha}-g\alpha}{g}}\right). \]}
	Moreover this bound is tight, in that sense that there exist \add{families of }$g$-dimensional unitaries which saturate the above bound\add{s} and only depend on $\round{g\alpha}$, the nearest integer to $g\alpha$.
\end{thm}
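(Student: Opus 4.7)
My plan is a chain of reductions that culminates in an extremal problem about traces on the unitary group. First, write $W := uvu^{-1}v^{-1}$ and $\omega := e^{2\pi i\alpha}$. The twisted commutator factors as $\twicomm{u}{v} = uv - \omega vu = (W - \omega I)\,vu$, so unitary invariance of $\normpk{\cdot}$ gives $\normpk{\twicomm{u}{v}} = \normpk{W - \omega I}$. As $W$ is a multiplicative commutator of unitaries, $\det W = 1$; equivalently, the eigenvalues $e^{i\theta_j}$ of the normal operator $W$ satisfy $\sum_j\theta_j\in 2\pi\mathbb{Z}$, and the singular values of $W - \omega I$ are exactly $|e^{i\theta_j}-\omega| = 2|\sin((\theta_j-2\pi\alpha)/2)|$.

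Next, I reduce to the Frobenius case via two monotonicities. For any decreasing nonnegative sequence $\sigma_1\geq\dots\geq\sigma_g$, the top-$k$ average dominates the overall average, so $\sum_{j=1}^k\sigma_j^p \geq (k/g)\sum_{j=1}^g\sigma_j^p$. Combined with the power-mean inequality $(\frac{1}{g}\sum_j\sigma_j^p)^{1/p}\geq(\frac{1}{g}\sum_j\sigma_j^2)^{1/2}$ (valid for $p\geq 2$), it suffices to prove the Frobenius bound $\normF{W-\omega I}^2 \geq g\bigl(2\sin(\pi|r|/g)\bigr)^2$ with $r := g\alpha - \round{g\alpha}$. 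Expanding $\normF{W-\omega I}^2 = 2g - 2\Re(\bar\omega\Tr W)$ and writing $W' := \bar\omega W$ (a unitary with $\det W' = e^{-2\pi ir}$), this reduces to the trace inequality $\Re\Tr(W') \leq g\cos(2\pi r/g)$.

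The key lemma to prove is: for any $V \in U(g)$ with $\det V = e^{i\gamma}$ and $|\gamma|\leq\pi$, one has $\Re\Tr(V) \leq g\cos(\gamma/g)$, with equality at $V = e^{i\gamma/g} I$. I would prove this by Lagrange multipliers on the compact manifold $\{V\in U(g):\det V=e^{i\gamma}\}$. Parametrizing the eigenvalues as $e^{i\theta_j}$ with $\sum_j\theta_j\equiv\gamma\pmod{2\pi}$, stationarity of $\sum_j\cos\theta_j$ forces $\sin\theta_j$ to be constant in $j$, so at a critical point the $\theta_j$ take at most two values differing by $\pi$. The fully-symmetric case $\theta_j\equiv(\gamma+2\pi m)/g$ yields $g\cos((\gamma+2\pi m)/g)$, maximized over $m\in\mathbb{Z}$ at $m=0$ (by $|\gamma|\leq\pi$) with value $g\cos(\gamma/g)$. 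The mixed case, with $a$ and $g-a$ copies of two values differing by $\pi$, yields $(2a-g)\cos\theta_0$ of absolute value at most $g-2$; the estimate $\cos x\geq 1-x^2/2$ then gives $g\cos(\gamma/g)\geq g-\pi^2/(2g)>g-2$ for $g\geq 3$, and the cases $g\leq 2$ are checked directly.

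This key lemma is the main obstacle: the reductions preceding it are routine bookkeeping, but ruling out non-symmetric critical points requires the numerical estimate $g\cos(\pi/g)>g-2$. For tightness of the overall bound, the Heisenberg-Weyl pair $u = X$ (the cyclic shift) and $v = Z^{-\round{g\alpha}}$ (a power of the clock matrix) satisfies $vu = e^{-2\pi i\round{g\alpha}/g}\,uv$, giving $\twicomm{u}{v} = (1 - e^{2\pi ir/g})\,uv$; as $uv$ is unitary, every singular value of this operator equals $2\sin(\pi|r|/g)$, saturating the claimed bound in every $(p,k)$-Schatten-Ky Fan norm.
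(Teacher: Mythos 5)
Your proposal is correct, and it reaches the theorem by a genuinely different route than the paper. The paper first passes from $\normF{\twicomm{u}{v}}=\normF{v^\dag uv-\eta u}$ to the spectral distance $d_F(u,\eta u)$ via the Wielandt--Hoffman inequality, and then solves a matching problem over permutations and eigenvalue angles (reduction to cyclic permutations, then equally spaced spectra) to evaluate $\min_u d_F(u,\eta u)$; you instead use the exact factorization $\twicomm{u}{v}=(W-\omega I)vu$ with $W=uvu^{-1}v^{-1}$, so that unitary invariance gives an identity rather than an inequality, and the only trace of the pair $(u,v)$ that survives is the constraint $\det W=1$. This turns the Frobenius case into the clean one-matrix extremal problem $\Re\Tr V\leq g\cos(\gamma/g)$ on $\{V\in U(g):\det V=e^{i\gamma}\}$, which you solve by Lagrange multipliers on the eigenvalue angles; your handling of the non-symmetric critical points (value $(2a-g)\cos\theta_0$, bounded by $g-2<g\cos(\gamma/g)$ for $g\geq3$, small $g$ by hand) is sound. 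One small wording slip: when $\sin\theta_j$ is constant the two admissible angles are $\theta_0$ and $\pi-\theta_0$, not angles differing by $\pi$; since their cosines are $\pm\cos\theta_0$ your formula $(2a-g)\cos\theta_0$ and the ensuing estimate are nevertheless exactly right, so nothing breaks. The remaining ingredients coincide with the paper's: the $(p,k)$-to-Frobenius reduction you give (top-$k$ average plus power means) is precisely the inequality $\normpk{X}\geq k^{1/p}g^{-1/2}\normF{X}$ the paper invokes, and the clock--shift pair with exponent $-\round{g\alpha}$ is the same flat-spectrum saturating family. What each approach buys: the paper's spectral-distance argument exhibits the structure of the extremal spectra (eigenvalues equally spaced around the circle) and reuses standard perturbation-theoretic machinery, while your determinant/trace argument is more self-contained, avoids Wielandt--Hoffman entirely, and makes transparent why only $\round{g\alpha}$ (through $\det(\bar\omega W)=e^{-2\pi i r}$) can enter the answer.
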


For a given twisted pair, all dimensions for which the twisted commutation value falls below this minimum can therefore be ruled out as valid dimensions. As this bound is not monotonic as a function of $g$, it not only provides a lower bound, but a full classification of which dimensions are disallowed.

After giving proofs of the main results outlined above in Sections \ref{sec:inexact}, \ref{sec:collection}, and \ref{sec:minimum}, we turn to broader discussion and applications of these ideas. Section \ref{sec:localapplications} is devoted to discussion of future directions for this work that add the additional constraint that the Hamiltonian is \emph{local}, and we discuss the relationship to the notions of topological order and topological quantum codes. In particular we show how recent numerical methods for studying quantum many-body systems~\cite{BridgemanFlammiaPoulin2016} could leverage the bounds presented here to provide certificates of the topological degeneracy of certain quantum systems.


\section{Restriction to the ground space}
\label{sec:inexact}

In this section we will make precise the notion that approximate \add{symmetries }can be utilized as proxies of ground symmetries. We first establish a relationship between approximate symmetries and the ground symmetries that they imply. Then we consider operators with approximate twisted commutation relations, and we show that these can also be restricted faithfully to the ground space with low distortion. 


Constructing a ground symmetry from an approximate symmetry will come in two steps. First we will pinch the symmetry $U$ with respect to $\Pi$, giving an operator $P$ for which $\comm{P}{\Pi}=0$. While this will render $P$ no longer unitary, we will see that \add{its action upon the ground space }\del{it }will still be \emph{approximately} unitary. \del{We will then show that approximate unitaries are nearly unitary. }Using this we will construct a nearby \del{unitary}\add{operator} $\tilde{U}$ that retains commutation with the ground space projector\add{, and acts unitarily on the ground space}, thus constituting a ground symmetry.

\add{We will start by showing that the off-diagonal blocks of an approximate symmetry are small, and then follow by showing that its action on the ground space is approximately unitary.}
\del{To bound the change in our symmetry that is caused by pinching, we will start by bounding the off-diagonal blocks of $U$ with respect to $\Pi$.}

\begin{lem}[Small off-diagonal blocks]
	\label{lem:off-diag}
	If $U$ is an $\epsilon$-approximate symmetry, 
	then off-diagonal blocks of $U$ with respect to $\Pi$ have bounded norms, 
	in particular \del{$\normu{\Pic U \Pi} \le \epsilon/\Delta$ and $\normu{\Pi U \Pic} \le \epsilon/\Delta$}\add{$\normu{\Pic U\Pi +\Pi U \Pic}\leq \epsilon/\Delta$}. \add{For Hamiltonians of the form $H=\Delta\Pic$, this inequality is tight.} 
\end{lem}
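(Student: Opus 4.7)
The plan is to decompose $U$ with respect to $\Pi$ and show that the commutator $[U,H]$ controls the off-diagonal blocks via the spectral gap. Set $T := \Pic U\Pi + \Pi U \Pic$, the off-diagonal part of $U$, and $h := \Pic H \Pic$, for which $h \ge \Delta I$ on $\Pic$ and $H = \Pic H \Pic$ (using self-adjointness of $H$ together with $H\Pi=0$). A short block computation yields
\[
[T,H] = \Pic [U,H]\Pi + \Pi [U,H]\Pic,
\]
which is exactly the off-diagonal part of $[U,H]$. Since the reflection unitary $W := \Pi - \Pic$ satisfies $W X W = X - 2\,(\text{off-diagonal part of }X)$, the triangle inequality combined with unitary invariance gives $\normu{[T,H]} \le \normu{[U,H]} \le \epsilon$.

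The crux is then to bound $\normu{T}$ by $\normu{[T,H]}/\Delta$. Writing $T$ in block form as $\left(\begin{smallmatrix}0 & b\\ c & 0\end{smallmatrix}\right)$ with $b := \Pi U \Pic$ and $c := \Pic U \Pi$, a direct calculation gives
\[
T^* T = c^*c \oplus b^*b \qquad \text{and} \qquad [T,H]^*[T,H] = (hc)^*(hc) \oplus (bh)^*(bh).
\]
Because $h$ restricted to $\Pic$ is invertible with $\norm{(h|_\Pic)^{-1}} \le 1/\Delta$, the identities $c = (h|_\Pic)^{-1}(hc)$ and $b = (bh)(h|_\Pic)^{-1}$, combined with the singular-value submultiplicativity $\sigma_i(XY) \le \norm{X}\,\sigma_i(Y)$, yield $\sigma_i(c) \le \sigma_i(hc)/\Delta$ and $\sigma_i(b) \le \sigma_i(bh)/\Delta$ for every $i$. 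Since the sorted singular values of a block-diagonal operator form the sorted merge of those of its two blocks, these blockwise pointwise bounds carry through the merge to give $\sigma_k(T) \le \sigma_k([T,H])/\Delta$ for every $k$, and hence $\normu{T} \le \normu{[T,H]}/\Delta$ for every unitarily invariant norm.

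Chaining the two estimates gives $\normu{T} \le \epsilon/\Delta$, which is the desired bound. For tightness with $H = \Delta\Pic$, one has $h = \Delta I_\Pic$, so all the singular-value inequalities above saturate; any unitary $U$ with purely off-diagonal support, such as a swap between a ground-state basis vector and an excited-state basis vector, then achieves equality. The main obstacle is the second step, where a pointwise bound on singular values must be propagated through the direct-sum structure; this works cleanly because the pointwise ordering is preserved by the sorted merge.
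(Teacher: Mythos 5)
Your proof is correct, and it reaches the same key estimate as the paper ($\Delta$ times the singular values of the off-diagonal blocks are dominated by those of the commutator) by a noticeably different technical route. The paper sandwiches the off-diagonal part as $\Delta(\Pic U\Pi+\Pi U\Pic)=(\Delta\Pic+\Pi)(\Pic U\Pi+\Pi U\Pic)(\Delta\Pic+\Pi)$, replaces $\Delta\Pic+\Pi$ by $H\Pic+\Pi$ using $H^2\geq\Delta^2\Pic$ together with a Weyl-majorization monotonicity property of unitarily invariant norms, and only then passes to the commutator via unitary invariance of $\Pi-\Pic$ and pinching. You instead pass to the commutator first, identifying $[T,H]$ as the off-diagonal part of $[U,H]$ (your reflection argument with $W=\Pi-\Pic$ is the same device the paper calls the pinching step), and then exploit the gap by inverting $h=\Pic H\Pic$, using $\norm{h^{-1}}\leq 1/\Delta$, the submultiplicativity $\sigma_i(XY)\leq\norm{X}\,\sigma_i(Y)$ (and its mirror $\sigma_i(XY)\leq\sigma_i(X)\norm{Y}$ for the block $b=(bh)h^{-1}$), and the sorted-merge observation for the block-diagonal moduli. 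This buys a more elementary argument for the crucial step--no appeal to $H^2\geq\Delta^2\Pic$ or to operator-inequality monotonicity of norms--at the cost of treating the two blocks separately and verifying that pointwise domination survives the merge, which you do correctly; note that the finite rank of $\Pi$ (and hence of $T$, $b$, $c$) is what makes the singular-value and symmetric-gauge-function arguments unproblematic even when the ambient space is infinite-dimensional, where $h^{-1}$ still exists as a bounded operator by the gap.

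One small remark on tightness: when $H=\Delta\Pic$ the commutator $[U,H]=\Delta(\Pi U\Pic-\Pic U\Pi)$ is purely off-diagonal for \emph{every} unitary $U$, so both your pinching step and your scaling step are equalities for all $U$, exactly as in the paper's tightness argument; no special choice of $U$ with off-diagonal support is needed (and indeed a swap extended by the identity is not purely off-diagonal), though exhibiting such an example with nonzero off-diagonal blocks does suffice to witness tightness.
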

\begin{proof}
	\add{We start by noting that $\abs{A}^2\geq \abs{B}^2$ implies 
	$\abs{AX}^2\geq \abs{BX}^2$ for any $X$, where $\abs{M}:=\sqrt{M^\dag M}$. 
	Taking $X$ 
	to be finite-rank, we have from Weyl's inequalities~\cite{Weyl} that the singular values of $AX$ majorize 
	those of $BX$. Unitarily invariant norms\footnote{Following 
	Ref.~\cite{Bhatia,Schatten} we adopt the normalization $\normu{A}=\norm{A}$ 
	for all rank-1 operators $A$.} act as symmetric gauge functions on 
	finite-rank operators\cite{vonNeumann,Schatten,Bhatia}, which 
	implies from Refs.~\cite[Prop.IV.1.1, 
	Thm.IV.2.2]{Bhatia} that $\normu{AX}\geq \normu{BX}$---a similar 
	argument for the adjoint also gives $\normu{XA}\geq \normu{XB}$. 
	Because $H$ has a gapped band with projector $\Pi$, we have that $H^2\geq \Delta^2 \Pic$. 
	Using this, we can bound the off-diagonal blocks in terms of the commutator
	\begin{align*}
	\Delta \normu{\Pic U\Pi+\Pi U\Pic}
	&=\normu{(\Pic\Delta +\Pi)\left(\Pic U\Pi +\Pi U\Pic\right)\left(\Delta \Pic +\Pi\right)}\\
	&\stackrel{\mathrm{~}}{\leq} \normu{(\Pic H +\Pi)\left(\Pic U\Pi +\Pi U\Pic\right)\left(H \Pic +\Pi\right)}\\
	&=\normu{\Pic HU\Pi+\Pi UH\Pic}\,,
	\end{align*}
	where the inequality follows from the aforementioned monotonicity property. 
	Now using the unitary invariance of the norm (since $\Pic - \Pi$ is unitary), we find that 
	\begin{align*}
	\normu{\Pic HU\Pi+\Pi UH\Pic}&=\normu{\Pic HU\Pi-\Pi UH\Pic}\\
	&=\normu{\Pic \comm{H}{U}\Pi+\Pi \comm{H}{U}\Pic}\\
	&\stackrel{\mathrm{~}}{\leq}\normu{\comm{H}{U}}\leq \epsilon,
	\end{align*}
	where the second equality makes use of $H\Pi = 0$ and the first inequality 
	is the pinching inequality.}
	
	\add{With regard to tightness, if we take $H=\Delta\Pic$ then we can see $\comm{H}{U}$ has no on-diagonal blocks, and therefore $\Delta\left(\Pic U\Pi-\Pi U \Pic\right)=\comm{H}{U}$. Taking norms of both sides of this equation give $\Delta\normu{\Pic U\Pi+\Pi U \Pic}=\normu{\comm{H}{U}}$, meaning that $\normu{\Pic U\Pi+\Pi U \Pic}\leq \epsilon/\Delta$ is tight.}
		
	\del{
	The Hamiltonian $H$ is a positive operator, and we have in particular $H\geq\Delta\Pic\geq 0$, which in turn implies that $H^2\geq \Delta^2\Pic$. Then the modulus of the off-diagonal block of $U$ is bounded in the semidefinite operator ordering as follows:
	\begin{align*}
		\Delta^2 \abs{\Pic U \Pi}^2
		&={(\Delta\Pic U \Pi)^\dag(\Delta\Pic U \Pi)}\\
		&={\Pi U^\dag\cdot\Delta^2\Pic\cdot U \Pi}\\
		&\leq {\Pi U^\dag\cdot H^2\cdot U \Pi}\\
		&= {(H U \Pi)^\dag(H U \Pi)}\\
		&=\abs{HU\Pi}^2.
	\end{align*}
	By the min-max theorem, this ordering on the moduli implies that the singular values of $\Delta\Pic U\Pi$ are, listed in order, each upper bounded by those of $HU\Pi$. As unitarily invariant norms are monotonic functions of the singular values, this allows us to conclude that $\Delta \normu{\Pic U \Pi}\leq \normu{HU\Pi}$. Using $H\Pi=0$ we can bound this in turn by the commutator:
		\begin{align*}
			\Delta\cdot\normu{\Pic U \Pi\vphantom{\big(}}
			&\leq\normu{H U \Pi\vphantom{\big(}}\\
			&=\normu{\big.\!\left(H U -UH\right) \Pi}\\
			&\leq\normu{\big.\!\comm{H}{U}\Pi}\\
			&\leq\normu{\big.\!\comm{H}{U}}\\
			&\leq \epsilon
		\end{align*}
	An analogous argument can be used for $\Pi U \Pic$.}
\end{proof}

\add{
	\begin{lem}[Approximate unitarity on the ground space]
		\label{lem:approx unit}
		For an $\epsilon$-approximate symmetry $U$ with 
		$\xi:=\epsilon/\Delta\leq1$, the action on the ground space is 
		approximately unitary
		\begin{align*}
		\normu{\Pi-\abs{\Pi U\Pi}}\leq f(\xi^2),
		\end{align*}
		where $f(x):=1-\sqrt{1-x}$. In the operator norm, this expression is tight.
\end{lem}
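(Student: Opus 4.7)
The plan is to use the unitarity of $U$ to obtain a clean factorization relating $\abs{\Pi U\Pi}$ to the off-diagonal block $B := \Pic U\Pi$. Pinching $U^\dagger U = I$ by $\Pi$ and inserting $I = \Pi + \Pic$ between the factors yields $A^\dagger A + B^\dagger B = \Pi$, where $A := \Pi U\Pi$. Thus $\abs{A}^2 = \Pi - B^\dagger B$, and factoring the difference of squares on the range of $\Pi$ gives the key identity
\[ \Pi - \abs{A} \;=\; B^\dagger B\,(\Pi + \abs{A})^{-1}\,, \]
where $(\Pi + \abs{A})^{-1}$ is interpreted as the inverse on the range of $\Pi$ (extended by $0$ elsewhere).

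Next I would invoke the previous lemma to control $\normu{B}$. Since $\Pic U\Pi$ and $\Pi U\Pic$ act between orthogonal subspaces in opposite directions, their sum is block anti-diagonal with respect to $\Pi \oplus \Pic$, and therefore its singular values are the multiset union of those of the two summands; monotonicity of symmetric gauge functions then yields $\normu{B} \leq \normu{\Pic U\Pi + \Pi U\Pic} \leq \xi$. For the main bound I would chain three ingredients: the submultiplicativity $\normu{XY} \leq \norm{X}\normu{Y}$; the fact that the operator norm is the smallest unitarily invariant norm under the paper's normalization convention, so $\norm{B}\leq\normu{B}$; and the computation that $\abs{A}$ has minimum singular value $\sqrt{1-\norm{B}^2}$ on the range of $\Pi$, whence $\norm{(\Pi + \abs{A})^{-1}} = (1+\sqrt{1-\norm{B}^2})^{-1}$. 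Combined, these give
\[ \normu{\Pi - \abs{A}} \leq \norm{B}\,\normu{B}\,(1+\sqrt{1-\norm{B}^2})^{-1} \leq \normu{B}^2\,(1+\sqrt{1-\normu{B}^2})^{-1} = f(\normu{B}^2) \leq f(\xi^2)\,, \]
where the penultimate equality uses the rationalization $f(x) = x/(1+\sqrt{1-x})$.

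For tightness in the operator norm, a two-dimensional rank-one example suffices: take $\Pi = \ket{0}\bra{0}$, $H = \Delta\Pic$, and let $U$ be the rotation by angle $\theta$ in the plane spanned by $\ket{0}$ and $\ket{1}$. A direct computation yields $\norm{\comm{H}{U}} = \Delta\sin\theta$ (so $\xi = \sin\theta$) and $\abs{\Pi U\Pi} = \cos\theta\,\Pi$, giving $\norm{\Pi - \abs{\Pi U\Pi}} = 1-\cos\theta = f(\sin^2\theta) = f(\xi^2)$, saturating the bound. The main obstacle I anticipate is bookkeeping: being careful that $(\Pi + \abs{A})^{-1}$ is interpreted on the range of $\Pi$ where it is well-defined, and ordering the norm inequalities so that Ky Fan dominance $\norm{B}\leq\normu{B}$ is applied at exactly the right moment to secure the sharp $f(\xi^2)$ rather than a weaker bound such as the naive $\xi^2$.
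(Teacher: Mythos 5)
Your proof is correct, and it reaches the sharp constant $f(\xi^2)$ by a different mechanism than the paper at the one step where the argument is delicate. Both proofs share the same skeleton: the unitarity identity $\Pi-\abs{\Pi U\Pi}^2=\abs{\Pic U\Pi}^2$, the bound $\normu{\Pic U\Pi}\leq\xi$ inherited from \cref{lem:off-diag} (your block--anti-diagonal singular-value argument for extracting the single block is fine), and the same two-dimensional rotation for tightness. Where the paper converts the bound on $\Pi-\abs{\Pi U\Pi}^2$ into one on $\Pi-\abs{\Pi U\Pi}$ via a Jensen-type inequality $\normu{f(X)}\leq f(\normu{X})$, valid for any power series with positive coefficients applied to a positive contraction (here $f(x)=1-\sqrt{1-x}$), you instead use the algebraic factorization $\Pi-\abs{\Pi U\Pi}=B^\dag B\,(\Pi+\abs{\Pi U\Pi})^{-1}$ with $B=\Pic U\Pi$, bound $\norm{(\Pi+\abs{\Pi U\Pi})^{-1}}=(1+\sqrt{1-\norm{B}^2})^{-1}$ by the spectral computation $\abs{\Pi U\Pi}^2=\Pi-B^\dag B$, and invoke Fan dominance $\norm{B}\leq\normu{B}$ together with $f(x)=x/(1+\sqrt{1-x})$; all of these steps check out, including the interpretation of the inverse on the range of $\Pi$ and the requirement $\normu{B}\leq\xi\leq 1$ needed for the monotonicity step. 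The trade-off is that the paper's route yields a reusable operator-Jensen lemma with no extra structure needed, while your factorization is more elementary and makes explicit where the improvement over the naive $\xi^2$ bound comes from (the resolvent factor), at the cost of the additional Fan-dominance observation and the bookkeeping on the range of $\Pi$ that you correctly flag.
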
	
}\begin{proof}
		\add{First we can bound $\abs{\Pi U\Pi}^2$ near $\Pi$ by using the unitarity of $U$ itself as
		\begin{align*}
			\Pi -\abs{\Pi U\Pi}^2
			&=\Pi-\Pi U^\dag \Pi U\Pi\\
			&=\Pi U^\dag U\Pi -\Pi U^\dag \Pi U\Pi\\
			&=\Pi U^\dag\Pic U\Pi\\
			&=\abs{\Pic U \Pi}^2.
		\end{align*}
		Together with \cref{lem:off-diag}, the sub-multiplicativity of 
		unitarily invariant norms on finite-rank operators let us conclude that 
		$\trivert{\Pi-\abs{\Pi U\Pi}^2\trivert}\leq \xi^2$. Next we need to 
		use this bound on $\trivert{\Pi -\abs{\Pi U \Pi}^2}\trivert$, and create a bound on $\normu{\Pi-\abs{\Pi U\Pi}}$.}
		
		\add{Consider a function $f(x)=\sum_{n=1}^{\infty}a_nx^n$, where 
		$a_n>0$ and $f(1)<\infty$. For any finite-rank operator $0\leq X\leq 
		1$, we can use the triangle inequality and submultiplicativity of 
		$\normu{\cdot}$ to derive a Jensen-like inequality
		\begin{align*}
			\normu{f(X)}
			=\normu{\sum_{n=1}^{\infty}a_n X^n}
			\leq \sum_{n=1}^{\infty}a_n\normu{X^n}
			\leq \sum_{n=1}^{\infty}a_n\normu{X}^n
			=f\left(\normu{X}\right).
		\end{align*}
		If we let $a_n=\Gamma(n-1/2)/2\sqrt{\pi}n!$, then we get $f(x)=1-\sqrt{1-x}$ for $x\in [0,1]$. If we let $X=\Pi-\abs{\Pi U\Pi}$, then applying the above gives
		\begin{align*}
			\normu{\Pi -\abs{\Pi U\Pi}}
			&=\normu{f\left(\Pi -\abs{\Pi U\Pi}^2\right)}\\
			&\leq f\left(\normu{\Pi-\abs{\Pi U\Pi}^2}\right)\\
			&\leq f(\xi^2).		
		\end{align*} 
		We note that $x/2\leq f(x)\leq x$, which means that this bound improves upon the bound trivially given by the contractivity of $\Pi U\Pi$,
		\begin{align*}
			\normu{\Pi -\abs{\Pi U\Pi}}\leq \normu{\Pi-\abs{\Pi U\Pi}^2}\leq 
			\xi^2.
		\end{align*}
		}
		
		\add{For the purposes of tightness, consider a two-dimensional Hilbert space, and a Hamiltonian $H$ and unitary $U$ given by 
		\begin{align*}
			H=\begin{pmatrix}
			0 & \\ & \Delta
			\end{pmatrix}
			\qquad\text{and}\qquad
			U=\begin{pmatrix}
			\cos\phi & \sin\phi \\-\sin\phi & \cos\phi
			\end{pmatrix}.
		\end{align*}
		In the operator norm $\norm{\comm{U}{H}}=\Delta\sin\phi$ and $\norm{\Pi -\abs{\Pi U\Pi}}=1-\cos\phi$, which saturates the above bound.
		}
\end{proof}

Using \add{these bounds}\del{this} we can now construct a ground symmetry $\tilde{U}$ by pinching $U$ with respect to $\Pi$, \add{and then restoring unitarity on the ground space}\del{and enforcing unitarity}.

\begin{lem}[Approximate symmetries are nearly ground symmetries]
	\label{lem:as=n-les}
	For an $\epsilon$-approximate symmetry $U$ with 
	$\xi:=\epsilon/\Delta\leq1$, 
	there exists a ground symmetry $\tilde{U}$ \add{which is close to 
	$U$}\del{such that}
	\[ \normu{U-\tilde{U}}\leq \del{6\epsilon/\Delta}\add \xi\add{+f(\xi^2)}, \]
	\add{and closer still in the ground space
	\[  \normu{\Pi\left(U-\tilde{U}\right)\Pi}\leq f(\xi^2)\,.\]
	}\add{The first inequality is tight to leading 
	order in $\xi$, and the second is tight in the operator norm.}
\end{lem}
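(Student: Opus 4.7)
The plan is to construct $\tilde{U}$ in two steps: first pinch $U$ with respect to $\Pi$ to suppress the off-diagonal blocks, and then repair unitarity on the ground space block via a polar decomposition. Concretely, I would write the polar decomposition $\Pi U\Pi = w\abs{\Pi U\Pi}$, where $w$ is the partial isometry of the polar decomposition extended (if necessary) to a unitary on the ground space. Any such extension only modifies $w$ on $\ker\abs{\Pi U\Pi}$, so the identity $w\abs{\Pi U\Pi}=\Pi U\Pi$ is preserved. I would then set
\[ \tilde{U}:=w+\Pic U\Pic. \]
By construction $\comm{\tilde U}{\Pi}=0$, and using $w^\dag w=ww^\dag=\Pi$ one directly checks $\Pi\tilde U^\dag\tilde U\Pi=\Pi\tilde U\tilde U^\dag\Pi=\Pi$, so $\tilde U$ is a ground symmetry in the sense of \cref{defn:symmetry}.

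For the first bound, I would decompose
\[ U-\tilde{U}=\bigl(\Pi U\Pi-w\bigr)+\Pi U\Pic+\Pic U\Pi \]
and apply the triangle inequality for $\normu{\cdot}$. The off-diagonal blocks are directly controlled by \cref{lem:off-diag}, giving $\normu{\Pi U\Pic+\Pic U\Pi}\leq\xi$. For the ground space block, the polar decomposition identity yields $\Pi U\Pi-w=w(\abs{\Pi U\Pi}-\Pi)$, and since $w+\Pic$ is a unitary on the whole Hilbert space, unitary invariance of $\normu{\cdot}$ gives
\[ \normu{\Pi U\Pi-w}=\normu{\Pi-\abs{\Pi U\Pi}}\leq f(\xi^2), \]
where the last inequality is \cref{lem:approx unit}. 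Combining yields the first inequality. The second is immediate, since $\Pi(U-\tilde U)\Pi=\Pi U\Pi-w$, which has already been bounded by $f(\xi^2)$.

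For tightness, I would specialize to the qubit example used in \cref{lem:approx unit}: take $H=\Delta\Pic$ on a qubit and $U$ a planar rotation by angle $\phi$, so that $\xi=\sin\phi$ in the operator norm. Here $\Pi U\Pi$ is already positive on the ground space, so $w=\Pi$ and $\tilde U$ is diagonal in the energy basis. A direct calculation gives $\norm{\Pi(U-\tilde U)\Pi}=1-\cos\phi=f(\xi^2)$ exactly, establishing tightness of the second inequality in the operator norm. A short singular-value computation on the $2\times 2$ matrix $U-\tilde U$ gives $\norm{U-\tilde U}=\sin\phi+O(\phi^2)=\xi+O(\xi^2)$, matching the leading $\xi$ behaviour of $\xi+f(\xi^2)$. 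The only mild obstacle is the possible non-uniqueness of $w$ when $\abs{\Pi U\Pi}$ is singular on the ground space (which can only occur at $\xi=1$, since a bound $\normu{\Pic U\Pi}\leq\xi<1$ follows from \cref{lem:off-diag} via pinching); this ambiguity lives entirely on $\ker\abs{\Pi U\Pi}$ and every admissible choice of extension yields the same norm estimates.
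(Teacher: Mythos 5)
Your construction and bounds are essentially the paper's: the same polar-decomposition ground symmetry ($\tilde U = \Pi W\Pi + \Pic U\Pic$), the same block decomposition of $U-\tilde U$, and the same invocation of \cref{lem:off-diag} and \cref{lem:approx unit} with the triangle inequality, so both displayed inequalities are proved identically. The one place you diverge is the tightness of the first inequality: the paper establishes it via the pinching inequality, $\normu{U-\tilde U}\geq\normu{\Pic U\Pi+\Pi U\Pic}=\xi$ for $H=\Delta\Pic$, which lower-bounds the distance to \emph{every} ground symmetry, whereas you only compute the distance achieved by your particular $\tilde U$ in the qubit example --- this shows your estimate of the construction is sharp, but not that no other ground symmetry does better. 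In your example the stronger statement follows at once from the same pinching argument (any ground symmetry is block diagonal, so its distance to $U$ is at least $\sin\phi=\xi$), so this is a one-line fix rather than a substantive gap.
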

\begin{proof}
	\del{Firstly, pinch the operator $U$ into the diagonal blocks $P:=\Pi U \Pi +\Pic U\Pic$. Clearly $\comm{P}{\Pi}=0$ by construction, but $P$ is no longer unitarity. Using \cref{lem:off-diag} we can see that $P$ is close to $U$.
	\begin{align*}
		\normu{U-P}
		=\normu{\Pi U \Pic+\Pic U\Pi}
		\leq\del{\normu{\Pi U \Pic}+\normu{\Pic U\Pi}
		\leq 2}\epsilon/\Delta
	\end{align*}}
	
	\del{We note in passing that $\abs{U-P}$ is block diagonal, so if $\normu{\cdot}$ were a Schatten $p$-norm, then using the identity $\norm{A\oplus B}^p_p=\norm{A}^p_p+\norm{B}^p
		-p$, the factor of $2$ acquired here can be strengthened to $2^{1/p}$, disappearing entirely in the operator norm.}
	\del{The operator $P$ is contractive by the pinching lemma, so we can use the fact that $U$ is unitary to show that $P$ is approximately unitary.
	\begin{align*}
		\normu{I-PP^\dag}
		&=\normu{UU^\dag-PP^\dag}\\
		&=\normu{(U-P)P^\dagger+U(U^\dag -P^\dag)}\\
		&\leq\normu{U-P}\cdot\norm{P}+\normu{U-P}\\
		&\leq2\normu{U-P}\\
		&\leq 4\epsilon/\Delta \,.
	\end{align*}
	Next we will see that approximate unitaries are near unitaries. To see this, take the singular value decomposition $P=LSR^\dag$, and let $\tilde{U}:=LR^\dag$. By the unitary invariance of the norm, the distance between $P$ and $\tilde{U}$ is entirely controlled by the singular values $S$. By non-negativity of singular values and contractivity of $P$, we have the operator inequality $0\leq S\leq 1$, which implies $S^2 \leq S$ and thus $I-S\leq I-S^2$. Using this we can see that $\tilde{U}$ is close to $P$,
	\begin{align*}
		\normu{\tilde{U}-P}
		&=\normu{L(I-S)R^\dag}\\
		&=\normu{\phantom{\big.}I-S}\\
		&\leq\normu{I-S^2}\\
		&=\normu{L(I-S^2)L^\dag}\\
		&=\normu{I-PP^\dag}\\
		&\leq 4\epsilon/\Delta\,.
	\end{align*}
	It should be noted that their exists an ordering on the singular values, corresponding to performing the SVD block-wise, such that the block structure of $P$ with respect to $\Pi$ is retained in $\tilde{U}$, and so $[\tilde{U},\Pi]=0$ forming a ground symmetry as desired. }
	
	\add{We start by considering the polar of decompositon $\Pi U \Pi=W\abs{\Pi 
	U\Pi}$. As the ground space $\mathrm{im}(\Pi)$ is an invariant subspace of 
	$\Pi U\Pi$, we can take\footnote{\add{Such a $W$ could be found by 
	performing the polar decomposition restricted to the ground space, and 
	padding the unitary out to act as the identity on the rest of the space.}} 
	$W$ to also leave the ground space invariant, $\comm{W}{\Pi}=0$. Given 
	this, we define our ground symmetry to be $\tilde{U}:=\Pi W\Pi+\Pic U\Pic$.}
	
	\add{We will now consider bounding the distance between $U$ and $\tilde U$ block-wise. The off-diagonal blocks are bounded by \cref{lem:off-diag} as
	\[ \normu{\Pi\left(\tilde{U}-U\right)\Pic+\Pic\left(\tilde{U}-U\right)\Pi}
	=\normu{\Pic U\Pi+\Pic U\Pi}\leq \xi.\]
	The bound on the ground space however follows from \cref{lem:approx unit}
	\[ \normu{\Pi\left(\tilde U -U\right)\Pi}=\normu{W\Pi-W\abs{\Pi 
	U\Pi}}=\normu{\Pi-\abs{\Pi U\Pi}}\leq f(\xi^2). \]
	Finally the fact that $U$ was unchanged on the excited space trivially implies
	\[ \normu{\Pic \left(\tilde U-U\right)\Pic }=0. \]
	Putting everything together, this gives the desired bound
	\begin{align*}
		\normu{\tilde{U}-U}
		&=\normu{\Pic\left(\tilde{U}-U\right)\Pi
		+\Pi\left(\tilde{U}-U\right)\Pic
		+\Pi\left(\tilde{U}-U\right)\Pi
		+\Pic\left(\tilde{U}-U\right)\Pic}\\
		&\leq\normu{\Pic\left(\tilde{U}-U\right)\Pi
			+\Pi\left(\tilde{U}-U\right)\Pic}
			+\normu{\Pi\left(\tilde{U}-U\right)\Pi}
			+\normu{\Pic\left(\tilde{U}-U\right)\Pic}\\
		&\leq\xi+f(\xi^2).
	\end{align*}}
	\add{
	As for tightness, \cref{lem:off-diag} gives that $\normu{\Pic 
	U\Pi+\Pi U\Pic}=\normu{\comm{H}{U}}/\Delta$ for Hamiltonians of the form 
	$H=\Delta\Pic$. If we assume that $\normu{\comm{U}{H}}=\epsilon$, then 
	applying the pinching inequality gives
	\begin{align*}
		\normu{U-\tilde{U}}\geq 
		\normu{\Pic\left(U-\tilde{U}\right)\Pi+\Pi\left(U-\tilde{U}\right)\Pic}=\normu{\Pic
		 U\Pi +\Pi U\Pic}=\xi,
	\end{align*}
	which proves our bound on $\normu{U-\tilde{U}}$ is tight to leading order 
	in $\xi$. The tightness of the norm distance in the ground space follows 
	directly from the tightness of \cref{lem:approx unit}.} 
\end{proof}

%
%
%

We will now consider how the existence of nearby ground symmetries allows twisted commutation relations of approximate symmetries to be pulled down into the ground space. 



\stepcounter{thm2}
\begin{thm2}[Restriction to the ground \add{space}\del{band}]
		For two $\epsilon$-approximate symmetries $U$ and $V$ which 
		approximately twisted commute
		\[ \normu{\twicomm{U}{V}}\leq \delta, \]
		\add{then if $\xi:=\epsilon/\Delta\leq1$ }there exists unitaries $u$ 
		and 
		$v$ acting on the ground space which also approximately twisted commute 
		as
		\add{\begin{align*}
			\normu{\twicomm{u}{v}}&\leq \delta+2\xi^2 +4f(\xi^2)\,.
		\end{align*}}
\end{thm2}

\begin{proof}
	Consider a $\tilde{U}$ and $\tilde{V}$ given by applying \cref{lem:as=n-les} to $U$ and $V$ respectively, such that
	\del{\[ \trivert U-\tilde{U}\trivert,\,\trivert V-\tilde{V}\trivert\leq \frac{\epsilon}{\Delta}. \]}
	\add{\[ \normu{\Pi U\Pi-\Pi\tilde U\Pi},\, \normu{\Pi V\Pi-\Pi\tilde 
	V\Pi}\leq f(\xi^2). \]}
	\del{Using this we can see that the twisted commutator cannot grow much:
	\begin{align*}
		\normu{[\tilde{U},\tilde{V}]_{\alpha}-[{U},{V}]_{\alpha}}
		&=\normu{\tilde{U}\tilde{V}-e^{2\pi i\alpha}\tilde{V}\tilde{U}-{U}{V}+e^{2\pi i\alpha}{V}{U}}\\
		&=\trivertBig 
		\left[\left(\tilde{U}-U\right)V+\tilde{U}\left(\tilde{V}-V\right)\right]
		+e^{2\pi i\alpha}\left[\left(\tilde{V}-V\right)U+\tilde{V}\left(\tilde{U}-U\right)\right]\trivertBig\\
		&\leq
		\normu{\left(\tilde{U}-U\right)V}
		+\normu{\tilde{U}\left(\tilde{V}-V\right)}
		+\normu{\left(\tilde{V}-V\right)U}
		+\normu{\tilde{V}\left(\tilde{U}-U\right)}\\
		&=
		2\normu{\tilde{U}-U}
		+2\normu{\tilde{V}-V}\\
		&\leq 24\frac{\epsilon}{\Delta}
	\end{align*}}
	\add{\!\!\!\!\!Next we consider the twisted commutator of $U$ and $V$, and 
	that of 
		$\tilde{U}$ and $\tilde V$, both projected into the ground space. By 
		expanding out the twisted commutators we have
	\begin{align*}
		\Pi\twicomm{U}{V}\Pi-\Pi\twicomm{\tilde U}{\tilde V}\Pi
		&=\twicomm{\Pi U\Pi}{\Pi V \Pi}-\twicomm{\Pi \tilde U \Pi}{\Pi\tilde V\Pi}+\Pi U\Pic\cdot\Pic V\Pi-e^{2\pi i\alpha}\Pi V\Pic\cdot\Pic U\Pi,\\
		&=\left(\Pi U\Pi-\Pi \tilde U\Pi\right)\cdot \Pi V\Pi-e^{2\pi i\alpha}\Pi \tilde V\Pi\cdot\left(\Pi U\Pi-\Pi \tilde U\Pi\right)\\
		&~\quad+\Pi \tilde U\Pi\cdot\left(\Pi V\Pi-\Pi \tilde V\Pi\right)-e^{2\pi i\alpha}\left(\Pi V\Pi-\Pi \tilde V\Pi\right)\cdot \Pi U\Pi\\
		&~\quad+\Pi U\Pic\cdot\Pic V\Pi-e^{2\pi i\alpha}\Pi V\Pic\cdot\Pic U\Pi.
	\end{align*}
	Using the triangle inequality, the contractivity of $\Pi U\Pi$ and $\Pi V\Pi$, and the bound on the off-diagonal blocks from \cref{lem:off-diag}, we can bound this as required:
	\begin{align*}
		\normu{\Pi\twicomm{U}{V}\Pi-\Pi\twicomm{\tilde U}{\tilde V}\Pi}
		&\leq \normu{\left(\Pi U\Pi-\Pi \tilde U\Pi\right)\cdot \Pi V\Pi}+\normu{\Pi \tilde V\Pi\cdot\left(\Pi U\Pi-\Pi \tilde U\Pi\right)}\\
		&~\qquad+\normu{\Pi \tilde U\Pi\cdot\left(\Pi V\Pi-\Pi \tilde V\Pi\right)}+\normu{\left(\Pi V\Pi-\Pi \tilde V\Pi\right)\cdot \Pi U\Pi}\\
		&~\qquad+\normu{\Pi U\Pic\cdot\Pic V\Pi}+\normu{\Pi V\Pic\cdot\Pic U\Pi}\\
		&\leq 2\normu{\Pi U\Pi-\Pi \tilde U\Pi}+2\normu{\Pi V\Pi-\Pi \tilde V\Pi}\\
		&~\qquad+\normu{\Pi U\Pic}\cdot\normu{\Pic V\Pi}+\normu{\Pi V\Pic}\cdot\normu{\Pic U\Pi}\\
		&\leq 4f(\xi^2)+2\xi^2.
	\end{align*}
	}
	
	\del{So we conclude
	\[ \normu{[\tilde{U},\tilde{V}]_{\alpha}}\leq \delta+24\frac{\epsilon}{\Delta}. \]
	Next let $u$ and $v$ be the ground space restrictions of $\tilde{U}$ and $\tilde{V}$. As each is a ground symmetry, $u$ and $v$ must be unitary. The twisted commutator value of the ground symmetries also carries through to these restrictions:
	\begin{align*}
		\normu{\twicomm{u}{v}}
		&=\normu{\twicomm{u\oplus 0}{v\oplus0}}\\
		&=\normu{[\Pi \tilde{U}\Pi,\Pi \tilde{V}\Pi]_{\alpha}}\\
		&=\normu{\Pi[\tilde{U},\tilde{V}]_{\alpha}}\\
		&\leq\normu{[\tilde{U},\tilde{V}]_{\alpha}}\\
		&\leq\delta+24\frac{\epsilon}{\Delta}. 
	\end{align*}}

	\add{Next we let $u$ and $v$ be the restriction of $\tilde{U}$ and $\tilde 
	V$ to the ground space respectively. As each are ground symmetries, $u$ and 
	$v$ are both unitaries. If we consider the embedding of operators on the 
	ground space back into the larger Hilbert space, then we can use the above 
	to bound the twisted commutator of our ground space unitaries
		\begin{align*}
			\normu{\twicomm{u}{v}}
			&=\normu{\twicomm{u\oplus 0}{v\oplus0}}\\
			&=\normu{[\Pi \tilde{U}\Pi,\Pi \tilde{V}\Pi]_{\alpha}}\\
			&=\normu{\Pi[\tilde{U},\tilde{V}]_{\alpha}\Pi}\\
			&\leq\normu{\Pi[U,V]_{\alpha}\Pi}+2\xi^2+4f(\xi^2)\\
			&\leq\normu{[U,V]_{\alpha}}+2\xi^2+4f(\xi^2)\\
			&\leq\delta+2\xi^2+4f(\xi^2).
		\end{align*}
	}
	Note that if we had a set of more than two unitaries, this additive growth in the twisted commutation value would hold equally for every pair separately.
%
\end{proof}

\stepcounter{thm2}


\section{Degeneracy lower bounds}
\label{sec:collection}

In this section we show how twisted pairs of unitary operators can be used to give lower bounds on the degeneracy of the ground space. 
We start by considering an exact twisted pair and the Stone-von Neumann theorem. 
We will then show how this argument can be generalized to approximate twisted pairs, and how a lower bound on the degeneracy follows from an upper bound on the twisted commutator value. Finally we will see how this can also be extended to more general collections of twisted commuting operators through the example case of two twisted pairs that are approximately mutually commuting. 

\subsection{Stone-von Neumann Theorem}
\label{subsec:exact}


Consider a $u$ and $v$ which exactly twisted commute, so that
$ uv=e^{2i\pi \alpha} vu$. 
Let $(\lambda,\ket{\psi})$ be an eigenpair of $u$. Using the twisted commutation relation, we see that $\ket{\psi'}:=v\ket{\psi}$ forms a $\lambda e^{2i\pi \alpha}$-eigenvector.
It follows that $v$ forms an isomorphism between the $\lambda$ and $\lambda e^{2i\pi \alpha}$-eigenspaces of $u$, which allows us to conclude that their dimensions must be the same. 
Carrying this argument forward, we can see that any eigenspaces whose eigenvalues differ by any power of $e^{2i\pi \alpha}$ must also be isomorphic.


Suppose we take $\alpha\in \mathbb{Q}$, with $\alpha=p/q$ with $p$, $q$ coprime. As we can see in \cref{fig:exact}, a simple divisibility argument implies that the eigenspaces come in isomorphic multiples of $q$, which therefore implies that the overall dimension of $u$ and $v$ is a multiple of $q$ also.

\begin{figure}[t!]
	\def\d{7}
	\ifcompile
	\def\R{2}
	\def\r{0.25}
	\def\t{0}
	\def\a{-20}
	\def\ddr{.2}
	\def\c{.75}
	\def\s{.75}
	\centering
	\def\skip{1}
	\tikzsetnextfilename{1skip}
	\begin{tikzpicture}[scale=.75]
	\pgfmathsetmacro{\half}{(\d-1)/2} 
	\draw (0,0) circle (\R);
	\draw (\t:\R) -- (\t:\R+\r);
	\node at (\t:\R+\s) {$\ket{\psi}$};
	\foreach \x in {1,...,\d} {
		\draw (\t+360*\x/\d:\R) -- (\t+360*\x/\d:\R+\r);
		\draw[-stealth,red,thick] ($(\t+360*\x/\d-\skip*360/\d:\R)-(\a+\t+360*\x/\d-\skip*360/\d:\ddr)$)
		.. controls ($(\t+360*\x/\d-\skip*360/\d:\R)-(\a+\t+360*\x/\d-\skip*360/\d:\c)$) and ($(\t+360*\x/\d:\R)-(-\a+\t+360*\x/\d:\c)$) .. ($(\t+360*\x/\d:\R)-(-\a+\t+360*\x/\d:\ddr)$);	
		\ifthenelse{\equal{\x}{\d}}{}{\node at (\t+360*\x/\d:\R+\s) {$v^{\pgfmathparse{int(\x)}\pgfmathresult}\ket{\psi}$};}
	}
	\end{tikzpicture}\qquad
	\def\skip{2}
	\tikzsetnextfilename{2skip}
	\begin{tikzpicture}[scale=.75]
	\pgfmathsetmacro{\half}{(\d-1)/2} 
	\draw (0,0) circle (\R);
	\draw (\t:\R) -- (\t:\R+\r);
	\node at (\t:\R+\s) {$\ket{\psi}$};
	\foreach \x in {1,...,\d} {
		\draw (\t+360*\skip*\x/\d:\R) -- (\t+360*\skip*\x/\d:\R+\r);
		\draw[-stealth,red,thick] ($(\t+360*\skip*\x/\d-\skip*360/\d:\R)-(\a+\t+360*\skip*\x/\d-\skip*360/\d:\ddr)$)
		.. controls ($(\t+360*\skip*\x/\d-\skip*360/\d:\R)-(\a+\t+360*\skip*\x/\d-\skip*360/\d:\c)$) and ($(\t+360*\skip*\x/\d:\R)-(-\a+\t+360*\skip*\x/\d:\c)$) .. ($(\t+360*\skip*\x/\d:\R)-(-\a+\t+360*\skip*\x/\d:\ddr)$);	
		\ifthenelse{\equal{\x}{\d}}{}{\node at (\t+360*\skip*\x/\d:\R+\s) {$v^{\pgfmathparse{int(\x)}\pgfmathresult}\ket{\psi}$};}
	}
	\end{tikzpicture}\qquad
	\def\skip{3}
	\tikzsetnextfilename{3skip}
	\begin{tikzpicture}[scale=.75]
	\pgfmathsetmacro{\half}{(\d-1)/2} 
	\draw (0,0) circle (\R);
	\draw (\t:\R) -- (\t:\R+\r);
	\node at (\t:\R+\s) {$\ket{\psi}$};
	\foreach \x in {1,...,\d} {
		\draw (\t+360*\skip*\x/\d:\R) -- (\t+360*\skip*\x/\d:\R+\r);
		\draw[-stealth,red,thick] ($(\t+360*\skip*\x/\d-\skip*360/\d:\R)-(\a+\t+360*\skip*\x/\d-\skip*360/\d:\ddr)$)
		.. controls ($(\t+360*\skip*\x/\d-\skip*360/\d:\R)-(\a+\t+360*\skip*\x/\d-\skip*360/\d:\c)$) and ($(\t+360*\skip*\x/\d:\R)-(-\a+\t+360*\skip*\x/\d:\c)$) .. ($(\t+360*\skip*\x/\d:\R)-(-\a+\t+360*\skip*\x/\d:\ddr)$);	
		\ifthenelse{\equal{\x}{\d}}{}{\node at (\t+360*\skip*\x/\d:\R+\s) {$v^{\pgfmathparse{int(\x)}\pgfmathresult}\ket{\psi}$};}
	}
	\end{tikzpicture}
	\else
		\centering
		\includegraphics{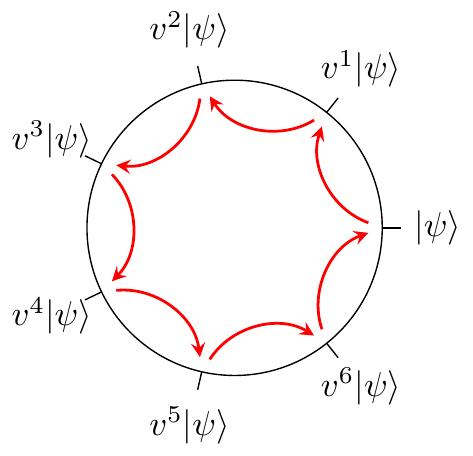}
		\qquad
		\includegraphics{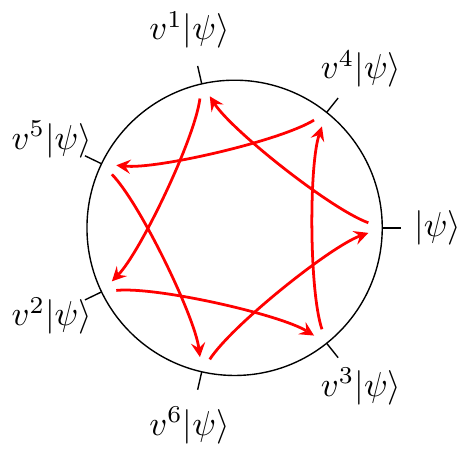}
		\qquad
		\includegraphics{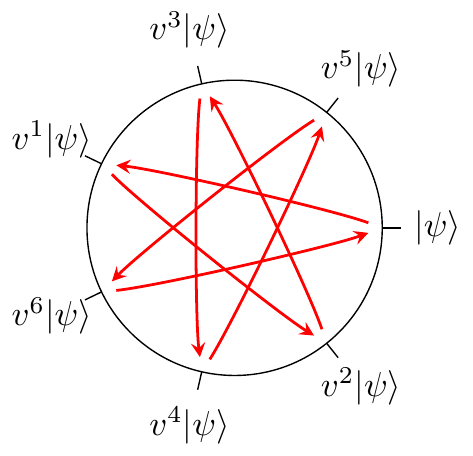}
	\fi
	\caption{The action of powers of $v$ on an eigenvector $\ket{\psi}$ of $u$. On the left $\comm{u}{v}_{1/\d}=0$, in the centre $\comm{u}{v}_{2/\d}=0$, and on the right $\comm{u}{v}_{3/\d}=0$. Here the position of the circle represents the corresponding eigenvalue of $u$.}
	\label{fig:exact}
\end{figure}

We now generalize this connection between the twisted commutator and the spectrum of one of the operators to allow for only approximate twisted commutation.

\subsection{One twisted pair}

\label{subsec:single}

Let us first extend the above argument to the case of a single approximate twisted pair. 
For simplicity, we consider the case where $\alpha = p/q$ with $p=1$ and $q=d$, so the corresponding phase in the twisted commutator is $\eta:=e^{2i\pi/d}$. 
This is not much of a restriction since if $p>1$ we can replace $v$ with $v^{\bar{p}}$ where $\bar{p}$ is the modular multiplicative inverse of $p$ such that $\bar{p} p = 1 \bmod q$ and then apply the results of the $p=1$ case. 
Under this substitution the twisted commutator will grow by at most a factor of $\lfloor q/2 \rfloor$. 
However, in Appendix~\ref{app:lowerbound} we will show an alternative method that in fact works for arbitrary $\alpha \in \mathbb{R}$ and gives tighter bounds than this simple reduction. We also consider without loss of generality the case where $u$ has at least one $+1$ eigenvalue, which can always be achieved by redefining $u$ by multiplying by a complex unit phase factor.

Suppose we have two unitaries $u$ and $v$ such that
\[ \norm{\comm{u}{v}_{1/d}}=\norm{uv-\eta vu}\leq \delta. \]
Our results will show that these operators must, for sufficiently small $\delta$, be at least $d$-dimensional. To do this we will explicitly show that $u$ has at least $d$ distinct eigenvalues.

Let $\ket{\psi}$ be a $+1$ eigenvector of $u$, \add{i.e.\ }\del{such that }$u\ket{\psi}=\ket{\psi}$. Consider the orbit of $\ket{\psi}$ under $v$, i.e.\ the states $\ket{j}:=v^j\ket{\psi}$ for $j=-\lfloor \frac{d-1}{2}\rfloor,\dots,\lceil\frac{d-1}{2}\rceil$. These vectors are precisely the vectors depicted in Figure~\ref{fig:exact}. We first show that these are approximate eigenstates of $u$. 


\begin{lem}[Change in expectation value: One pair]
	\label{lem:t-col:change1}
	The expectation value of $u$ with respect to $\ket{j}$ is approximately $\eta^j$, specifically
	\[ \abs{\braopket{j}{u}{j}-\eta^j}\leq \abs{j}\delta. \]
\end{lem}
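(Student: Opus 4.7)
The plan is to push the operator $u$ through the product $v^j$ in $\ket{j}=v^j\ket{\psi}$ using the approximate twisted commutation relation, accumulating $j$ factors of the phase $\eta$ together with a bounded error term, and then exploit the fact that $u\ket{\psi}=\ket{\psi}$ to evaluate the expectation value.

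Concretely, I would first prove by induction on $j\ge 0$ that there exists an operator $E_j$ with $\norm{E_j}\le j\delta$ satisfying $u v^j = \eta^j v^j u + E_j$. The base case $j=0$ is trivial with $E_0=0$. For the inductive step, writing $uv^j=(uv)v^{j-1}$ and substituting $uv=\eta vu + E_1$ (where $\norm{E_1}\le\delta$ by hypothesis) gives $u v^j = \eta v\,(uv^{j-1}) + E_1 v^{j-1}$, and applying the inductive hypothesis to $uv^{j-1}$ yields $u v^j = \eta^j v^j u + \eta v E_{j-1} + E_1 v^{j-1}$. The unitarity of $v$ combined with the triangle inequality then bounds the error by $\delta + (j-1)\delta = j\delta$. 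For $j<0$, multiplying $uv-\eta vu=E_1$ by $v^{-1}$ on both sides and rearranging yields the companion relation $uv^{-1}=\bar\eta v^{-1}u + F$ with $\norm{F}\le\delta$, and the same inductive argument then produces the analogous bound $\norm{E_j}\le\abs{j}\delta$ for negative powers.

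Once this telescoping identity is in place, the lemma follows immediately. Applying $uv^j=\eta^j v^j u + E_j$ to $\ket{\psi}$ and using $u\ket{\psi}=\ket{\psi}$ gives
\[
u\ket{j} = \eta^j v^j\ket{\psi} + E_j\ket{\psi} = \eta^j \ket{j} + E_j\ket{\psi},
\]
so taking the inner product with $\bra{j}$ and using $\braket{j}{j}=1$ (since $v$ is unitary and $\ket{\psi}$ is a unit vector) produces $\braopket{j}{u}{j}-\eta^j = \braopket{j}{E_j}{\psi}$. The Cauchy--Schwarz inequality then bounds the right-hand side by $\norm{E_j}\le\abs{j}\delta$, as required.

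There is no deep obstacle here; the argument is essentially bookkeeping of the errors accumulated during the inductive telescoping, and the only mild subtlety is handling the negative-$j$ case symmetrically, which is resolved by deriving the companion $uv^{-1}\approx \bar\eta v^{-1}u$ relation from the original one via multiplication by $v^{-1}$.
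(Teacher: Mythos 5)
Your proposal is correct and follows essentially the same route as the paper: both telescope the twisted commutation relation to obtain $\norm{uv^j-\eta^j v^j u}\leq \abs{j}\delta$ and then evaluate on the $+1$ eigenvector $\ket{\psi}$, the only cosmetic difference being that you apply the operator identity to $\ket{\psi}$ and use Cauchy--Schwarz where the paper bounds the matrix element of $v^{-j}uv^j-\eta^j u$ directly by its operator norm. Your explicit treatment of negative $j$ via the companion relation is a nice piece of bookkeeping that the paper leaves implicit.
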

\begin{proof}
	This follows from the twisted commutator of $u$ and $v$ being small. By expanding the commutator and applying the triangle inequality we can see that $\norm{uv-\eta vu}\leq \delta$ implies $\norm{uv^j-\eta^jv^ju}\leq \abs{j}\delta$.
	\del{\begin{align*}
		\norm{uv^j-\eta^jv^ju}
		&=\norm{\Big.\!\left(uv^j-\eta vuv^{j-1}\right)
				+\left(\eta vuv^{j-1}-\eta^2v^2uv^{j-2}\right)+\cdots
				+\left(\eta^{j-1}v^{j-1}uv-\eta^jv^ju\right)}\\
		&\leq\norm{uv^j-\eta vuv^{j-1}}
			+\norm{\eta vuv^{j-1}-\eta^2v^2uv^{j-2}}+\cdots
			+\norm{\eta^{j-1}v^{j-1}uv-\eta^jv^ju}\\
		&=\norm{\left(uv-\eta vu\right)v^{j-1}}
			+\norm{\eta v\left(uv-\eta vu\right)v^{j-2}}+\cdots
			+\norm{\eta^{j-1} v^{j-1}\left(uv-\eta vu\right)}\\
		&= \abs{j}\cdot\norm{uv-\eta vu}
	\end{align*}}
	From this we can see that the expectation value of $\ket{j}$ lies close to $\eta^j$:
	\begin{align*}
		\abs{j}\delta
		&\geq \norm{uv^j-\eta ^jv^ju}\\
		&= \norm{v^{-j}uv^j-\eta ^ju}\\
		&\geq \abs{\braopket{\psi}{\left[v^{-j}uv^j-\eta ^ju\right]}{\psi}}\\
		&\geq \abs{\braopket{\psi}{v^{-j}uv^j}{\psi}-\eta^j\braopket{\psi}{u}{\psi}}\\
		&=\abs{\braopket{j}{u}{j}-\eta^j}.
	\end{align*}
\end{proof}

So we can see that the $\lbrace \ket{j}\rbrace$ form a set of vectors with expectation values distributed approximately evenly around the unit circle, much like the states in the $\delta=0$ case as seen in \cref{fig:exact}. To relate these states to the dimensions of $u$ and $v$, we will now show that there must exist an eigenvalue of $u$ near the expectation value of each state.

\begin{lem}[Existence of eigenvalues]
	\label{lem:t-col:eigen1}
	If there exists a state $\ket{x}$ such that
	\[ \abs{\braopket{x}{u}{x}-e^{i\theta}}\leq \zeta \]
	then $u$ possesses a nearby eigenvalue $e^{i\phi}$ such that
	\[ \abs{\phi-\theta}\leq \cos^{-1}(1-\zeta). \]
\end{lem}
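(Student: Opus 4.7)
The plan is to reduce everything to the spectral decomposition of the unitary $u$ and then invoke a simple convex-combination argument. Write $u=\sum_k e^{i\phi_k}\ket{e_k}\bra{e_k}$, so that
\[
\braopket{x}{u}{x}=\sum_k p_k\, e^{i\phi_k},\qquad p_k:=\abs{\braket{e_k}{x}}^2\geq 0,\qquad \sum_k p_k=1.
\]
Thus $\braopket{x}{u}{x}$ is a convex combination of points on the unit circle, and our task is to show that when this convex combination lies within distance $\zeta$ of $e^{i\theta}$, at least one of the participating points $e^{i\phi_k}$ must lie within angular distance $\cos^{-1}(1-\zeta)$ of $e^{i\theta}$.

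Without loss of generality I would take $\theta=0$, either by an explicit rotation argument or by replacing $u$ with the unitary $e^{-i\theta}u$, whose eigenphases are shifted by $-\theta$ and whose expectation value on $\ket{x}$ is shifted by the same phase. Then the hypothesis becomes $\abs{z-1}\leq\zeta$ where $z:=\braopket{x}{u}{x}$. Taking real parts of $z-1$ and using $\abs{z-1}\geq \abs{\Re(z-1)}=1-\Re z$ (valid since $\abs{z}\leq1$), I would deduce $\Re z\geq 1-\zeta$.

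The next step is the one-line convex-combination observation: since $\Re z=\sum_k p_k \cos\phi_k\geq 1-\zeta$ with the $p_k$ forming a probability distribution, at least one index $k$ in the support must satisfy $\cos\phi_k\geq 1-\zeta$, for otherwise every term in the sum would be strictly less than $1-\zeta$ and the average could not reach that value. Rewriting this bound as $\abs{\phi_k}\leq\cos^{-1}(1-\zeta)$ (choosing the branch with $\phi_k\in(-\pi,\pi]$) and undoing the rotation by $\theta$ yields an eigenvalue $e^{i\phi}$ of $u$ with $\abs{\phi-\theta}\leq\cos^{-1}(1-\zeta)$, as claimed.

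I do not foresee a genuine obstacle here; the only care required is the small geometric fact that $\abs{z-1}\geq 1-\Re z$ for $\abs{z}\leq 1$, and the observation that a convex combination of real numbers cannot exceed the maximum of those numbers. One minor cleanliness point is the branch choice for $\cos^{-1}$: since the relevant angles live on the unit circle, one should pick the principal branch $\cos^{-1}:[-1,1]\to[0,\pi]$ so that the inequality $\abs{\phi-\theta}\leq\cos^{-1}(1-\zeta)$ is naturally interpreted modulo $2\pi$.
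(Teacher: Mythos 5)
Your proposal is correct and follows essentially the same route as the paper: both deduce $\Re\braopket{x}{e^{-i\theta}u}{x}\geq 1-\zeta$ from the hypothesis and then use the fact that this expectation is a convex combination of the eigenvalues (points on the unit circle) to extract an eigenvalue with $\cos(\phi-\theta)\geq 1-\zeta$. Your write-up just makes the spectral decomposition and the bound $\abs{z-1}\geq 1-\Re z$ explicit, which the paper leaves implicit.
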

\begin{proof}
	The bound on the expectation value with respect to $u$ implies
	\[ \Re \braopket{x}{e^{-i\theta}u}{x}\geq 1-\zeta. \]
	As this expectation value is a convex combination of the eigenvalues of $u$, all of which lie on the unit circle, there must exists an eigenvalue of $e^{-i\theta}u$ with real value at least $1-\zeta$ (see \cref{figu:Eigen}). This in turn implies that $u$ possesses an eigenvalue $e^{i\phi}$ such that
	\[ \Re e^{i(\phi-\theta)}=\cos(\phi-\theta)\geq 1-\zeta. \]
\end{proof}

\begin{figure}[t!]
	\centering
	\ifcompile
	\tikzsetnextfilename{Eigen}
	\def\a{20}
	\begin{tikzpicture}[scale=2,rotate=40]
	\begin{scope}
		\clip (0,0) circle (1);
		\coordinate (a) at (\a:1);
		\node[draw=white,fill=blue!25] at (0:1) [circle through={(a)}] {};
		\draw[dashed] ({1-sin(\a)},-1) -- +(0,+2);
	\end{scope}
	\draw (0:1.05) -- (0:1.15) (\a:1.05) -- (\a:1.15) (0:1.1) arc(0:\a:1.1);
	\node at (\a/2:1.2) {$\zeta$};
	\node at (0:.88) {$\scriptstyle\theta$};
	\draw circle (1);
	\fill (0:1) circle (.025);
	\end{tikzpicture}
	\else
	\includegraphics{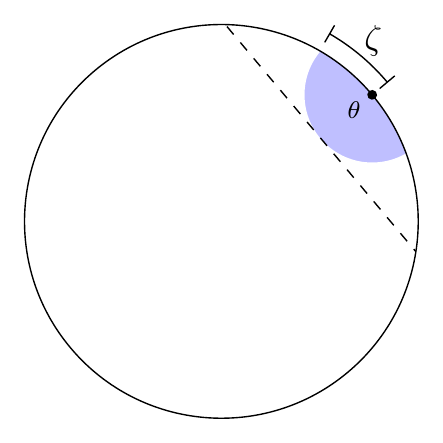}
	\fi
	\caption{\cref{lem:t-col:eigen1} gives that if there exists an expectation value in the blue region, there must exist an eigenvalue within the minor segment indicated by the dotted line.}
	\label{figu:Eigen}
\end{figure}

Combining the two above lemmas, we can \add{place a }lower bound\add{ on} the number of distinct eigenvalues of $u$.

\begin{thm2}
	If $u$ and $v$ are unitaries such that
	\[ \norm{\comm{u}{v}_{1/d}}<\frac{2}{d-1}\Bigl[1-\cos\pi/d\Bigr],  \]
	then the dimension of each operator is at least $d$.
\end{thm2}
\begin{proof}
	From \cref{lem:t-col:change1} we know that $\abs{\braopket{j}{u}{j}-e^{2i\pi j/d}}\leq \abs{j}\delta$. Applying \cref{lem:t-col:eigen1} we therefore get that $u$ must have a corresponding eigenvalue $e^{i\phi_j}$ where
	\[ \abs{\phi_j-2j\pi/d}\leq \cos^{-1}\left(1-\abs{j}\delta\right). \]
	As such we can see that each eigenvalue is within some error of a $d$th root of unity. 
	
	Next we want to find a bound for $\delta$ which ensures that these eigenvalues must be distinct, by bounding the regions in which these eigenvalues must exist away from each other. To do this we need $\abs{\phi_j-\phi_k}>0$ for all $j\neq k$. 
Taking the worst case over $j\neq k$:
	\begin{align*}
		\abs{\phi_j-\phi_k}
		&=\abs{\frac{2\pi}{d}(j-k)+\left(\phi_j-\frac{2j\pi}{d}\right)-\left(\phi_k-\frac{2k\pi}{d}\right)}\\
		&\geq\frac{2\pi}{d}\abs{j-k}-\abs{\phi_j-\frac{2j\pi}{d}}-\abs{\phi_k-\frac{2k\pi}{d}}\\
		&\geq\frac{2\pi}{d}-\cos^{-1}\left(1-\left\lceil\frac{d-1}{2}\right\rceil\delta\right)-\cos^{-1}\left(1-\left\lfloor\frac{d-1}{2}\right\rfloor\delta\right)\,.
	\end{align*} 
Here the last line follows from the fact that $j$ and $k$ cannot both saturate the worst-case distance of $\lceil \tfrac{d-1}{2} \rceil$. 
Therefore, the worst case can be chose without loss of generality to be $j = \lceil \tfrac{d-1}{2} \rceil$ and $k = - \lfloor \tfrac{d-1}{2} \rfloor$.
Using the concavity of $\cos^{-1}(z)$ over $z\in[0,1]$, we can loosen this to
\begin{align*}
	\abs{\phi_j-\phi_k}&\geq\frac{2\pi}{d}-2\cos^{-1}\left(1-\frac{d-1}{2}\,\delta\right)\,.
\end{align*}
Clearly this step is trivial for odd $d$.
	
Thus we get that a sufficient condition for all of the eigenvalues to be distinct is that the right-hand side of this inequality is strictly positive, and therefore we have the equivalent condition
\[ \cos^{-1}\left(1-\frac{d-1}{2}\,\delta\right)< \frac{\pi}{d}\,. \]
Rearranging, we find the specified bound on $\delta$ of
\[ \delta<\frac{2}{d-1}\Bigl[1-\cos(\pi/d)\Bigr]. \] 
	
%
%
%
%
%
\end{proof}

Above we have only considered the case $d=1/\alpha$, similar analysis could be performed for bounds required to certify dimensions $d' \not= 1/\alpha$. In \cref{app:lowerbound} we describe an algorithm for calculating which dimensions can be certified for an arbitrary pair of parameters $\alpha$ and $\delta$ --- running this algorithm gives \cref{fig:mountains}.

\begin{figure}
	\centering
	\ifcompile
	\tikzexternalenable
	\setlength{\figheight}{0.5\textwidth}
	\setlength{\figwidth}{0.8\textwidth}
	\tikzsetnextfilename{Mountains2}
	\input{Mountains2.tex}
	\else
	\includegraphics{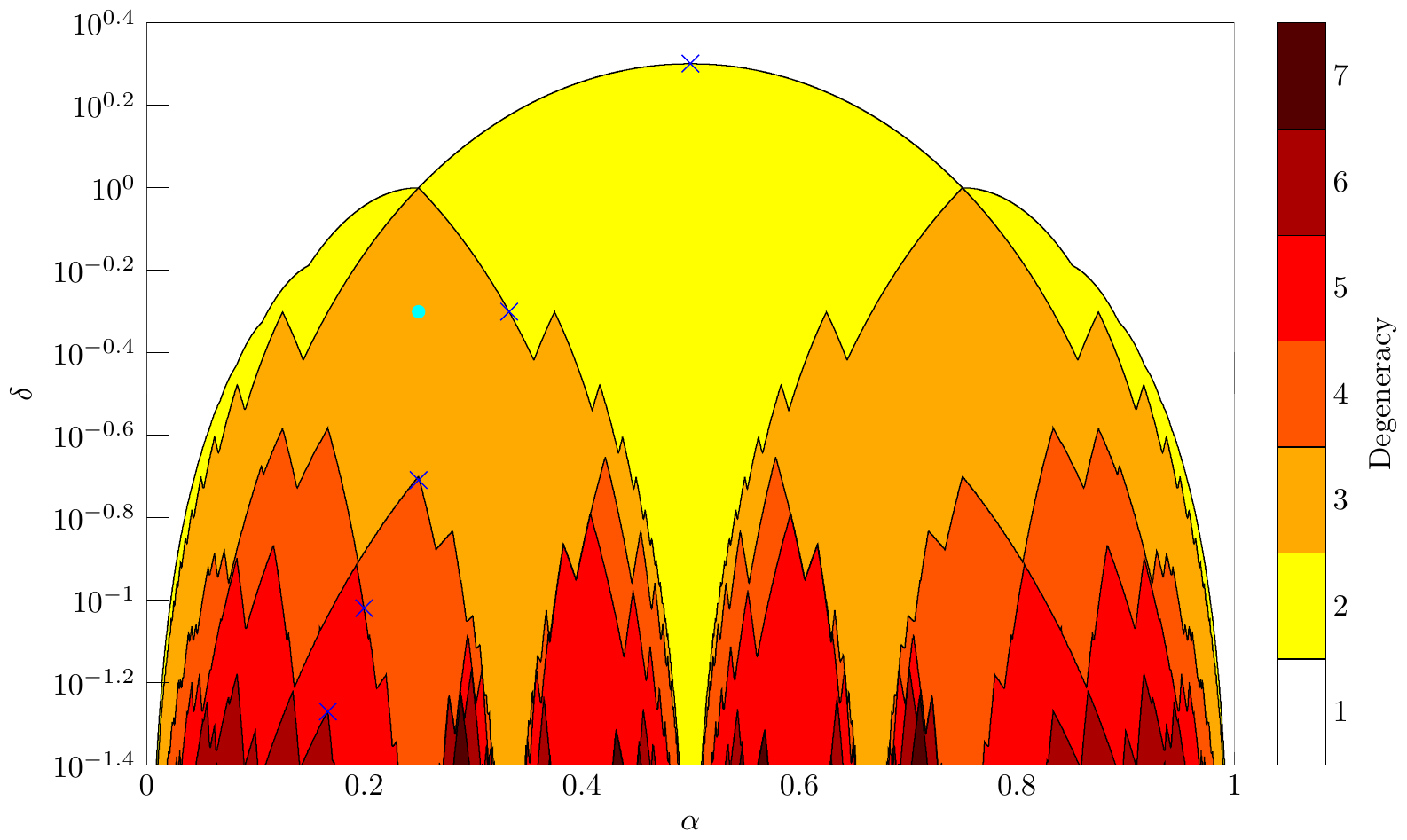}
	\fi
	\caption{The dimension that can be certified, as a function of the twisted 
	commutator value and twisting parameter\add{, i.e.\ the minimum possible 
	dimension of operators $u$ and $v$ for which $\norm{\twicomm{u}{v}}\leq 
	\delta$ as a function of $\alpha$ and $\delta$}. The blue crosses indicate 
	the \add{bounds corresponding to degeneracy $d$ and $\alpha=1/d$, 
	as}\del{level of 
	certification} considered in 
	\add{\cref{thm:single}}\del{\cref{subsec:single}}. The algorithm for 
	calculating this figure is demonstrated in \cref{app:lowerbound}, by 
	considering how the certification is calculated at the turquoise dot 
	($\alpha=1/4$ and $\delta=1/2$).}
	\label{fig:mountains}
\end{figure} 

\subsection{Two twisted pairs}
\label{subsec:double}

Next we are going to argue that the above analysis can be extended to more general collections of twisted commuting symmetries. By way of example, we are going to consider the case of two twisted pairs
\[  \norm{\comm{u_1}{v_1}_{1/d_1}}\leq \delta,\qquad\norm{\comm{u_2}{v_2}_{1/d_2}}\leq \delta, \]
each of which approximately commute
\[  \norm{\comm{u_1}{u_2}}\leq \gamma, \qquad\norm{\comm{u_1}{v_2}}\leq \delta,\qquad \norm{\comm{u_2}{v_1}}\leq \delta. \]

The equivalent of Stone-von Neumann theorem laid our in \cref{subsec:exact} gives that for $\gamma=\delta=0$, the dimension of such operators must be a multiple of $d_1d_2$. We are going to give bounds on $\gamma$ and $\delta$ below which we can prove the dimension to be at least $d_1d_2$. 

Previously we bounded the dimension from below by bounding the number of distinct eigenvalues. 
This is possible because these eigenvalues imply the existence of an orthonormal set of associated eigenvectors. As $u_1$ and $u_2$ do not commute, they will not necessarily possess an orthonormal set of shared eigenvectors. Instead we will have to address these vectors more directly, constructing \emph{approximate shared eigenvectors} and proving their linear independence. First we will see that the approximate commutation of $u_1$ and $u_2$ can be used to demonstrate the existence of such a vector.

The existence of approximate shared eigenvectors of approximately commuting matrices was first proven in generality by Bernstein in Ref.~\cite{Bernstein1971}. Whilst Bernstein considers potentially non-normal matrices, in our case both $u_1$ and $u_2$ are unitary. In \cref{app:approxeigen} we leverage this additional structure to exponentially tighten the bounds on the approximate shared eigenvectors. One of the relevant bounds considered in \cref{app:approxeigen} gives the following immediate corollary.
\begin{lem}[Approximate eigenvector]\label{lem:t-col:startvec2}
	There exists a vector $\ket{\psi}$ such that, after multiplying $u_1$ and $u_2$ by appropriate phase factor, it is an approximate shared $+1$-eigenvector of both, namely that
	\[ \norm{u_1\ket{\psi}-\ket{\psi}},\norm{u_2\ket{\psi}-\ket{\psi}}\leq \sqrt{\gamma}d_1d_2/2. \]
\end{lem}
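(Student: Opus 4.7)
The plan is to derive this statement as a direct corollary of the refined approximate shared eigenvector bound proved in \cref{app:approxeigen}, which improves Bernstein's classical result~\cite{Bernstein1971} from exponential to linear dimension dependence by exploiting normality. Since $u_1$ and $u_2$ are unitary, hence in particular normal, and satisfy $\norm{\comm{u_1}{u_2}}\leq \gamma$, the appendix produces a unit vector $\ket{\phi}$ together with unimodular scalars $\lambda_1,\lambda_2$ for which $\norm{u_i\ket{\phi}-\lambda_i\ket{\phi}}\leq \tfrac{1}{2}\sqrt{\gamma}\,g$ for $i=1,2$, where $g$ denotes the ambient dimension.

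Next I would absorb the $\lambda_i$ into global phase rotations of the unitaries. Setting $\ket{\psi}:=\ket{\phi}$ and replacing each $u_i$ by $\bar\lambda_i u_i$ keeps the operators unitary and, by direct computation, leaves every (twisted) commutator norm appearing in the hypotheses of \cref{thm:double} unchanged, since pulling a unimodular scalar across either $\comm{\cdot}{\cdot}$ or $\twicomm{\cdot}{\cdot}$ merely rescales it by a unit phase. In the rotated frame $\ket{\psi}$ is an approximate $+1$-eigenvector of both $u_1$ and $u_2$ with the same error. To match the precise bound stated in the lemma I would note that this claim is only ever invoked during the proof of \cref{thm:double}, where the conclusion $g\geq d_1d_2$ is either trivially satisfied or we may restrict attention to $g<d_1d_2$, in which case $\tfrac{1}{2}\sqrt{\gamma}\,g\leq \tfrac{1}{2}\sqrt{\gamma}\,d_1d_2$ as required.

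The substantive work sits in \cref{app:approxeigen}, not in the present lemma: extracting a bound linear rather than exponential in $g$ requires the detailed normality argument carried out there. Once that result is in hand, the reduction to the form stated above is essentially a phase rotation together with a dimension comparison, and the only subtlety I would anticipate is keeping the phase conventions consistent with the twisting parameters $1/d_1$ and $1/d_2$ so that none of the five commutation or twisted commutation relations in \cref{thm:double} degrade under the rotation.
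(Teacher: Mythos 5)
Your proposal follows essentially the same route as the paper's own proof, which simply invokes \cref{appthm:fixed} (unitaries being normal) under the assumption that the dimension is at most $d_1d_2$ — exactly your reduction, with the phase absorption and the invariance of the (twisted) commutator norms under unimodular rescaling left implicit there. The only quibble is that the appendix as retained gives $n\sqrt{\gamma/2}$ (or $n\sqrt{\gamma}$ in the both-normal, exact-eigenvalue form) rather than the $\tfrac{1}{2}\sqrt{\gamma}\,n$ you quote, but this constant-factor bookkeeping mismatch is present in the paper's own statement of the lemma as well, so it does not distinguish your argument from theirs.
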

\begin{proof}
	Given an assumption that the dimension is at most $d_1d_2$, this is a direct application of \cref{appthm:fixed}, which we consider in detail in \cref{app:approxeigen}.	
	%
	%
\end{proof}

As in the case of a single pair, we will then consider the orbit of this vector under the action of products of $v_1$ and $v_2$. Let $\ket{i,j}:=v_1^iv_2^j\ket{\psi}$ for $i=-\left\lfloor\frac{d_1-1}{2}\right\rfloor,\ldots,\left\lceil\frac{d_1-1}{2}\right\rceil$ and $j=-\left\lfloor\frac{d_2-1}{2}\right\rfloor,\ldots,\left\lceil\frac{d_2-1}{2}\right\rceil$. For convenience once again let $\eta_i:=e^{2i\pi/d_i}$.

\begin{lem}[Change in expectation value: two pair]\label{lem:t-col:change2}
	The states $\ket{i,j}$ are shared approximate eigenstates of $u_1$ and $u_2$. Specifically their approximate eigenvalues are the corresponding powers of $\eta_1$ and $\eta_2$
	\[ \abs{\phantom{\Big.}\!\braopket{i,j}{u_1}{i,j}-\eta_1^i},\abs{\braopket{i,j}{u_2}{i,j}-\eta_2^j}\leq \sqrt{\gamma}d_1d_2/2+\left(\abs{i}+\abs{j}\right)\delta. \]
\end{lem}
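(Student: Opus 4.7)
The plan is to rewrite the expectation value so that $u_1$ acts directly on $\ket{\psi}$, where its action is controlled by \cref{lem:t-col:startvec2}, and then account for the commutation errors incurred along the way. By definition,
\[
\braopket{i,j}{u_1}{i,j}=\braopket{\psi}{v_2^{-j}v_1^{-i}\,u_1\,v_1^iv_2^j}{\psi}\,,
\]
so the task reduces to estimating how much $u_1$ differs from $\eta_1^i\,v_1^iv_2^j u_1 v_2^{-j}v_1^{-i}$ (up to factors of $\eta_1^i$) after being commuted through $v_1^iv_2^j$.

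First I would handle the twisted commutation with $v_1$. Exactly as in the telescoping step used in \cref{lem:t-col:change1}, the bound $\norm{\comm{u_1}{v_1}_{1/d_1}}\leq \delta$ gives $\norm{u_1 v_1^i-\eta_1^i v_1^i u_1}\leq\abs{i}\delta$. Next, an analogous telescoping using the ordinary commutator bound $\norm{\comm{u_1}{v_2}}\leq\delta$ gives $\norm{u_1 v_2^j-v_2^j u_1}\leq\abs{j}\delta$. Combining these two estimates by inserting and adding/subtracting the intermediate operator $\eta_1^i v_1^i u_1 v_2^j$ and using the triangle inequality yields
\[
\norm{u_1v_1^iv_2^j-\eta_1^i v_1^iv_2^j u_1}\leq (\abs{i}+\abs{j})\delta\,.
\]

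Multiplying by $v_2^{-j}v_1^{-i}$ on the left (which is unitary and thus norm preserving) and taking matrix elements with $\ket{\psi}$ gives
\[
\abs{\braopket{i,j}{u_1}{i,j}-\eta_1^i\braopket{\psi}{u_1}{\psi}}\leq (\abs{i}+\abs{j})\delta\,.
\]
Finally I would apply \cref{lem:t-col:startvec2}: the Cauchy--Schwarz bound gives $\abs{\braopket{\psi}{u_1}{\psi}-1}\leq\norm{u_1\ket{\psi}-\ket{\psi}}\leq \sqrt{\gamma}d_1d_2/2$, so that $\abs{\eta_1^i\braopket{\psi}{u_1}{\psi}-\eta_1^i}\leq \sqrt{\gamma}d_1d_2/2$. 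A final triangle inequality combines these two bounds to deliver the claim for $u_1$; the argument for $u_2$ is completely symmetric (the roles of the twisted and ordinary commutators are swapped between $v_1$ and $v_2$, but each contributes at most $\abs{i}\delta$ or $\abs{j}\delta$).

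There is no real obstacle here — the argument is a direct two-variable generalization of \cref{lem:t-col:change1}, with the only new ingredient being that the approximate eigenvector of $u_1,u_2$ from \cref{lem:t-col:startvec2} replaces the exact eigenvector used in the one-pair case, which is precisely what contributes the extra $\sqrt{\gamma}d_1d_2/2$ term on top of the linear-in-$\abs{i}+\abs{j}$ commutator cost.
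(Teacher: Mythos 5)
Your proof is correct and follows essentially the same route as the paper: telescope the twisted commutator with $v_1$ and the ordinary commutator with $v_2$ (as in \cref{lem:t-col:change1}), then add the approximate-eigenvector bound from \cref{lem:t-col:startvec2} via the triangle inequality. If anything, your operator-level decomposition $u_1v_1^iv_2^j-\eta_1^iv_1^iv_2^ju_1=(u_1v_1^i-\eta_1^iv_1^iu_1)v_2^j+\eta_1^iv_1^i(u_1v_2^j-v_2^ju_1)$ is slightly cleaner than the paper's terse chain through the intermediate quantity $\braopket{i,0}{u_1}{i,0}$, since it only ever invokes the commutators actually bounded in the hypotheses.
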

\begin{proof}
	From \cref{lem:t-col:startvec2} we have
	\[ \abs{\braopket{\psi}{u_1}{\psi}-1}\leq\sqrt{\gamma}d_1d_2/2. \]
	Applying an argument similar to that in \cref{lem:t-col:change1} we can bound the change in eigenvalue under the action of $v_c$ as
	\[ \abs{\braopket{i,0}{u_1}{i,0}-\eta_1 ^i\braopket{\psi}{u_1}{\psi}}\leq \abs{i}\delta. \]
	Applying the same argument for $v_2$ gives
	\[ \abs{\braopket{i,j}{u_1}{i,j}-\braopket{i,0}{u_1}{i,0}}\leq \abs{j}\delta. \]
	The triangle inequality allows us to merge these three inequalities, giving the stated bound. A similar argument can be performed for $u_2$.
\end{proof}

In the single pair case, we used the expectation values to imply the existence of nearby eigenvalues. Due to the lack of a shared eigenbasis of $u_1$ and $u_2$, we cannot do the same in the two pair case. 

The reason that a set of distinct eigenvalues lower bounds the dimension is that, for normal operators such as unitaries, the eigenvalues imply the existence of an orthonormal eigenbasis. Instead of proving the existence of such vectors indirectly through the eigenvalues, we could instead prove our vectors $\lbrace\ket{i,j}\rbrace$ to be linearly independent --- this is the approach we will take.

To this end, we will start by showing two approximate eigenvectors of a unitary with inconsistent expectation values are approximately orthogonal.

\begin{lem}[Low overlap]\label{lem:t-col:lowoverlap2}
	If two normalized vectors $\ket{x}$ and $\ket{y}$ have expectation values with some unitary $w$ such that
	\[\abs{\braopket{x}{w}{x}-e^{i\theta_x}}\leq\zeta \qquad \text{and}\qquad\abs{\braopket{y}{w}{y}-e^{i\theta_y}}\leq\zeta \]
	then the two vectors have a bounded overlap
	\[\abs{\braket{x}{y}}\leq\sqrt{2\zeta}\abs{\csc\left(\frac{\theta_y-\theta_x}{4}\right)}. \]
\end{lem}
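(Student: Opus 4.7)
The plan is to exploit the eigendecomposition of the unitary $w$. Since $w$ is normal, we may write $w = \sum_k e^{i\phi_k}\ket{e_k}\bra{e_k}$ in an orthonormal basis and expand $\ket{x} = \sum_k \alpha_k\ket{e_k}$ and $\ket{y} = \sum_k \beta_k\ket{e_k}$ with $\sum_k\abs{\alpha_k}^2 = \sum_k\abs{\beta_k}^2 = 1$. The hypothesis $\abs{\braopket{x}{w}{x} - e^{i\theta_x}} \leq \zeta$ becomes $\abs{\sum_k\abs{\alpha_k}^2(e^{i(\phi_k-\theta_x)}-1)} \leq \zeta$; taking the real part and invoking the half-angle identity $1-\cos\theta = 2\sin^2(\theta/2)$ converts this to a Chebyshev-type tail bound
\[ \sum_k\abs{\alpha_k}^2\sin^2\!\left(\tfrac{\phi_k-\theta_x}{2}\right) \leq \tfrac{\zeta}{2}, \]
and an analogous estimate holds for $\beta_k$ with $\theta_y$ in place of $\theta_x$.

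Next I would partition the indices into two sets $A$ and $B$ by placing $k\in A$ when $\phi_k$ lies on the arc closer to $\theta_x$ than to $\theta_y$ (using the short arc between them), and $k\in B$ otherwise. For any $\phi_k\in B$ the short-arc distance from $\phi_k$ to $\theta_x$ is at least $\abs{\theta_y-\theta_x}/2$, and since $\sin^2(t/2)$ is monotonically increasing in the short-arc distance $t\in[0,\pi]$ we obtain $\sin^2((\phi_k-\theta_x)/2)\geq \sin^2((\theta_y-\theta_x)/4)$. Substituting this lower bound into the previous weighted estimate yields
\[ \sum_{k\in B}\abs{\alpha_k}^2 \leq \frac{\zeta}{2\sin^2((\theta_y-\theta_x)/4)}, \]
and running the symmetric argument with $\ket{y}$ gives the matching bound on $\sum_{k\in A}\abs{\beta_k}^2$.

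Finally I would write $\braket{x}{y} = \sum_{k\in A}\bar\alpha_k\beta_k + \sum_{k\in B}\bar\alpha_k\beta_k$ and apply Cauchy--Schwarz to each piece, pairing the trivial bound $\sum_{k\in A}\abs{\alpha_k}^2\leq 1$ with the nontrivial bound on $\sum_{k\in A}\abs{\beta_k}^2$, and symmetrically for $B$. Each piece is then at most $\sqrt{\zeta/2}/\abs{\sin((\theta_y-\theta_x)/4)}$, and summing gives the claimed $\sqrt{2\zeta}\,\abs{\csc((\theta_y-\theta_x)/4)}$. The main subtlety is choosing the partition so that the ``bad'' tail bound can be pitted against the ``good'' unit-norm bound in each half of the Cauchy--Schwarz; the angle-bisector partition at $(\theta_x+\theta_y)/2$ achieves exactly this symmetry, and the factor of $4$ in the denominator (rather than a $2$) is the price for using this clean splitting instead of a sharper operator-theoretic bound via $\|(w-e^{i\theta_x})\ket{x}\|$.
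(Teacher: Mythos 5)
Your proposal is correct and its essential estimate coincides with the paper's: both pass to the spectral decomposition of $w$, use $1-\cos t = 2\sin^2(t/2)$ on the real part of the hypothesis, and deduce that the spectral weight of $\ket{x}$ lying at angular distance at least half the separation from $\theta_x$ is at most $(\zeta/2)\csc^2\!\left(\tfrac{\theta_y-\theta_x}{4}\right)$ (this is exactly the paper's bound on $\lambda_x$, and likewise for $\ket{y}$). Where you genuinely differ is the combination step: the paper cuts the circle into three arcs $X$, $Y$, $Z$, decomposes each vector explicitly into components on these arcs (introducing auxiliary angles $\varphi_x,\varphi_y$) and massages the resulting inner product down to $\abs{\braket{x}{y}}\le\sqrt{\lambda_x}+\sqrt{\lambda_y}$, whereas you bisect the spectrum into the two half-arcs $A$ and $B$ and apply Cauchy--Schwarz blockwise, pairing the unit-norm bound on one vector with the tail bound on the other in each block. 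Your bookkeeping is cleaner and yields the identical constant $\sqrt{2\zeta}\,\abs{\csc\!\left(\tfrac{\theta_y-\theta_x}{4}\right)}$. One caveat, which you share with the paper rather than introduce: the argument really controls everything through the short-arc (circular) distance between $\theta_x$ and $\theta_y$, so the stated bound should be read with $\theta_y-\theta_x$ reduced to $[-\pi,\pi]$; for $\abs{\theta_y-\theta_x}$ near $2\pi$ the literal inequality can fail (take $\ket{x}=\ket{y}$ an eigenvector with phase between the two angles), and in the paper's application this is harmless only because the worst case over pairs is the minimal separation anyway.
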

\begin{proof}
	Firstly, let $w':=e^{-i\theta_x}w$ and $\theta:=\theta_y-\theta_x$. Next consider splitting the unit circle into three arcs $X$, $Y$, and $Z$. We let $X$ and $Y$ be centered on $\theta_x$ and $\theta_y$ respectively, and define them to be the largest possible regions such that they remain disjoint. We define $Z$ to be the remaining arc, as shown in \cref{fig:Overlap}. Note that by convexity any linear combination of eigenvectors whose eigenvalues lie in $X$ will have an expectation value in the segment subtended by $X$, and similar for $Y$.
	
	\begin{figure}[t!]
		\centering
		\ifcompile
		\tikzsetnextfilename{OverlapDiagram}
		\def\t{130}
		\def\a{15}
		\begin{tikzpicture}[scale=2,rotate=-30]
		\draw[dashed] (-\t/2:1) -- (\t/2:1) -- (3*\t/2:1);
		\node at (180+\t/2:-.25) {$Z$};
		\node at (0:.6) {$X$};
		\node at (\t:.6) {$Y$};
		\begin{scope}
		\clip (0,0) circle (1);
		\fill[fill=blue!25] (0:1) circle (0.26);
		\fill[fill=blue!25] (\t:1) circle (0.27);
		\end{scope}
		\draw (0:1.05) -- (0:1.15) (\a:1.05) -- (\a:1.15) (0:1.1) arc(0:\a:1.1);
		\draw[rotate=\t-\a] (0:1.05) -- (0:1.15) (\a:1.05) -- (\a:1.15) (0:1.1) arc(0:\a:1.1);
		\node at (\a/2:1.2) {$\zeta$};
		\node at (\t-\a/2:1.2) {$\zeta$};
		\node at (0:.88) {$\scriptstyle\theta_x$};
		\node at (\t:.88) {$\scriptstyle\theta_y$};
		\draw circle (1);
		\fill (0:1) circle (.025) (\t:1) circle (.025);
		\end{tikzpicture}
		\else
		\includegraphics{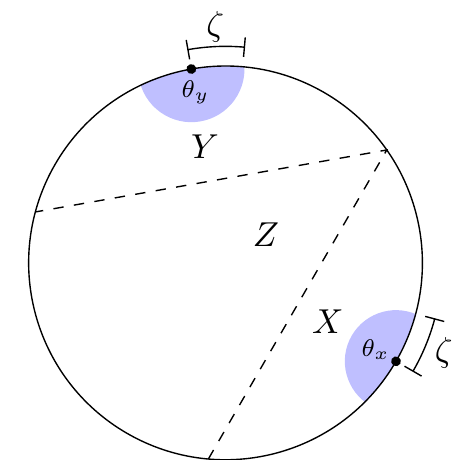}
		\fi
		\caption{A disc representing the expectation values of vectors with respect to $w$, as well as the three regions $X$, $Y$, $Z$ into which the disc is divided. The expectation values with respect to $\ket{x}$ and $\ket{y}$ lie in each of the blue regions.}
		\label{fig:Overlap}
	\end{figure}
	
	Now split $\ket{x}$ into two components 
	\[ \ket{x}=\sqrt{1-\lambda_x}\ket{x_{X}}+\sqrt{\lambda_x}\ket{x_{YZ}}, \]
	where $\ket{x_X}$ is in the span of eigenvectors with values in $X$, and $\ket{x_{YZ}}$ similar for $Y\cup Z$. By definition of $X$, we have
	\[ \Re \braopket{x_{YZ}}{w'}{x_{YZ}}\leq \cos(\theta/2)\leq \Re \braopket{x_X}{w'}{x_X}\leq 1. \]
	Next we use the bound on the expectation value.
	\begin{align*}
		\zeta
		&\geq \abs{\braopket{x}{w'}{x}-1}\\
		&\geq 1-\Re\braopket{x}{w'}{x}\\
		&= 1-(1-\lambda_x)\Re\braopket{x_X}{w'}{x_X}-\lambda_x\Re\braopket{x_{YZ}}{w'}{x_{YZ}}\\
		&\geq 1-(1-\lambda_x)-\lambda_x\cos\left(\theta/2\right)\\
		&=2\lambda_x\sin^2(\theta/4)
	\end{align*}
	Thus we conclude that $\lambda_x \leq (\zeta/2)\csc^2(\theta/4)$. Similarly if we were to have decomposed $\ket{y}$ into parts contained in $Y$ and $XZ$ as
	$\ket{y}=\sqrt{1-\lambda_y}\ket{y_{Y}}+\sqrt{\lambda_y}\ket{y_{XZ}}$
	then $\lambda_y \leq (\zeta/2)\csc^2(\theta/4)$. 
	
	Further decomposing 
	\[ \ket{x_{YZ}}=\cos\varphi_x\ket{x_Y}+\sin\varphi_x\ket{x_Z}\qquad \ket{y_{XZ}}=\cos\varphi_y\ket{y_X}+\sin\varphi_y\ket{y_Z}\,, \]
	then the inner product has the form
	\begin{align*}
		\abs{\braket{x}{y}}
		&=\Bigl|\sqrt{1-\lambda_x}\sqrt{\lambda_y}\cos\varphi_y\braket{x_X}{y_{X}}
		+\sqrt{\lambda_x}\sqrt{1-\lambda_y}\cos\varphi_x\braket{x_{Y}}{y_Y}
		+\sqrt{\lambda_x}\sqrt{\lambda_y}\sin\varphi_x\sin\varphi_y\braket{x_{Z}}{y_{Z}}\Bigr|\\
		&\leq \sqrt{1-\lambda_x}\sqrt{\lambda_y}\cos\varphi_y
		+\sqrt{\lambda_x}\sqrt{1-\lambda_y}\cos\varphi_x
		+\sqrt{\lambda_x}\sqrt{\lambda_y}\sin\varphi_x\sin\varphi_y.
	\end{align*}
	Using the identity $\abs{A\cos\phi+B\sin \phi}^2\leq \abs{A}^2+\abs{B}^2$, we can maximize over $\varphi_x$ to get
	\begin{align*}
		\abs{\braket{x}{y}}&
		\leq \sqrt{1-\lambda_x}\sqrt{\lambda_y}\cos\varphi_y
		+\sqrt{\lambda_x}\sqrt{1-\lambda_y\cos^2\varphi_y}.
	\end{align*}	
	Using $\cos\varphi_y\leq 1$, we can simplify this bound to
	\begin{align*}
	\abs{\braket{x}{y}}&
	\leq \sqrt{\lambda_y}
	+\sqrt{\lambda_x}.
	\end{align*}
	Applying the $\zeta$-dependent bounds on the $\lambda$ values, we get the stated bounds.
\end{proof}

Now that we have a way of bounding the overlap between our vectors, we need to determine how low this overlap needs to be before linear independence can be ensured.

\begin{lem}[Overlap threshold]\label{lem:t-col:threshold2}
	Take a set of normalized vectors $S=\lbrace \ket{v_i}\rbrace$ for $1\leq i \leq n$. If the pairwise overlap between any two vectors is bounded $\abs{\braket{v_i}{v_j}}<1/(n-1)$ for $i\neq j$, then $S$ is linearly independent. 
\end{lem}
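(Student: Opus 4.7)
The natural approach is to study the Gram matrix $G \in \mathbb{C}^{n \times n}$ with entries $G_{ij} := \braket{v_i}{v_j}$, since linear independence of $S$ is equivalent to $G$ being nonsingular (indeed, $G$ is positive semidefinite, so it is nonsingular if and only if it is positive definite, if and only if $0$ is not an eigenvalue). The hypothesis gives us diagonal entries $G_{ii} = 1$ (since the $\ket{v_i}$ are normalized) and strict off-diagonal bounds $\abs{G_{ij}} < 1/(n-1)$ for $i \ne j$, so the matrix is strongly diagonally dominant.

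The plan is then to invoke Gershgorin's circle theorem: every eigenvalue $\lambda$ of $G$ lies in at least one Gershgorin disc, i.e.\ satisfies
\[
\abs{\lambda - G_{ii}} \le \sum_{j \neq i} \abs{G_{ij}} < (n-1) \cdot \frac{1}{n-1} = 1
\]
for some $i$. Since $G_{ii}=1$, this forces $\abs{\lambda - 1} < 1$, and in particular $\lambda \neq 0$. Combined with positive semidefiniteness of the Gram matrix, this shows that $G$ is positive definite and hence nonsingular, so the $\ket{v_i}$ are linearly independent.

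There is essentially no obstacle here: the only subtlety is to remember that the Gershgorin bound uses the sum of the \emph{off-diagonal} moduli in a row, which is strictly less than $1$ by the strict inequality hypothesis and the fact that there are exactly $n-1$ such terms per row. The strictness is important, as it rules out the boundary case $\lambda = 0$ and matches the known tightness: if one had equality $\abs{\braket{v_i}{v_j}} = 1/(n-1)$ for all $i \ne j$, then $n$ symmetric equiangular vectors in $\mathbb{C}^{n-1}$ (e.g.\ vertices of a regular simplex) would give a linearly dependent set saturating the bound.
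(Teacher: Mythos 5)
Your proof is correct and follows essentially the same route as the paper: form the Gram matrix, note that the normalization and the strict bound $\abs{\braket{v_i}{v_j}}<1/(n-1)$ make it strictly diagonally dominant, and invoke the Gershgorin circle theorem to conclude it is nonsingular, hence the vectors are linearly independent. Your tightness remark also matches the paper's observation that equality at $-1/(n-1)$ yields a singular Gram matrix.
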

\begin{proof}
	Let $G$ be the Gram matrix associated with $S$. As each of the vectors is normalized $G_{ii}=1$ for all $i$. As all of the non-diagonal entries are strictly modulus-bounded by $1/(n-1)$, this matrix is strictly diagonally dominant, i.e.\
	\[ \abs{G_{ii}}>\sum_{j\neq i}\abs{G_{ij}}\quad\text{for all } i. \]
	From the Geshgorin circle theorem, such matrices are non-singular and full rank, allowing us to conclude that $S$ is linearly independent. 
	
	Note that this analysis is tight, i.e.\ if $\braket{v_i}{v_j}=-1/(n-1)$ for all $i\neq j$ then $G$ is singular and $\sum_i \ket{v_i}=0$. By considering the eigenvectors of such a Gram matrix, a set of vectors satisfying this can be backed out.
\end{proof}

Given this bound, we can finally find the condition for our vectors to be linearly independent and therefore lower bound the dimension of the space in which they reside.

\begin{thm2}
	If $u_1$, $u_2$, $v_1$ and $v_2$ are unitaries such that they satisfy the commutation relations
	\[ \norm{\comm{u_1}{u_2}}\leq\gamma \qquad \norm{\comm{u_1}{v_2}}\leq\delta \qquad \norm{\comm{u_2}{v_1}}\leq\delta\]
	and twisted commutation relations
	\[ \norm{\comm{u_1}{v_1}_{1/d_1}}\leq\delta \qquad \norm{\comm{u_2}{v_2}_{1/d_2}}\leq\delta \]
	with $d_1\leq d_2$ and
	\[ \sqrt{\gamma}d_1d_2+(d_1+d_2)\delta< \frac{\sin^2(\pi/2d_1)}{(d_1d_2-1)^2}, \]
	then the dimension of each operator is at least $d_1d_2$.
\end{thm2}
\begin{proof}
	From \cref{lem:t-col:change2} we have that our vectors have expectation values bounded near powers of $\eta_1$ and $\eta_2$
	\[ \Bigl|\braopket{i,j}{u_1}{i,j}-\eta_1^i\Bigr|,~
	\Bigl|\braopket{i,j}{u_2}{i,j}-\eta_2^j\Bigr|
	\leq \sqrt{\gamma}d_1d_2/2+\left(\abs{i}+\abs{j}\right)\delta. \]
	
	Take a pair of vectors $\ket{i,j}$ and $\ket{i',j'}$ such that $i\neq i'$. Applying \cref{lem:t-col:lowoverlap2} with $w=u_1$ we get that their overlap is bounded as
	\begin{align*}
		\abs{\braket{i,j}{i',j'}}^2
		&\leq \Bigl[\sqrt{\gamma}d_1d_2+2\max\lbrace \abs{i}+\abs{j},\abs{i'}+\abs{j'}\rbrace\delta\Bigr]\cdot \csc^2\left(\frac{\pi(i-i')}{2d_1}\right).
	\end{align*}
	Combining this with a similar argument for $u_2$, and assuming $d_1\leq d_2$, we get that for $(i,j)\neq (i',j')$
	\begin{align*}
		\abs{\braket{i,j}{i',j'}}^2
		&\leq \Bigl[\sqrt{\gamma}d_1d_2+(d_1+d_2)\delta\Bigr]\csc^2\left(\frac{\pi}{2d_1}\right)
	\end{align*}
	Thus we can see that
	\[ \Bigl[\sqrt{\gamma}d_1d_2+(d_1+d_2)\delta\Bigr]\csc^2\left(\frac{\pi}{2d_1}\right)<\frac{1}{(d_1d_2-1)^2} \,.\]
	implies $\abs{\braket{i,j}{i',j'}}< 1/(d_1d_2-1)$ for all $(i,j)\neq (i',j')$. By \cref{lem:t-col:threshold2} this means that the collection of vectors $\lbrace \ket{i,j}\rbrace_{i,j}$ are linearly independent, constructively proving the dimensionality of the operators in question to be at least $d_1d_2$. Rearranging this gives the specified bound. 
\end{proof}


\section{Minimum twisted commutation value}
\label{sec:minimum}

In the previous section we considered finding lower bounds on the dimensions of approximately twisting commuting operators. 
In the exact case, the Stone-von Neumann theorem (c.f.\ \cref{thm:svnt}) tell us that unitaries $x$ and $y$ for which
\[\comm{x}{y}_{1/d}=0\]
are not only \emph{at least} $d$-dimensional, but are \emph{a multiple of} $d$-dimensional. 
We might therefore hope for a more comprehensive understanding of twisted commutation that provides more information than simply a lower bound on the dimension. 
In this section we will consider the twisted commutator in the Schatten\add{-Ky 
Fan} norms $\normu{\cdot}:=\norm{\cdot}_{(p,k)}$ with $p\geq 2$, and find the 
minimum possible twisted commutator value as a function of dimension.

\begin{defn}[Minimum twisted commutator value]
	\label{def:lambda}
	Let $\Lambda^{(p,k)}_{g,\alpha}$ be the \emph{minimum twisted commutator value}, with respect to the Schatten\add{-Ky Fan} $\add (p\add{,k)}$-norm, over all \add{pairs of }unitary matrices of dimension $g$ 
	\[ \Lambda_{g,\alpha}^{(p,k)}:=\min_{u,v\in{U}(g)}\normpk{\twicomm{u}{v}}. \]
\end{defn}

In this language, the Stone-von Neumann theorem gives that $\Lambda^{(p\add{,k})}_{g,\alpha}=0$ if and only if $g\alpha\in\mathbb{Z}$. If we had an understanding of the values of $\Lambda^{(p\add{,k})}_{g,\alpha}$ where $g\alpha\notin\mathbb{Z}$, then we could use twisted commutation value as a way of certifying dimension. 
In particular, if one thinks of $\alpha$ as fixed, and one knows the value $\normpk{\twicomm{u}{v}}$ to be less than $\Lambda^{(p\add{,k})}_{g,\alpha}$ for certain dimensions $g$, then these certain dimensions are ruled out as possible dimensions of $u$ and $v$. 
In this section we will explicitly evaluate $\Lambda^{(p\add{,k})}_{g,\alpha}$\add{ for $p\geq 2$}.

%

To lower bound $\Lambda_{g,\alpha}^{(p\add{,k})}$, we will utilize techniques from spectral perturbation theory to bound a related quantity known as the \emph{spectral distance}. 
By considering a family of operators which twisted commute, we will furthermore show this bound to be tight.

\begin{defn}[Spectral distance]
	The \emph{spectral $\add(p\add{,k)}$-distance} $d_{\add(p\add{,k)}}(a,b)$ 
	between two matrices $a$ and $b$ is the $\add(p\add{,k)}$-norm of the 
	vector containing 
	the differences between eigenvalues of the two matrices, minimized over all 
	possible orderings. If we let $\lambda(x)$ denote the vector of eigenvalues 
	of a $g \times g$ matrix $x$ then algebraically
	\[ d_{(p,k)}(a,b):=
	\min_{\sigma\in S_g} \normpk{\sigma\left[\lambda(a)\right]-\lambda(b)}=\min_{\sigma\in S_g}\left(\sum_{j=1}^{k}\abs{\lambda_{\sigma(j)}(a)-\lambda_j(b)}^p\right)^{1/p}\,,\]
where the minimization is over all elements $\sigma$ of the permutation group $S_g$ on $g$ symbols.
\end{defn}

\subsection{Frobenius spectral bound}

Before attacking the spectral distance, we are first going to restrict ourselves to the case of the Frobenius norm ($p=2$, $k=g$), where we shall denote the norm by $\normF{\cdot}$, the corresponding spectral distance by $d_{F}(\cdot,\cdot)$, and the twisted commutator minimum by $\Lambda_{g,\alpha}^{(F)}$. 
In this special case, the spectral distance between two normal matrices is bounded by their norm difference. 

\begin{lem}[Wielandt-Hoffman inequality~\cite{HoffmanWielandt1953}]
	For normal matrices $a$ and $b$, $d_{F}(a,b)\leq \normF{a-b}$.
	\label{lem:wh}
\end{lem}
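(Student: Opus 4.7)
The plan is to exploit the normality of $a$ and $b$ to diagonalize both, and then reduce the inequality to an optimization over doubly stochastic matrices, where the extremal points are permutations by Birkhoff--von Neumann.

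First I would use the spectral theorem for normal matrices to write $a = U D_a U^\dag$ and $b = V D_b V^\dag$ with $D_a$ and $D_b$ diagonal matrices of eigenvalues in some fixed (but arbitrary) order. Setting $W := U^\dag V$, the unitary invariance of the Frobenius norm gives
\[
\normF{a-b}^2 = \normF{D_a - W D_b W^\dag}^2 = \normF{D_a}^2 + \normF{D_b}^2 - 2\Re \Tr\!\left(D_a W D_b^\dag W^\dag\right).
\]
The first two terms on the right are simply $\sum_i |\lambda_i(a)|^2 + \sum_j |\lambda_j(b)|^2$, so the bound reduces to controlling the cross term from above.

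Next I would expand the trace in components, obtaining
\[
\Re \Tr\!\left(D_a W D_b^\dag W^\dag\right) = \Re \sum_{i,j} \lambda_i(a)\,\overline{\lambda_j(b)}\,|W_{ij}|^2.
\]
The matrix $P$ with entries $P_{ij} := |W_{ij}|^2$ is doubly stochastic because $W$ is unitary. By the Birkhoff--von Neumann theorem the set of doubly stochastic matrices is the convex hull of permutation matrices, so the linear functional $P \mapsto \Re\sum_{ij}\lambda_i(a)\overline{\lambda_j(b)}P_{ij}$ attains its maximum at a permutation $\sigma \in S_g$. Therefore
\[
\Re \Tr\!\left(D_a W D_b^\dag W^\dag\right) \le \max_{\sigma \in S_g} \Re \sum_i \lambda_i(a)\,\overline{\lambda_{\sigma(i)}(b)}.
\]

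Substituting back and completing the square pairwise yields
\[
\normF{a-b}^2 \ge \min_{\sigma \in S_g} \sum_i \bigl|\lambda_i(a) - \lambda_{\sigma(i)}(b)\bigr|^2 = d_F(a,b)^2,
\]
which is the desired inequality. The only non-routine step is recognising that $P_{ij} = |W_{ij}|^2$ is doubly stochastic and invoking Birkhoff--von Neumann; the rest is linear algebra. I would expect this to be the main obstacle only in the sense of citing the right convex-geometry fact, since the normality assumption is exactly what makes the Frobenius-norm bookkeeping collapse into a clean expression in the eigenvalues alone---for non-normal matrices the off-diagonal content of $W D_b W^\dag$ would contribute and the argument would break down.
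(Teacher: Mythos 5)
Your proof is correct. The paper itself offers no proof of this lemma --- it is quoted verbatim from the cited Hoffman--Wielandt reference --- and your argument (diagonalize both normal matrices, use unitary invariance to reduce the cross term to a linear functional in the doubly stochastic matrix $P_{ij}=|W_{ij}|^2$, then invoke Birkhoff--von Neumann to pass to a permutation) is precisely the classical proof of that cited result; the only cosmetic point is that the cross term is naturally $\Re\Tr\bigl(D_a^\dag W D_b W^\dag\bigr)$, which coincides with your $\Re\Tr\bigl(D_a W D_b^\dag W^\dag\bigr)$ by conjugation, so nothing is affected.
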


Once again let $\eta:=e^{2i\pi \alpha}$. Applying Wielandt-Hoffman to $\Lambda_{g,\alpha}^{(F)}$ we see that the corresponding spectral distance provides a lower bound,
\[ \Lambda_{g,\alpha}^{(F)}=\min_{u,v\in U(d)}\normF{v^\dag u v-\eta u}\geq\min_{u,v\in U(g)}d_{F}(v^\dagger uv,\eta u)=\min_{u\in U(g)}d_{F}(u,\eta u) \,.\]
Though $\normF{v^\dag u v-\eta u}$ depended on both $u$ and $v$, $d_F(u,\eta u)$ depends only on the spectrum of $u$, making for a much simpler optimization. 
This inequality will turn out to be tight for matrices minimizing the twisted commutator value.

Denote the eigenvalues of $u$ by $\lbrace e^{i\theta_j}\rbrace$, then the spectral distance in question is given by
\[
d_{F}^2(u,\eta u)
:=\min_{\sigma\in S_g}\sum_{j=1}^g \abs{e^{i\theta_{\sigma(j)}}-e^{i(\theta_{j}+2\pi \alpha)}}^2
=\min_{\sigma\in S_g}\sum_{j=1}^g 4\sin^2\left(\frac{\theta_{\sigma(j)}-\theta_j-2\pi\alpha}{2}\right).
\]
Define $f(\sigma;\theta_1,\dots,\theta_g)$ to be the argument of the above optimization
\begin{align}
	f(\sigma;\theta_1,\dots,\theta_g):=\sum_{j=1}^g 4\sin^2\left(\frac{\theta_{\sigma(j)}-\theta_j-2\pi\alpha}{2}\right) \label{eqn:fdef}
\end{align}
such that $d_F^2(u,\eta u)=\min_{\sigma}f(\sigma;\theta_1,\dots,\theta_g)$ . 
The optimization of $d_F^2(u,\eta u)$ can therefore be reduced to an optimization of $f(\sigma;\theta_1,\dots,\theta_g)$.

We can now break the optimization of $f$ down into two parts. 
First we will show that for any assignment of permutation and angles, there exists a cyclic permutation, and adjusted angles, for which the value of $f$ is the same. 
This will allow us to consider a minimizing permutation which has only a single cycle without loss of generality. 
Secondly we shall see that, for such a cyclic permutation, the set of angles which minimize $f$ are those that are equally distributed around the unit circle. 
Given these, we will find an explicit minimum for $f$, and thus for $d_F(u,\eta u)$.

\begin{lem}[Reduction to cyclic permutations]
	\label{lem:optimal_permutation}
	For a given multi-cycle permutation $\sigma$ and set of angles $\lbrace \theta_j\rbrace$, there exists a cyclic permutation $\sigma'$ and set of adjusted angles $\lbrace \theta_j'\rbrace$ such that 
	\[ f(\sigma;\theta_1,\dots,\theta_g)=f(\sigma';\theta_1',\dots,\theta_g'). \]
\end{lem}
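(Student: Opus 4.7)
My plan is to isolate what information $f$ actually depends on, and then reconstruct an equivalent cyclic configuration from that data. The key observation is that $f(\sigma; \theta_1, \ldots, \theta_g)$ depends on the angles and the permutation only through the multiset of pointwise differences $d_j := \theta_{\sigma(j)} - \theta_j$, since $f$ can be rewritten as $\sum_{j=1}^g 4\sin^2((d_j - 2\pi\alpha)/2)$. Any reshuffling of the $d_j$'s preserves the value of $f$, so it suffices to exhibit a single $g$-cycle $\sigma'$ and angles $\theta'_j$ whose pointwise differences form the same multiset as $\{d_1, \ldots, d_g\}$.

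To enable the construction, I would next establish a telescoping identity using the cycle decomposition $\sigma = C_1 C_2 \cdots C_m$. Within any cycle $C_k = (j_1, j_2, \ldots, j_{\ell_k})$ the differences sum to zero, $\sum_{i=1}^{\ell_k} d_{j_i} = \sum_{i=1}^{\ell_k}(\theta_{j_{i+1}} - \theta_{j_i}) = 0$, with indices read cyclically modulo $\ell_k$. Summing over all cycles yields the global identity $\sum_{j=1}^g d_j = 0$, which will turn out to be exactly the consistency condition needed to close up the reconstructed cycle.

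Then I would build the target configuration directly. Take $\sigma' := (1\,2\,\cdots\,g)$, i.e.\ $\sigma'(j) = j+1$ for $j < g$ and $\sigma'(g) = 1$, pick any bijection $\pi \in S_g$ (the identity will do), and set $\theta'_1 := 0$ and $\theta'_{j+1} := \theta'_j + d_{\pi(j)}$ for $j = 1, \ldots, g-1$. The only issue is cyclic consistency, which requires $\theta'_g + d_{\pi(g)} = \theta'_1 = 0$; this is precisely $\sum_j d_{\pi(j)} = \sum_j d_j = 0$, which holds by the telescoping identity. By construction the new pointwise differences $\theta'_{\sigma'(j)} - \theta'_j$ are exactly $d_{\pi(j)}$, which coincides with $\{d_j\}$ as a multiset, so $f(\sigma'; \theta'_1, \ldots, \theta'_g) = f(\sigma; \theta_1, \ldots, \theta_g)$.

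There is no serious technical obstacle; the proof rests on two elementary points — that $f$ is a multiset-symmetric function of the pointwise differences, and that these differences telescope to zero per cycle and hence globally. The only subtlety worth flagging is that we do not need to preserve the assignment of differences to indices, only the multiset, and this is what frees us to concatenate all the cycles of $\sigma$ into a single $g$-cycle.
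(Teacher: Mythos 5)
Your proposal is correct, and it reaches the conclusion by a route that differs in mechanism from the paper's. The paper keeps the original angles (up to a per-cycle shift): it reorders indices so the cycles of $\sigma$ are contiguous, subtracts a constant within each cycle so that one designated angle per cycle becomes $0$ (which leaves $f$ unchanged because $f$ only sees intra-cycle differences), and then performs surgery on $\sigma$ itself, rerouting only the last entry of each cycle into the next cycle; since those rerouted entries point at indices whose adjusted angles are all equal to $0$, the value of $f$ is untouched and the result is the single cycle $(1\,\ldots\,g)$. You instead abstract away the angle assignment entirely: you note that $f$ is a symmetric function of the multiset of gaps $d_j=\theta_{\sigma(j)}-\theta_j$, prove the telescoping identity that the gaps sum to zero within each cycle (hence globally), and then reconstruct angles for the full cycle as partial sums $\theta'_{j+1}=\theta'_j+d_j$, with the zero-sum identity giving exactly the closure $\theta'_1-\theta'_g=d_g$ needed for the last edge of the cycle. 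Both arguments are elementary and complete (fixed points are harmless in yours, contributing $d_j=0$); yours makes the underlying invariant — the multiset of gaps — and the closure condition explicit, while the paper's surgery keeps a direct correspondence with the original eigenphases and makes visible exactly which entries of $\sigma$ are altered. Either suffices for the subsequent optimization over cyclic configurations in the following lemma.
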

\begin{proof}
	Firstly, our indices can be reordered such that the cycles of $\sigma$ are contiguous, i.e.\ in cycle notation
	\[\sigma=(1~\ldots~k_1-1)\,(k_1~\ldots~k_2-1)\ldots(k_n~\ldots~g),\]
	for some $1<k_1\dots<k_n\leq g$. 
	(Note that the result is trivially true if $g=1$, so we restrict to $g>1$.) 
	As $f$ only depends on the difference between angles whose indices are within the same cycle of $\sigma$, if we shift all the angles within the same cycle by the same amount, the value of $f$ will not change. For example  if we take the change of angle
	\[ \theta_j':=\begin{dcases}
	\theta_j-\theta_1 & \,\,\,1\leq j< k_1\\
	\theta_j-\theta_{k_1} & k_1\leq j< k_2\\
	~\quad\vdots & \\
	\theta_j-\theta_{k_n} & k_n\leq j\leq g. 
	\end{dcases} \]
	then $f(\sigma;\theta_1,\dots,\theta_g)=f(\sigma;\theta_1',\dots,\theta_g')$. Notice that $\theta_1'=\theta'_{k_1}=\dots=\theta_{k_n}'=0$ by construction.
	
	We now wish to merge the permutation $\sigma$ into a single cyclic permutation
	\begin{align}
	\sigma':=(1~\ldots~g).\label{eqn:perm}
	\end{align}
	To do this, the only entries of the permutation which need to be changed are those at the end of each cycle.
	\newcommand{\arrow}{\hspace{-2cm}&\rightarrow\hspace{-2cm}}
	\begin{align*}
	\hspace{2cm}\sigma(k_1-1)&=1 \arrow& & \sigma'(k_1-1)&=k_1\\
	\sigma(k_2-1)&=k_1 \arrow& & \sigma'(k_2-1)&=k_2\\
	&~\,\vdots \hspace{-2cm}&\hspace{-2cm} & & &~\,\vdots \\
	\sigma(g)&=k_n \arrow& & \sigma'(g)&=1	\,.	
	\end{align*}
	By definition of the adjusted angles however, the only indices that change are those for which the angles have already been made identical in the previous step, i.e.\ $\theta_{\sigma(j)}'=\theta_{\sigma'(j)}'$ for all $j$. As $f$ only depends on $\sigma$ through how it acts on the angles, this means that this doesn't alter the value of $f$, therefore $f(\sigma;\theta_1',\dots,\theta_g')=f(\sigma';\theta_1',\dots,\theta_g')$.
\end{proof}

Now that we have addressed the nature of the optimal permutation, namely showing that it can be taken to be cyclic, we turn out attention to the optimal angles.

\begin{lem}
	\label{lem:optimal_angles}
	For a given single-cycle permutation $\sigma$, the sets of angles which optimize $f$, as defined in \cref{eqn:fdef}, correspond to those evenly distributed around the unit circle, and the difference between adjacent angles $\theta_j$ and $\theta_{\sigma(j)}$ is $2\pi\lfloor d\alpha\rceil/g$, where $\round{\cdot}$ denotes integer rounding. Moreover the corresponding minimal value of $f$ is 
	\[\min_{\lbrace\theta_j\rbrace_j} f(\sigma;\theta_1,\dots,\theta_g)=2\sqrt{g}\sin\left(\pi\abs{\frac{\round{g\alpha}-g\alpha}{g}}\right). \]
\end{lem}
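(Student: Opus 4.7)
The plan is to reduce the optimization over $g$ angles to an optimization over $g$ angle \emph{differences} $\phi_j := \theta_{\sigma(j)} - \theta_j - 2\pi\alpha$, constrained only by $\sum_j \phi_j \equiv -2\pi g\alpha \pmod{2\pi}$. This last constraint holds because the $\theta_{\sigma(j)} - \theta_j$ terms telescope along any single cycle, and because the $\theta_j$ are defined only modulo $2\pi$ we retain the freedom to shift $\sum_j \phi_j$ by any multiple of $2\pi$. In terms of the $\phi_j$ the objective becomes
\[ f(\sigma;\theta_1,\dots,\theta_g) \;=\; \sum_{j=1}^g 4\sin^2(\phi_j/2) \;=\; \sum_{j=1}^g 2\bigl(1-\cos\phi_j\bigr), \]
so minimizing $f$ reduces to maximizing $\sum_j \cos\phi_j$ over $(\phi_j)\in\mathbb{R}^g$ subject to the single affine constraint above.

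I would split the optimization in two: first fix a representative $S\in -2\pi g\alpha + 2\pi\mathbb{Z}$ and minimize over $\phi_j$ with $\sum_j\phi_j=S$; then minimize the resulting value over $S$. For the inner problem, Lagrange multipliers give $\sin\phi_j=\lambda$ at all critical points, so each $\phi_j$ lies in $\{\arcsin\lambda,\,\pi-\arcsin\lambda\}$ modulo $2\pi$. The candidate critical points therefore come in two families: the uniform configuration $\phi_j=S/g$, and mixed configurations placing some $\phi_j$ on the branch $\pi-\arcsin\lambda$. The main obstacle is ruling out the mixed configurations as global minima: each "second branch" term contributes $1-\cos(\pi-\arcsin\lambda)=1+\sqrt{1-\lambda^2}\geq 1$, whereas by choosing $S$ in its equivalence class to minimize $|S|$ we may assume $|S|\leq\pi$, and hence the uniform value $2g(1-\cos(S/g))\leq 2g\bigl(1-\cos(\pi/g)\bigr)$ is strictly smaller than any contribution from a mixed critical point for $g\geq 2$ (the $g=1$ case is immediate). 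Second-order analysis on the constraint surface then confirms that the uniform configuration is the minimum.

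With the uniform configuration established, the minimum at fixed $S$ is $2g(1-\cos(S/g)) = 4g\sin^2\!\bigl(S/(2g)\bigr)$. Minimizing now over $S\in -2\pi g\alpha+2\pi\mathbb{Z}$ amounts to choosing $S/(2g)$ as close to $0\bmod\pi$ as possible, which selects $S = 2\pi(\round{g\alpha}-g\alpha)$, giving
\[ \min_{\{\theta_j\}} f(\sigma;\theta_1,\dots,\theta_g) \;=\; 4g\sin^2\!\left(\frac{\pi\,\bigl|\round{g\alpha}-g\alpha\bigr|}{g}\right), \]
whose square root is $2\sqrt{g}\,\sin(\pi|\round{g\alpha}-g\alpha|/g)$, matching the stated formula (interpreted as $d_F(u,\eta u)=\sqrt{\min_\theta f}$ via the identification in the preceding discussion).

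Finally, translating $\phi_j = 2\pi(\round{g\alpha}-g\alpha)/g$ back to the original variables gives $\theta_{\sigma(j)}-\theta_j \;=\; \phi_j + 2\pi\alpha \;=\; 2\pi\round{g\alpha}/g$ for every $j$, so the optimizing eigenvalues are equally spaced around the unit circle, with cyclically adjacent pairs differing by $2\pi\round{g\alpha}/g$, as claimed. Uniqueness of the optimum (up to global rotation and the choice of cycle orientation) follows from the strict convexity of $1-\cos\phi$ on $(-\pi/2,\pi/2)$, which is where all optimal $\phi_j$ lie once $g\geq 2$.
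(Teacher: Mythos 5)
Your reduction to the angle differences $\phi_j=\theta_{\sigma(j)}-\theta_j-2\pi\alpha$, constrained only by $\sum_j\phi_j\equiv-2\pi g\alpha\pmod{2\pi}$, is sound, and it is a genuinely different route from the paper's proof, which varies one angle at a time (showing each optimal $\theta_j$ sits at the midpoint of its cycle-neighbours, hence equal spacing $\theta_j=\theta_1+2\pi m(j-1)/g$) and then minimizes over the integer winding $m$; you also correctly reconcile the lemma's stated value with $\sqrt{\min f}$. However, the step that rules out the mixed Lagrange critical points fails as written. You bound a mixed configuration only by the contribution of a single second-branch term (which is $\geq 1$, or $\geq 2$ once the factor $2$ in $f=\sum_j 2(1-\cos\phi_j)$ is included) and assert that the uniform value $2g(1-\cos(\pi/g))$ is strictly smaller. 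But for $g=2$ that uniform bound is $4$, for $g=3$ it is $3$, for $g=4$ roughly $2.34$, so the claimed strict inequality is false for small $g$ (and up to $g\approx 9$ with the normalization you literally wrote). Since global optimality of the uniform configuration rests entirely on this comparison (second-order analysis only gives local minimality), this is a genuine gap.

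The gap is repairable inside your framework by accounting for \emph{all} terms at a mixed critical point: with $m\geq1$ of the $\phi_j$ on the branch $\pi-\arcsin\lambda$ and $c=\sqrt{1-\lambda^2}$, one finds $f=2\left[g+(2m-g)c\right]\geq\min(4m,2g)\geq4$ for $g\geq2$, whereas the best uniform value is $4g\sin^2\bigl(\pi\abs{\round{g\alpha}-g\alpha}/g\bigr)\leq4g\sin^2\bigl(\pi/(2g)\bigr)\leq4$, with equality possible only at $g=2$ and half-integer $g\alpha$; in that edge case $f$ is in fact constant on the feasible set, so the minimal-value formula still holds (though the characterization of optimizers degenerates there, an edge case the paper's own proof also glosses over). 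You should also add a word on why the minimum over each affine slice $\{\sum_j\phi_j=S\}$ is attained, e.g.\ by quotienting by the lattice $\{k\in2\pi\mathbb{Z}^g:\sum_jk_j=0\}$, which preserves $f$ and yields a compact domain. With those corrections your argument establishes the lemma, and arguably makes the structure of the optimization (one scalar constraint, critical points classified by $\sin\phi_j=\mathrm{const}$) more transparent than the paper's coordinate-wise induction.
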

\begin{proof}
	Denote both of the terms\footnote{In saying there are two such terms we have assumed $g\geq 3$. If $g=1$ the lemma is trivial ($f$ is constant), and if $g=2$ then we have double counted in $f_j(\theta_j)$, but our analysis of its minimum remains valid.} in $f$ which depend non-trivially on $\theta_j$ by $f_j(\theta_j)$. Using the double angle formula and the auxiliary angle method, we can reduce the $\theta_j$ dependence to a single sinusoidal term.
	\begin{align*}
	f_j(\theta_j)&=4\sin^2\left(\frac{\theta_j-\theta_{\sigma(j)}-2\pi\alpha}{2}\right)+4\sin^2\left(\frac{\theta_{\sigma^{-1}(j)}-\theta_{j}-2\pi\alpha}{2}\right)\\
	&=4-4\cos\left(2\pi\alpha+\frac{\theta_{\sigma(j)}-\theta_{\sigma^{-1}(j)}}{2}\right) \cos\left(\theta_j-\frac{\theta_{\sigma(j)}+\theta_{\sigma^{-1}(j)}}{2}\right)\,.
	\end{align*}
	
	We can therefore see that the optimal $\theta_j$, leaving all other angles fixed, satisfies
	\[ \theta_j=\left(\theta_{\sigma(j)}+\theta_{\sigma^{-1}(j)}\right)/2~\mod{\pi}. \]
	This implies that $\theta_{\sigma(j)}-\theta_j=\theta_j-\theta_{\sigma^{-1}(j)}\mod{2\pi}$, i.e.\ $\theta_j$ lies in at the `midpoint' of its neighbors, as described by $\sigma$. By inducting the above argument we find that $\theta_{\sigma(j)}-\theta_j=\theta_{\sigma(k)}-\theta_k\mod{2\pi}$ for all $j,k$ meaning that all adjacent angles are equally spaced around the unit circle. This means that if we have $g$ angles, and label our indices such that $\sigma(j)=j+1\mod{g}$, then for some fixed integer $m$, the optimal angles are of the form
	\begin{align}
	\theta_j=\theta_1+2\pi m(j-1)/g. \label{eqn:angles}
	\end{align}
	The only free parameter left now is $m$, the spacing between adjacent points. Plugging these angles into the definition of $f$ we find
	\[ f(\sigma;\theta_1,\dots,\theta_g)=2\sqrt{g}\Bigl\vert\sin\bigl(\pi\left[m/g-\alpha\right]\bigr)\Bigr\vert. \]
	This is in turn minimized for $m=\round{g\alpha}$, giving the stated spacing and minima.
\end{proof}

As this minimum of $f$ is independent of the permutation $\sigma$, we get an overall minimum for $f$ for free.

\begin{corr}
	\label{corr:bound}
	The minimum twisted commutator value (\cref{def:lambda}) in the Frobenius norm $\Lambda_{g,\alpha}^{(F)}$ is lower bounded
	\[ \Lambda_{g,\alpha}^{(F)}\geq 2\sqrt{g}\sin\left(\pi\abs{\frac{\round{g\alpha}-g\alpha}{g}}\right). \]
\end{corr}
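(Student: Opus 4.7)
My plan is to chain together the preceding results in a straightforward manner, since the heavy lifting has already been done by the Wielandt-Hoffman inequality and the two lemmas characterizing the optimal permutations and angles for $f$.

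First, I would unpack the definition. Since the Frobenius norm is unitarily invariant, I can write
\[
\Lambda_{g,\alpha}^{(F)} = \min_{u,v \in U(g)} \normF{uv - \eta vu} = \min_{u,v \in U(g)} \normF{v^\dag u v - \eta u}\,.
\]
Both $v^\dag u v$ and $\eta u$ are normal (in fact unitary), so the Wielandt-Hoffman inequality (\cref{lem:wh}) yields
\[
\normF{v^\dag u v - \eta u} \geq d_F(v^\dag u v,\eta u) = d_F(u,\eta u)\,,
\]
where the last equality uses the fact that unitary conjugation preserves the spectrum. Minimizing over $v$ is now trivial and we obtain the key reduction
\[
\Lambda_{g,\alpha}^{(F)} \geq \min_{u \in U(g)} d_F(u,\eta u)\,.
\]

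Next I would translate the right-hand side into the angle optimization already set up. Writing the eigenvalues of $u$ as $e^{i\theta_j}$, by definition
\[
d_F^2(u,\eta u) = \min_{\sigma \in S_g} f(\sigma;\theta_1,\ldots,\theta_g)\,,
\]
with $f$ as in \cref{eqn:fdef}. So the problem reduces to minimizing $f$ jointly over $\sigma \in S_g$ and $(\theta_1,\ldots,\theta_g) \in [0,2\pi)^g$. By \cref{lem:optimal_permutation}, any multi-cycle permutation can be replaced (together with a suitable shift of angles) by a single-cycle permutation without changing the value of $f$, so without loss of generality I may restrict to cyclic $\sigma$.

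Finally, I would invoke \cref{lem:optimal_angles}, which explicitly evaluates the minimum of $f$ over $\lbrace \theta_j \rbrace$ for any single-cycle $\sigma$:
\[
\min_{\lbrace \theta_j \rbrace} f(\sigma;\theta_1,\ldots,\theta_g) = 2\sqrt{g}\sin\!\left(\pi\abs{\tfrac{\round{g\alpha}-g\alpha}{g}}\right)\,.
\]
Since this value is independent of which single-cycle $\sigma$ is chosen, the overall minimum of $f$ equals this same quantity. Combining everything gives the stated bound. There is no real obstacle here --- the only thing to be careful about is ensuring the order of minimizations is handled correctly (the fact that the inner minimum over angles does not depend on the outer choice of cyclic $\sigma$ is what makes the argument clean), and that the Wielandt-Hoffman step is valid because both operators being compared are normal.
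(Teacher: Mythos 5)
Your proposal is correct and follows essentially the same route as the paper: the Wielandt--Hoffman reduction $\Lambda_{g,\alpha}^{(F)}\geq\min_{u}d_F(u,\eta u)$ (which the paper carries out in the text just before the corollary), followed by \cref{lem:optimal_permutation} to restrict to cyclic permutations and \cref{lem:optimal_angles} for the explicit minimum. The only blemish, inherited directly from the paper's own statement of \cref{lem:optimal_angles}, is the conflation of $f$ with $\sqrt{f}$ (the minimum of the sum of squares is $4g\sin^2(\cdot)$, whose square root is the quoted $2\sqrt{g}\sin(\cdot)$), so your chain of equalities is fine once that is read as the value of $d_F$ rather than of $d_F^2$.
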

\begin{proof}
	This result can be seen by recalling that the definition of $f$ in \cref{eqn:fdef} gives that
	\[ \min_{u\in U(g)}d_F(u,\eta u)=\min_{\sigma,\lbrace\theta_j\rbrace_j}f(\sigma,\theta_1,\dots,\theta_g). \]
	As \cref{lem:optimal_permutation} tells us that we can consider cyclic permutations without loss of generality, we can apply the minimum found in \cref{lem:optimal_angles}, giving 
	\[\min_{u\in U(g)}d_F(u,\eta u)=2\sqrt{g}\sin\left(\pi\abs{\frac{\round{g\alpha}-g\alpha}{g}}\right).\]
	Applying the Wielandt-Hoffman theorem (\cref{lem:wh}), we get that the above minimum spectral distance lower bounds the twisted commutator in the Frobenius norm, as required.
\end{proof}

\subsection{Higher norms and tightness}

With the above bound in hand, we now turn our attention to tightness. A canonical family of operators which exhibit twisted commutation is that of the \emph{generalized Pauli operators}, also known as Sylvester's \emph{clock and shift matrices}
\[ C:=\sum_j \omega^{j-1}\ket{j}\bra{j}, \qquad\qquad S:=\sum_j\ket{j\oplus 1}\bra{j} \]
where $\omega=e^{2i\pi/g}$ is a primitive $g$th root of unity, and $\oplus$ denotes addition modulo $g$. As $S$ simply cyclically permutes the eigenbasis of $C$, we can see that $S^\dag C S=\omega C$, or $\comm{C}{S}_{1/g}=0$. By taking appropriate powers these operators can also yield pairs which twisted commute with a phase that is any power of $\omega$, specifically we see $\comm{C}{S^k}_{k/g}=0$. Suppose we take such a pair and evaluate the twisted commutator at an arbitrary phase $\eta=e^{2i\pi \alpha}$. We then find,
\begin{align*}
\normF{\twicomm{C}{S^k}}
&=\normF{CS^k-\eta S^kC}\\
&=\normF{(1-\omega^k\eta)CS^k}\\
&=\sqrt{g}\abs{1-\omega^k\eta}\\
&=2\sqrt{g}\Bigl\vert\sin\bigl(\pi\left(\alpha+k/g\right)\bigr)\Bigr\vert.
\end{align*}
If we now take $k=-\round{g\alpha}$, then we saturate \cref{corr:bound}, proving tightness of the bound on $\Lambda_{g,\alpha}^{(F)}$, allowing us to conclude
\begin{align*}
\Lambda_{g,\alpha}^{(F)}=2\sqrt{g}\sin\left(\pi\abs{\frac{\round{g\alpha}-g\alpha}{g}}\right).
\end{align*}

For the above optimizations we restricted ourself to the $p=2$ case of the Frobenius norm. The nature of the minimizers found allows us to pull this analysis up into minima for the $p>2$ Schatten norms as well. 

\begin{thm2}[Minimum twisted commutation value]
	Suppose that $u$ and $v$ are $g$-dimensional unitaries, \add{then for any $p\geq2$ the twisted commutator is lower bounded 
		\[ \Bigl\|\twicomm{u}{v}\Bigr\|_{(p,k)}\geq 2k^{1/p}\sin\left(\pi \abs{\frac{\round{g\alpha}-g\alpha}{g}}\right), \]
		where $\normpk{\cdot}$ is the $(p,k)$-Schatten-Ky Fan norm.} 
	\del{then for any $p\geq 2$ we have
		\[ \Bigl\|\twicomm{u}{v}\Bigr\|_{p}\geq 2g^{1/p}\sin\left(\pi \abs{\frac{\round{g\alpha}-g\alpha}{g}}\right). \]}
	Moreover this bound is tight, in that sense that there exist \add{families of }$g$-dimensional unitaries which saturate the above bound\add{s} and only depend on $\round{g\alpha}$, the nearest integer to $g\alpha$.
\end{thm2}
\begin{proof}
	By the equivalence of Schatten\add{-Ky Fan} norms, the minimum Frobenius norm will also \add{provide}\del{imply} a lower bound for other $\add(p\add{,k)}$-norms as well. Specifically for $p\geq2$ we have
	\begin{align*}
	\normpk{M}\geq \add{k^{1/p}}g^{\del{1/p}-1/2}\normF{M}. 
	\end{align*}
	
	Applying these to the definition of $\Lambda_{g,\alpha}^{(p,k)}$, this 
	bound gives that $\Lambda_{g,\alpha}^{(p.k)}\geq 
	\add{k^{1/p}}g^{\del{1/p}-1/2}\Lambda_{g,\alpha}^{(F)}$ for $p\geq 2$. It 
	turns out that this inequality is saturated by matrices $M$ with flat 
	spectra, i.e.\ those proportional to unitaries. It so happens that the 
	clock and shift operators considered to demonstrate tightness have a 
	twisted commutator with precisely this property, and therefore also 
	saturate and demonstrate the tightness of the induced $p\add>\del\geq 2$ 
	bound\add{s}. We therefore conclude that
	\[ \Lambda_{g,\alpha}^{(p,k)}=\add{k^{1/p}}g^{\del{1/p}-1/2}\Lambda_{g,\alpha}^{(F)}=2\add{k^{1/p}}\del{g^{1/p}}\sin\left(\pi\abs{\frac{\round{g\alpha}-g\alpha}{g}}\right). \]
\end{proof}
Some plots of this bound are shown in \cref{fig:minima}.


\begin{figure}[t!]
	\centering
	a) \hspace{.475\textwidth} b) \hspace{.425\textwidth}~\vspace{-.5cm} \\
	\ifcompile
	\setlength{\figheight}{0.325\textwidth}
	\setlength{\figwidth}{0.4\textwidth}
	\tikzexternalenable
	\tikzsetnextfilename{Minima1}
	\input{./Minima1.tex}
	~
	\tikzsetnextfilename{Minima2}
	\input{./Minima2.tex}
	\else
	\includegraphics{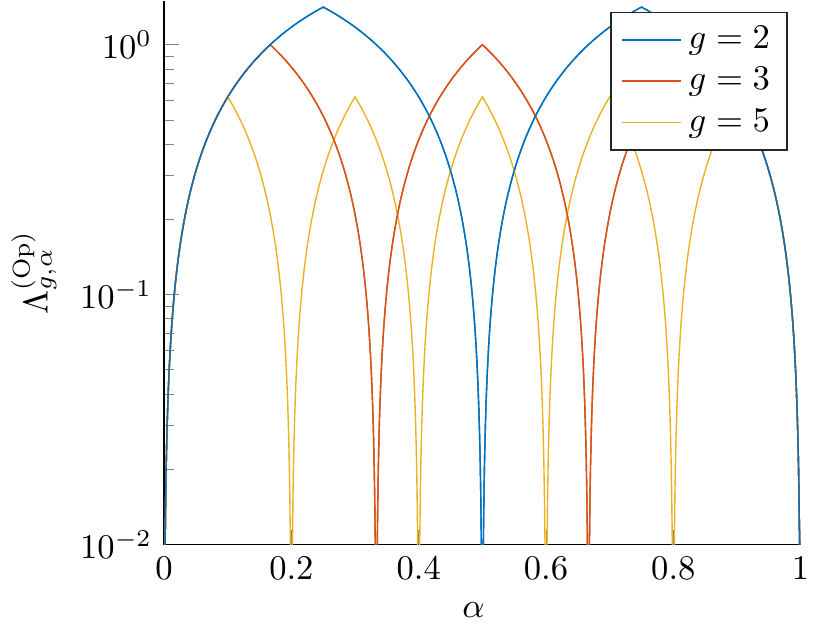}
	\quad
	\includegraphics{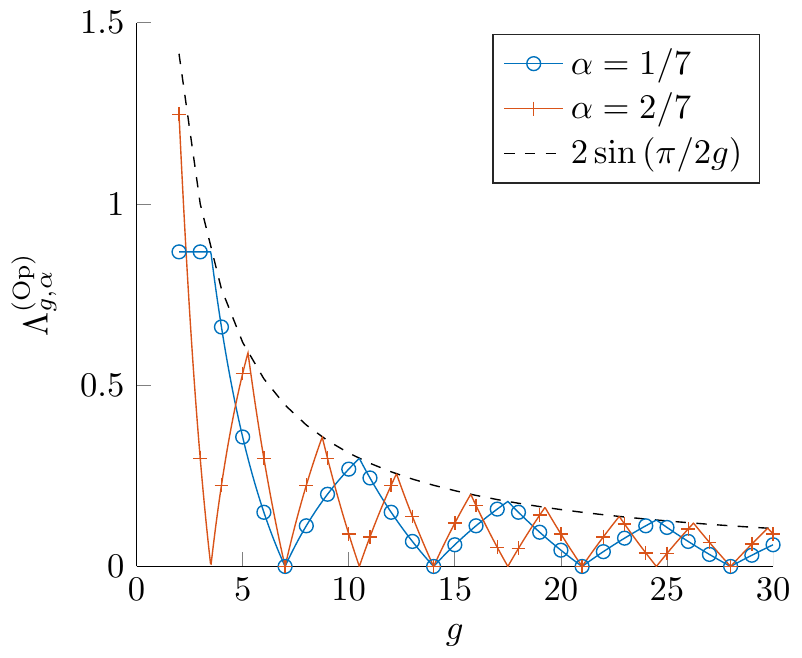}
	\fi	
	\caption{The twisted commutator value minimum (in the operator norm, $p=\infty$) $\Lambda_{g,\alpha}^{(\mathrm{Op})}$. a) The dependence on the twisting parameter $\alpha$ for a few fixed dimensions $g$. The presence of roots at multiples of $1/g$ are those predicted by \cref{thm:svnt}. b) Now fixing the twisting parameter $\alpha$, the dependence on the dimension $g$ is shown. Note that $g$ can only take integer values, indicated by the circles and pluses, with the continuous lines simply intended to guide the eye. The dotted black line indicates an $\alpha$-independent upper bound on $\Lambda_{g,\alpha}^{(\mathrm{Op})}$ given by applying the bound $\abs{x-\round{x}}\leq 1/2$.}
	\label{fig:minima}
\end{figure}


\section{Applications and open questions}
\label{sec:localapplications}

We now discuss several avenues for improvements, generalizations, refinements, and applications of these ideas. 

\subsection{Local Hamiltonians}

In this paper the only assumption we made about our Hamiltonian $H$ was the presence of a spectral gap. A natural additional structure to impose is that $H$ be a many-body Hamiltonian: decompose our Hilbert space into a tensor product of many smaller Hilbert spaces, and let our Hamiltonian take the form
\[ H=\sum_{k} h_k \]
where each term $h_k$ acts non-trivially on a constant number of these tensor factor spaces. Additional to this we could also impose that the factors on which it acts are geometrically local as well. Under this special case it may be that either the bounds on degeneracy certification might be able to be improved, or we might be able to prove the existence of degeneracy witnesses with additional structure, e.g.\ such witnesses might act in a geometrically local fashion.

\subsection{Topologically ordered systems}

While the notions of approximate symmetry and degeneracy of a ground band are both robust to small perturbations, na\"{i}vely one can only consider perturbations of a strength no larger than the gap. For topologically ordered systems~\cite{Wen2012} however, we can afford much larger perturbations under certain locality assumptions.

Under the influence of local perturbations, the low-energy band structure, most notably the ground space degeneracy, is robust even if the overall strength of the perturbation is extensive~\cite{Bravyi2010a}. Moreover, any symmetries which witnesses this degeneracy can be quasi-adiabatically continued~\cite{Hastings2005} into approximate symmetries which witness the degeneracy of the ground band in the perturbed system. It is in this sense that the existence of degeneracy witnesses can be considered robust to even rather strong perturbations, at least for the ground band.

The family of \emph{abelian quantum double models} possess symmetries supported on quasi-1D regions which satisfy twisted commutation relations related to the braid and fusion rules of the underlying anyons~\cite{Kitaev2003}. More general models such as non-abelian/twisted quantum doubles~\cite{Kitaev2003,Hu2012,Dijkgraaf1990}, and Levin-Wen string net models~\cite{LevinWen2005} are all believed to possess symmetries which satisfy more general commutation-like relations based on more general notions of commutation. One possible example is the twist product~\cite{Haah2014} which only commutes the two operators on part of the system, braiding them together.
\[ \Bigl(\sum_{i}A_i\otimes A_i'\Bigr)\infty\Bigl(\sum_{j}B_j\otimes B_j'\Bigr):=
\sum_{ij}A_iB_j\otimes B_j'A_i'.  \]
An obvious extension of this work is to take various properties of these underlying systems implied by this commutation-like relations, and see if they too carry through into the regime of \emph{approximate} relations.

In a recent paper, Bridgeman et.\ al.\ sought to classify the phases of 2D topologically ordered spin systems belonging to the same phase as abelian quantum doubles~\cite{BridgemanFlammiaPoulin2016}. This was done by numerically optimizing twisted pairs of symmetries. This optimization was done over a tensor network~\cite{BridgemanChubb2016,Orus2014} ansatz of quasi-1D operators known as matrix product operators. For two operators $L$ and $R$, supported on intersecting quasi-1D regions, the cost function takes the form
\[ C(L,R;\alpha)\propto\epsilon_L^2+\epsilon_R^2+\delta^2 \]
where $\epsilon_L:=\normF{\comm{L}{H}}$, $\epsilon_R:=\normF{\comm{R}{H}}$, and $\delta=\normF{\twicomm{L}{R}}$.

Minimizing $C(L,R;\alpha)$ over $L$ and $R$ for a fixed $\alpha$, they found that in the abelian quantum doubles the minimizers were unitary, and that both $\epsilon_L$ and $\epsilon_R$ vanish to within numerical accuracy, leaving only the twisted commutator value $\delta$. By observing the values of $\alpha$ for which the minimum cost is low, they hoped to classify the topological phases of the underlying Hamiltonian. By \cref{thm:restriction} we know that, at least to within numerical accuracy, the ribbon operators found restrict down to ground symmetries with the same twisted commutation relations. In \cref{fig:numerics} we compare, for the $\mathbb{Z}_5$ quantum double model, their numerically obtained values of this twisted commutator $\delta_\textrm{min}$ with the minimal possible twisted commutator $\Lambda^{(F)}_{5,\alpha}$, showing close agreement and lending support to the efficacy of this numerical method.

\begin{figure}[t!]
	\centering
	\ifcompile
	\setlength{\figheight}{0.3\textwidth}
	\setlength{\figwidth}{0.8\textwidth}
	\tikzexternalenable
	\tikzsetnextfilename{Gothic}
	\input{./Gothic.tex}
	\else
	\includegraphics{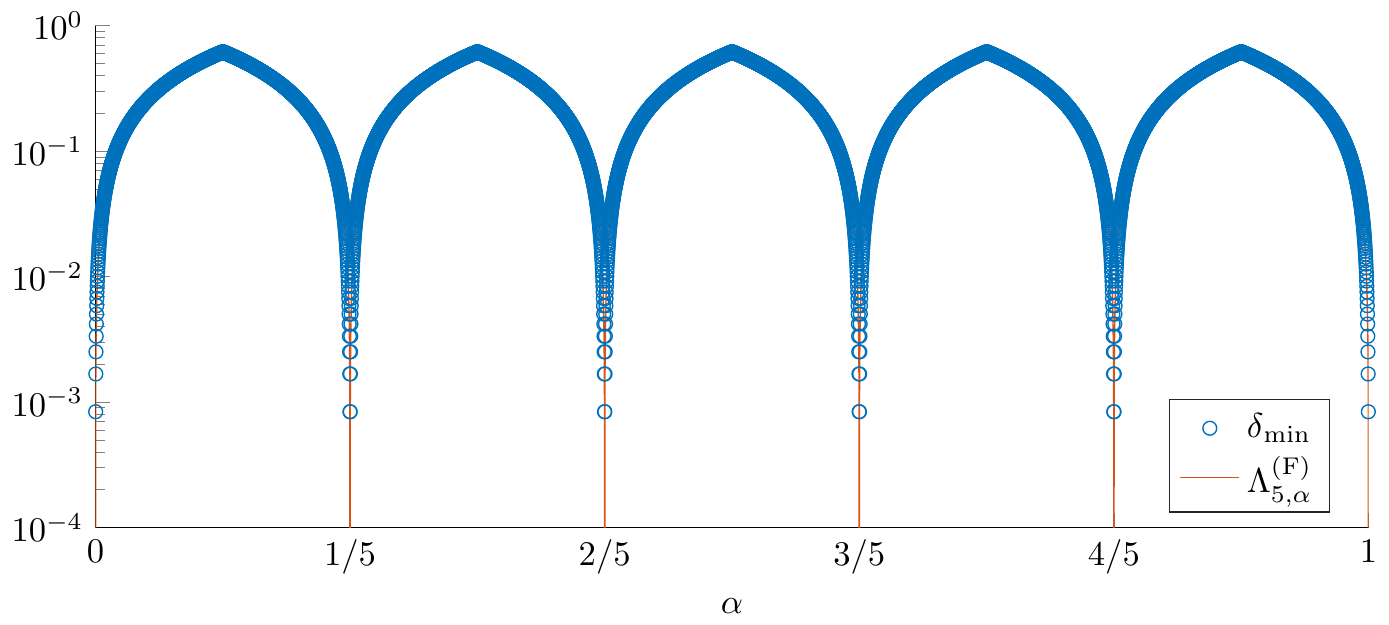}
	\fi	
	\caption{The twisted commutator value for ribbon operators on the $\mathbb{Z}_5$ quantum double model, as calculated using the algorithm of \cite{BridgemanFlammiaPoulin2016}, as compared to the minimum possible twisted commutator value in $5$-dimensions. Note that the difference between the two plots is no more than $3\times 10^{-13}$.}
	\label{fig:numerics}
\end{figure}

\subsection{Quantum codes}

One class of systems for which twisted commuting symmetries play a special role are quantum codes, in which they can be interpreted as logical operators~\cite{NielsenChuang2011,Preskill1999,Knill1996}. For a quantum code encoding $N$ codewords, the logical algebra must correspond to $\text{Mat}_N(\mathbb{C})$, which necessarily contains a pair of operators $X$ and $Z$ such that $\comm{X}{Z}_{1/N}=0$; indeed the algebra generated by any two such operators $X$ and $Z$ is itself $\text{Mat}_N(\mathbb{C})$. 

While the existence of logical operators which $\alpha=1/N$ twisted commute can be ensured, we might only see and expect operators with twisted commutations characteristic of smaller ground spaces if we restrict the locality of these operators. Though the logical algebra is given by $\text{Mat}_N(\mathbb{C})$, this space often naturally decomposes into a tensor product decomposition: the logical qudits. By geometrically restricting where on the system the operators can act, we can often restrict which factors the logical operators have nontrivial commutation relations with. This is the case for celebrated examples such as the toric code~\cite{Kitaev2003}. This can be seen above in \cref{fig:numerics}, where the logical operators are restricted to string-like regions that are only sensitive to one $\text{Mat}_5(\mathbb{C})$ factor of the larger $\text{Mat}_{25}(\mathbb{C})$ logical algebra; one of the two 5-level qudits. In the same way that Ref.~\cite{BridgemanFlammiaPoulin2016} sought to use the existence of twisted commuting symmetries to classify topological phases, how this existence varies with respect to the geometry imposed on these operators might provide a tool to probe what portion of the logical algebra is accessible on certain regions.

In the language of quantum codes, our results can be interpreted as bounds below which approximate logical operators imply the existence of a certain number of code words. A possible avenue for future work is whether there exists bounds below which not only can the number of codestates be bounded, but reliable encoding, decoding, and error correction can all be performed with these approximate logical operators. Understanding when information stored in such states is approximately preserved, as opposed to exactly preserved~\cite{Blume-Kohout2010}, could have interesting applications in approximate quantum error correction.


\subsection*{Acknowledgements}
\label{app:acknowledgements}

We thank Jacob Bridgeman for fruitful comments, and for computing the numerical 
data for \cref{fig:numerics} using the algorithm of 
Ref.~\cite{BridgemanFlammiaPoulin2016}. For the use of American English in this 
paper, CTC apologises to the Commonwealth of Australia. 
This work was supported by the Australian Research Council via EQuS project number CE11001013 and STF was supported by an Australian Research Council Future Fellowship FT130101744.

\bibliographystyle{ChubbStyle}
\small\bibliography{library}
\normalsize
\appendix


\section{Approximate shared eigenvectors for \add{approximately}\del{almost} commuting matrices}
\label{app:approxeigen}

In this section we will show that for two approximately commuting matrices, an approximate shared eigenvector exists. This problem has been considered before by Bernstein~\cite{Bernstein1971}, who showed the following result.

\begin{thm}[\!\cite{Bernstein1971}]
Take $A$ and $B$ to be complex matrices of dimension $n\ge 2$. If $\norm{B}\leq 1$, and for some $\delta>0$ we have
\[ \norm{\comm{A}{B}}\leq  \frac{\delta^n(1-\delta)}{1-\delta^{n-1}}, \]
then for each eigenvalue $\lambda$ of $A$, there exists a $\mu$ and normalized $\ket{x}$ such that
\[ \norm{A\ket{x}-\lambda\ket{x}},\norm{B\ket{x}-\mu\ket{x}}\leq\delta. \]
\end{thm}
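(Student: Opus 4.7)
The plan is to construct the approximate shared eigenvector by a Krylov-subspace deflation starting from an exact eigenvector of $A$. Let $\ket{v_0}$ be a unit eigenvector of $A$ with eigenvalue $\lambda$, and $V := \Span\{B^j\ket{v_0}\}_{j \geq 0}$ the $B$-invariant Krylov subspace, of dimension $d \leq n$. Since $\ket{v_0}$ is cyclic for $B|_V$, the minimal polynomial of $B|_V$ equals its characteristic polynomial and factors as $p(x) = \prod_{k=1}^{d}(x-\mu_k)$ with each $|\mu_k|\leq \norm{B} \leq 1$. For each $j$, setting $q_j(x) := p(x)/(x-\mu_j)$ and $N_j := \norm{q_j(B)\ket{v_0}}$, the vector $\ket{x_j} := q_j(B)\ket{v_0}/N_j$ is an exact unit $\mu_j$-eigenvector of $B$. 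The objective is to show that, for a suitable ordering of the eigenvalues, at least one such $\ket{x_j}$ is within $\delta$ of being a $\lambda$-eigenvector of $A$.

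I would organize the computation through the auxiliary deflation sequence $\ket{z_k} := \prod_{i=k+1}^{d}(B-\mu_i)\ket{v_0}$, which satisfies $\ket{z_d} = \ket{v_0}$, $\ket{z_0} = p(B)\ket{v_0} = 0$, and the recursion $\ket{z_{k-1}} = (B-\mu_k)\ket{z_k}$. Using $A\ket{v_0}=\lambda\ket{v_0}$ together with the elementary identity $(A-\lambda)(B-\mu_k) = (B-\mu_k)(A-\lambda) + \comm{A}{B}$, I would then derive a recursion for the normalized errors $\eta_k := \norm{(A-\lambda)\ket{z_k}}/\norm{\ket{z_k}}$ of the form
\[
\eta_{k-1} \leq \bigl(\norm{B-\mu_k}\,\eta_k + \epsilon\bigr)\big/\rho_k,
\]
where $\rho_k := \norm{\ket{z_{k-1}}}/\norm{\ket{z_k}}$ and $\epsilon := \norm{\comm{A}{B}}$. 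Since $\prod_k \rho_k = 0$, some $\rho_{k^{*}} \leq \delta$; the corresponding normalized vector $\ket{\hat z_{k^{*}}} := \ket{z_{k^{*}}}/\norm{\ket{z_{k^{*}}}}$ is automatically a $\delta$-approximate $\mu_{k^{*}}$-eigenvector of $B$, and it remains to control the accumulated $A$-error $\eta_{k^{*}}$.

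Assuming the key bound $\norm{B-\mu_k}\leq 1$ in the recursion (see below) and $\rho_k > \delta$ for all $k > k^{*}$, initializing at $\eta_d = 0$ and iterating downwards yields
\[
\eta_{k^{*}} \leq \epsilon \sum_{i=1}^{d-k^{*}}\delta^{-i} = \frac{\epsilon\,(1-\delta^{d-k^{*}})}{\delta^{d-k^{*}}(1-\delta)}.
\]
Since $d - k^{*} \leq d - 1 \leq n - 1$ and the right-hand side is monotone in the exponent, the hypothesis $\epsilon \leq \delta^n(1-\delta)/(1-\delta^{n-1})$ forces $\eta_{k^{*}} \leq \delta$, identifying the desired pair $\ket{\hat z_{k^{*}}}, \mu_{k^{*}}$. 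The factor $(1-\delta^{n-1})$ in the denominator is precisely the geometric-series correction coming from this telescoping.

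The main obstacle I expect is obtaining the sharp factor $\norm{B-\mu_k} \leq 1$ in the error recursion rather than the naive $\norm{B-\mu_k} \leq 1 + |\mu_k| \leq 2$, which would replace the geometric bound with a ruinous $2^{n-1}$ and fail to recover Bernstein's constant. Overcoming this requires exploiting the Krylov structure more carefully: the $A$-error vector at each stage lies in the same deflating subspace as $\ket{z_k}$, so the apparent $\norm{B-\mu_k}$ factor in the bound can be absorbed into the $\rho_k$ denominators that already appear. Combined with an appropriate ordering of the $\mu_k$'s---e.g., choosing at each step the eigenvalue minimizing the residual $\norm{(B-\mu_k)\ket{\hat z_k}}$---this cancellation is what produces the tight geometric bound matching the hypothesis.
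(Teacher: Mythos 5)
You should note at the outset that the paper itself does not prove this statement: it is quoted from Ref.~\cite{Bernstein1971}, and the appendix instead proves a different result (\cref{appthm:fixed}) under the extra assumption that one matrix is normal, by clustering eigenvalues of $A$ --- a method unrelated to yours. So your proposal must stand on its own. Its scaffolding is sound: the Krylov space $V$, the annihilating polynomial giving $\ket{z_0}=p(B)\ket{v_0}=0$, the telescoping ratios $\rho_k$ with $\prod_k\rho_k=0$, and the observation that $\rho_{k^*}\leq\delta$ makes $\ket{\hat z_{k^*}}$ a $\delta$-approximate eigenvector of $B$ are all correct, and your final arithmetic would reproduce Bernstein's constant exactly, since for $\delta<1$ the hypothesis is equivalent to $\epsilon\sum_{i=1}^{n-1}\delta^{-i}\leq\delta$. (Minor bookkeeping: the monotonicity step needs $\delta<1$, and $k^*$ should be taken maximal so that $\ket{z_k}\neq0$ for all $k\geq k^*$.)

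The genuine gap is exactly where you flagged it, and the proposed repair does not close it. The recursion really is $\eta_{k-1}\leq\left(\norm{B-\mu_k}\,\eta_k+\epsilon\right)/\rho_k$, and with only $\norm{B}\leq1$ and $\abs{\mu_k}\leq1$ one has $\norm{B-\mu_k}\leq2$, which is attained (e.g.\ $B=\mathrm{diag}(1,-1)$, $\mu_k=1$). The vector this factor acts on, $(A-\lambda)\ket{z_k}$, has no Krylov structure: $A$ does not preserve $V$ (if it did, the problem would be essentially trivial), so the claim that the error vector ``lies in the same deflating subspace'' is unfounded, and even restricted to $V$ the operator $B-\mu_k$ need not be a contraction. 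Likewise, greedily ordering the $\mu_k$ controls $\norm{(B-\mu_k)\ket{\hat z_k}}$, not $\norm{(B-\mu_k)(A-\lambda)\ket{z_k}}$, which is the term that must be bounded; and renormalizing the factors to $(B-\mu_k)/(1+\abs{\mu_k})$ merely moves the factor of $2$ into the $B$-residual, giving $\norm{(B-\mu_{k^*})\ket{\hat z_{k^*}}}\leq2\delta$. With the honest factor of $2$ the iteration yields $\eta_{k^*}\leq\epsilon\sum_{i}2^{\,i-1}\delta^{-i}$, i.e.\ the conclusion only under a hypothesis of order $\delta(\delta/2)^{n-1}$, strictly weaker than Bernstein's bound. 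As written, then, the argument proves a qualitatively similar but quantitatively weaker theorem; recovering the stated constant requires a genuinely new idea at this step rather than a rearrangement of the present recursion.
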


Notice above the required bound on the commutator \del{for a given $\delta$ 
}scales like $\add{\mathcal{O}(\delta^n)}$ \add{for small $\delta$}. Below we 
will improve this dimension 
scaling 
by adding the additional assumption that one of the matrices is normal, 
allowing us to bring this down to a $\add{\mathcal{O}(\delta^2/n^2)}$ 
dependence. First 
we 
will state the more general result, which only requires one of the matrices to 
be normal\add{, followed by a more specialized result which applies when both 
matrices are normal.}\del{. After proving this, we will then give the some 
additional cases where the second matrix is either unitary or Hermitian and the 
output is required to satisfy some additional constraints.}

\add{The existence of an entire basis of shared approximate eigenvectors is 
closely related to approximate joint diagonalization, a problem that has been 
widely considered
and has found application in fields such as quantum chemistry~\cite{Wannier}, 
machine learning~\cite{geometricmean}  
and 
image processing~\cite{Laplacian}. This literature is too vast to review in 
this appendix, but see Ref.~\cite{Glashoff2013} for a discussion of the 
relationship 
between approximately commuting matrices and joint diagonalization. Techniques 
similar to those used below have 
also been used in Ref.~\cite{nearby} to address the related problem of 
constructing nearby exactly commuting operators, in the case in which one 
matrix is Hermitian. Whilst this analysis gives better bounds than those 
presented below, it leverages a combinatorical construction~\cite{combin} that 
explicitly uses the reality of the eigenvalues, and therefore cannot be 
directly applied to the case we will consider in which one matrix is normal, 
but not necessarily Hermitian.}

\del{Before we prove this, we will need two simple lemmas. }Take $A$ and $B$ to be $n\times n$ matrices. Let $A$ be normal, with an eigenvalue decomposition $A=\sum_{i}\lambda_i\ket{i}\bra{i}$. \del{Given this, we will see that we can upper bound the off-diagonal terms of $B$ in this eigenbasis of $A$.} Next take $\lambda$ to be a specific eigenvalue of $A$. Let $I_0$ be the singleton set containing the index corresponding to $\lambda$, or all these indices if $\lambda$ is degenerate. Define $I_k$ to be all the indices whose eigenvalues are within some radius $r > 0$ in the complex plane (to be chosen later) of those in $I_{k-1}$, i.e.\
\[ I_k:=\left\lbrace i \,\middle|\, \exists j\in I_{k-1} : \abs{\lambda_i-\lambda_j}\leq r \right\rbrace. \]
Clearly this sequence becomes fixed after at most $n$ terms, and so let $I:=I_n$ be this fixed point. Intuitively $I$ can be thought of as the indices corresponding to eigenvalues which form a cluster around $\lambda$ where every eigenvalue in the cluster is linked to at least one other by a disk of radius $r$ in the complex plane. 

By construction this set has two properties we require. First it is bounded away from any other index,
\[ i\in I, j\notin I \implies \abs{\lambda_i-\lambda_j} > r\]
Second, because all of the eigenvalues corresponding to elements in $I$ have nearby neighbors in $I$, this means that the diameter of the disk containing all of the eigenvalues in $I$ has a diameter bounded by at most $nr$, 
\[ i\in I\implies \abs{\lambda_i-\lambda}\leq nr \,.\]

Next let $V$ be the space spanned by the eigenvectors whose indices lies in $I$,
\[ V:=\Span \left\lbrace\ket{i}\middle|\,i\in I\right\rbrace. \]
Denot\add{ing}\del{e} the orthogonal complement of $V$ by $\bar{V}$, \add{then we can}\del{and} decompose both $A$ and $B$ into blocks \add{on}\del{by} $V\oplus \bar V$ as
\[ A=\begin{pmatrix}
A_V & \\ & A_{\bar V}
\end{pmatrix}
\qquad\text{and}\qquad 
B=\begin{pmatrix}
B_{VV} &B_{\bar{V}V} \\B_{V\bar V} & B_{\bar V\bar V}
\end{pmatrix}. \]

\del{\begin{lem}[Off-diagonal bound]
	\label{lem:off-diag2}
	If $\norm{\comm{A}{B}}\leq \epsilon$ then 
	\[\abs{\lambda_i-\lambda_j}\cdot\abs{\braopket{i}{B}{j}}\leq \epsilon. \]
\end{lem}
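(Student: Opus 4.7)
The key observation is that normality of $A$ ensures its orthonormal eigenvectors $\lbrace \ket{i} \rbrace$ act as both right and left eigenvectors with the same eigenvalue; that is, $A\ket{j} = \lambda_j \ket{j}$ and $\bra{i} A = \lambda_i \bra{i}$. My plan is to evaluate $\braopket{i}{\comm{A}{B}}{j}$ directly by pushing $A$ onto the appropriate side in each term of the commutator, and then invoke the elementary bound $\abs{\braopket{i}{M}{j}} \leq \norm{M}$ that holds for unit vectors $\ket{i}, \ket{j}$ and any operator $M$ in the operator norm.

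Explicitly, expanding the commutator yields
\[ \braopket{i}{\comm{A}{B}}{j} = \braopket{i}{AB}{j} - \braopket{i}{BA}{j} = \lambda_i \braopket{i}{B}{j} - \lambda_j \braopket{i}{B}{j} = (\lambda_i - \lambda_j)\braopket{i}{B}{j}. \]
Taking moduli of both sides and bounding the left-hand side by $\norm{\comm{A}{B}} \leq \epsilon$ delivers the stated inequality immediately.

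There is no real obstacle here; the step worth flagging is that normality (rather than mere diagonalizability) is what guarantees $\bra{i} A = \lambda_i \bra{i}$ with the \emph{same} eigenvalue $\lambda_i$, so that the two terms of the commutator combine cleanly into a scalar multiple of $\braopket{i}{B}{j}$. Looking ahead, this lemma feeds naturally into the next stage of the appendix: because $I$ and its complement are separated by more than $r$ in the complex plane by construction, the bound forces $\abs{\braopket{i}{B}{j}} \leq \epsilon/r$ whenever $i \in I$ and $j \notin I$, and this controls the off-diagonal blocks $B_{V\bar V}$ and $B_{\bar V V}$ in the decomposition already set up, which is the mechanism by which an approximate shared eigenvector will eventually be extracted.
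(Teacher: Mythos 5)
Your proof is correct and is essentially identical to the paper's argument: both evaluate the matrix element $\braopket{i}{\comm{A}{B}}{j}$ in the orthonormal eigenbasis of the normal operator $A$, obtaining $(\lambda_i-\lambda_j)\braopket{i}{B}{j}$, and then bound it by $\norm{\comm{A}{B}}\leq\epsilon$ since the operator norm dominates any matrix element between unit vectors. Your remark that normality is what licenses $\bra{i}A=\lambda_i\bra{i}$ is a fair point of emphasis, and your anticipated use of the lemma matches how the paper deploys it.
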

\begin{proof}
	Using the fact that the operator norm dominates any component of a matrix, we can simply evaluate the relevant component of the commutator:
	\begin{align*}
	\epsilon
	&\geq \bigl\|\comm{A}{B}\bigr\|\\
	&\geq \abs{\bra{i}\comm{A}{B}\ket{j}}\\
	&= \abs{\bra{i}\left[AB-BA\right]\ket{j}}\\
	&=\abs{\lambda_i-\lambda_j}\cdot\abs{\braopket{i}{B}{j}}\,.
	\end{align*}
\end{proof}}

\begin{lem}
\label{lem:strictineq}
	\add{If $\norm{\comm{A}{B}}\leq \epsilon$, with $A$ normal and decomposed 
	as above, then $A_V$ is close to scalar, 
	and the off-diagonal blocks of $B$ are bounded as
	\[ \norm{A_V-\lambda \mathbbm{1}_V}\leq nr\qquad \text{and}\qquad 
	\norm{B_{\bar 
	VV}}\leq n\epsilon/2. \]
	}
	\del{The off-diagonal blocks of $B$ are bounded
	\[ \norm{B_{\bar VV}} < n\epsilon/2r. \]}
\end{lem}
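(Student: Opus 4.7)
The plan is to prove the two inequalities separately, with both arguments working directly in the eigenbasis of the normal matrix $A$ and exploiting the cluster construction of $I$.

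For the first inequality $\norm{A_V-\lambda\mathbbm{1}_V}\leq nr$, I would expand $A_V=\sum_{i\in I}\lambda_i\ket{i}\bra{i}$, so that $A_V-\lambda\mathbbm{1}_V$ is itself normal with spectrum $\{\lambda_i-\lambda:i\in I\}$. The iterative construction $I_0\subseteq I_1\subseteq\dots\subseteq I_n=I$ terminates after at most $n$ steps (since the ambient dimension is $n$), and each step only adjoins indices whose eigenvalues lie within distance $r$ of some eigenvalue already in the cluster. Chaining the triangle inequality along a path of neighbors connecting $\lambda$ to any $\lambda_i$ therefore yields $|\lambda_i-\lambda|\leq nr$. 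Since the operator norm of a normal matrix equals its spectral radius, the first bound follows.

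For the bound on $\norm{B_{\bar V V}}$, I would begin with the pointwise identity that in the eigenbasis of $A$,
\[\bra{i}\comm{A}{B}\ket{j}=(\lambda_i-\lambda_j)\bra{i}B\ket{j},\]
so the commutator bound $\norm{\comm{A}{B}}\leq\epsilon$ yields $|\bra{i}B\ket{j}|\leq\epsilon/|\lambda_i-\lambda_j|$ elementwise. The key use of the cluster construction is that for $i\notin I$ and $j\in I$ the separation satisfies $|\lambda_i-\lambda_j|>r$, so every matrix element of $B_{\bar V V}$ in the eigenbasis is controlled. I would then pass from entries to operator norm via the Frobenius norm $\norm{B_{\bar V V}}\leq\norm{B_{\bar V V}}_F$, and count the number of contributing entries: at most $|I|\cdot|I^c|=|I|(n-|I|)\leq n^2/4$ by AM-GM. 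Taking a square root produces a factor of $n/2$ rather than the $n$ one would obtain from a crude union bound, which is the source of the $1/2$ in the stated inequality.

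The main subtlety is the sharp counting step: the naive bound $|I|\cdot|I^c|\leq n^2$ would cost a full factor of $n$, whereas the AM-GM estimate $|I|(n-|I|)\leq n^2/4$ combined with the Frobenius-to-operator-norm inequality is what yields the improved constant. I expect the main obstacle to be the interplay between the cluster radius parameter $r$ and the cardinality bound, since both feed into the final estimate and the correct grouping is needed to match the stated form; everything else reduces to the pointwise commutator identity and clean elementary estimates.
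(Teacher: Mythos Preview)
Your proposal is correct and matches the paper's argument essentially step for step: the first bound follows from the chained triangle inequality along the cluster construction (stated by the paper just before the lemma as $i\in I\Rightarrow|\lambda_i-\lambda|\le nr$), and the second bound uses the elementwise identity $\bra{i}\comm{A}{B}\ket{j}=(\lambda_i-\lambda_j)\bra{i}B\ket{j}$ together with the separation $|\lambda_i-\lambda_j|>r$ and the entry-count estimate $\dim V\cdot\dim\bar V\le n^2/4$. The paper phrases the max-norm--to--operator-norm step directly rather than routing through the Frobenius norm, but the inequality is the same. Note that the displayed statement has a typo: the bound should read $\|B_{\bar VV}\|<n\epsilon/(2r)$, as both your argument and the paper's proof conclude.
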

\begin{proof}
	\add{Given that $A$ is normal, we can see that $A_V$ is approximately 
	scalar due to the bound between eigenvalues in $I$:
	\begin{align*}
		\norm{A_V-\lambda\mathbbm 1_V}=\max_{i\in I}\abs{\lambda_i-\lambda}\leq 
		nr.
	\end{align*}}	
	\add{\!\!Next, using the fact that the operator norm dominates any 
	component of a matrix, we can simply evaluate the relevant component of the 
	commutator to bound elements of $B$:
	\begin{align*}
	\epsilon
	&\geq \bigl\|\comm{A}{B}\bigr\|\\
	&\geq \abs{\bra{i}\comm{A}{B}\ket{j}}\\
	&= \abs{\bra{i}\left[AB-BA\right]\ket{j}}\\
	&=\abs{\lambda_i-\lambda_j}\cdot\abs{\braopket{i}{B}{j}}\,.
	\end{align*}
	\!\!}
	In the eigenbasis of $A$, the components of $B_{\bar VV}$ correspond to $\braopket{i}{B}{j}$ for $i\notin I$, $j\in I$. By construction of $I$ we have that $\abs{\lambda_i-\lambda_j}> r$, and so \del{by \cref{lem:off-diag2}} 
	\[ \abs{\braopket{i}{B}{j}}\leq \frac{\epsilon}{\abs{\lambda_i-\lambda_j}} < \frac{\epsilon}{r}. \]
	This implies therefore that $\norm{B_{\bar VV}}_\text{max} < \epsilon/r$, where $\norm{\cdot}_{\max}$ denotes the elementwise max-norm. Using the fact that the operator norm exceeds the max-norm by at most the square root of the number of elements, we get
	\[ \norm{B_{\bar VV}}\leq \norm{B_{\bar VV}}_\text{max}\times\sqrt{\dim V\times \dim \bar{V}}. \]
	Given that $\dim V+\dim \bar{V}=n$, we have that $\dim V\times \dim \bar{V}\leq n^2/4$, and so
	\[ \norm{B_{\bar VV}}< n\epsilon/2r\,. \]
\end{proof}

\add{Using these bounds, we can now put bounds on an approximate shared eigenvector. Imposing normality on both matrices, we can even impose the stricter requirement that both of the approximate eigenvalues are in fact exact eigenvalues.}

\begin{appthm}[\add{Shared approximate eigenvector}\del{Fixed eigenvalue}]
	\label{appthm:fixed}
	Suppose that $A$ and $B$ are \add{$n\times n$}\del{$n$-dimensional} matrices, such that $A$ is normal and $\norm{\comm{A}{B}}\leq \epsilon$. For \add{any}\del{a fixed} $\lambda$ which is an eigenvalue of $A$, there exists a normalized $\ket{u}$ and $\mu$ such that
	\[ \norm{A\ket{u}-\lambda\ket{u}},\norm{B\ket{u}-\mu\ket{u}}\leq n\sqrt{\epsilon/2}. \]
	\add{If $B$ is also normal, then for any $\lambda$ which is an eigenvalue 
	of $A$, there exists a $\nu$ which is also an eigenvalue of $B$ and 
	normalized $\ket{w}$ such that
		\[ \norm{A\ket{w}-\lambda\ket{w}},
	\norm{B\ket{w}-\nu\ket{w}}\leq n\sqrt{\epsilon}. \]
	}
\end{appthm}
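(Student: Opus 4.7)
The plan is to exploit the block decomposition $V\oplus\bar V$ provided by \cref{lem:strictineq}, pick the approximate shared eigenvector to be an eigenvector of the compression $B_{VV}:=P_V B P_V$ on $V$, and then optimize the radius $r$ at the end to balance the two error sources. Throughout, let $P_V$ denote the projector onto $V$.

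First, for the part where only $A$ is assumed normal: let $|u\rangle\in V$ be a normalized eigenvector of $B_{VV}$ with eigenvalue $\mu$. Because $|u\rangle\in V$, the first bound in \cref{lem:strictineq} gives immediately
\[ \norm{A\ket{u}-\lambda\ket{u}} = \norm{(A_V-\lambda\mathbbm{1}_V)\ket{u}}\leq nr.\]
For the action of $B$, decompose $B\ket{u}=B_{VV}\ket{u}+B_{\bar V V}\ket{u}=\mu\ket{u}+B_{\bar V V}\ket{u}$, so the second bound of \cref{lem:strictineq} yields
\[ \norm{B\ket{u}-\mu\ket{u}} = \norm{B_{\bar V V}\ket{u}}\leq n\epsilon/(2r). \]
Choosing $r=\sqrt{\epsilon/2}$ balances both expressions to $n\sqrt{\epsilon/2}$, proving the first part.

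For the part where $B$ is also normal, keep the same $\ket{u}$ and $\mu$ but upgrade $\mu$ to an exact eigenvalue $\nu$ of $B$. Expand $\ket{u}=\sum_j c_j\ket{\phi_j}$ in the orthonormal eigenbasis of $B$ with eigenvalues $\nu_j$; then
\[ \norm{B\ket{u}-\mu\ket{u}}^2=\sum_j\abs{c_j}^2\abs{\nu_j-\mu}^2\geq \min_j\abs{\nu_j-\mu}^2,\]
since $\sum_j\abs{c_j}^2=1$. Thus some eigenvalue $\nu$ of $B$ satisfies $\abs{\nu-\mu}\leq n\epsilon/(2r)$. A triangle inequality then gives
\[\norm{B\ket{u}-\nu\ket{u}}\leq \norm{B\ket{u}-\mu\ket{u}}+\abs{\mu-\nu}\leq n\epsilon/r.\]
Setting $\ket{w}:=\ket{u}$, combining with the same bound $\norm{A\ket{w}-\lambda\ket{w}}\leq nr$, and now balancing $\max(nr,n\epsilon/r)$ by taking $r=\sqrt{\epsilon}$ gives both bounds equal to $n\sqrt{\epsilon}$, as required.

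The main subtle point is the choice of $\ket{u}$ as an eigenvector of the compression $B_{VV}$: this is what kills the $V$-component of $(B-\mu)\ket{u}$ exactly, leaving only the off-diagonal leakage $B_{\bar V V}\ket{u}$ which is controlled by the commutator bound. The extra step in the second half is essentially free once one notes that normality of $B$ turns the variance bound on $\ket{u}$ into the existence of a nearby exact eigenvalue; this is where the factor of $2$ in $n\sqrt{\epsilon}$ vs.\ $n\sqrt{\epsilon/2}$ comes from, and where the assumption that $B$ is normal (rather than merely bounded) is crucially used.
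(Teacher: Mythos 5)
Your proposal is correct and follows essentially the same route as the paper: choose $\ket{u}$ to be an eigenvector of the compression $B_{VV}$, invoke \cref{lem:strictineq} for the two error terms, and balance with $r=\sqrt{\epsilon/2}$ (resp.\ $r=\sqrt{\epsilon}$); your eigenbasis-expansion step for locating a nearby exact eigenvalue of $B$ is just a rephrasing of the paper's argument via the positive operator $(B-\mu)^\dag(B-\mu)$. One small phrasing caveat: since the subspace $V$ depends on $r$, the vector in the second part is not literally ``the same $\ket{u}$'' as in the first part but the analogous construction at the new radius $r=\sqrt{\epsilon}$ --- which is exactly what the paper notes and what your general-$r$ derivation implicitly does.
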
\begin{proof}
	Take $\ket{u}$ to be a right eigenvector of $B_{VV}$ (contained within $V$), of eigenvalue $\mu$. \add{By \cref{lem:strictineq}, t}\del{T}his then gives that the relevant errors with respect to $A$ and $B$ behave as:
	
	\begin{align*}
	\norm{A\ket{u}-\lambda\ket{u}}&=\norm{A_V\ket{u}-\lambda \ket{u}} & 
	\norm{B\ket{u}-\mu\ket{u}}&=\norm{B_{VV}\ket{u}-\mu \ket{u}+B_{\bar VV}\ket{u}}\\
	&=\norm{(A_V-\lambda\mathbbm 1)\ket{u}} & &=\norm{B_{\bar VV}\ket{u}}\\
	&\leq\norm{A_V-\lambda\mathbbm 1} & &\leq\norm{B_{\bar VV}}\\ 
	&\leq nr & &< n\epsilon/2r \,.
	\end{align*}
	\del{Here the inequality for the $A$ matrix comes from the diameter bound on the disk containing the eigenvalues in $I$, and the inequality for the $B$ matrix comes from Lemma~\ref{lem:strictineq}.
	}
	\del{Next we can pick an $r$ for which the maximum of these two terms is minimized, specifically $r=\sqrt{\epsilon/2}$. Substituting this into each bound gives}\add{If we now let $r=\sqrt{\epsilon/2}$, we get} the \add{stated }overall bound of $n\sqrt{\epsilon/2}$.
	
	\add{For the case of both matrices being normal, we can show that any approximate eigenvalue must lie near an exact eigenvalue. Taking $\ket{w}$ once again to be a right eigenvector of $B_{VV}$ with eigenvalue $\nu'$ (for a different value of $r$ to $\ket{u}$), we can see that
	\begin{align*}
		\norm{B\ket{w}-\nu'\ket{w}}\leq n\epsilon/2r\qquad\implies\qquad \bra{w}(B-\nu')^\dag (B-\nu)\ket{w}\leq n^2\epsilon^2/4r^2.
	\end{align*}
	As $(B-\nu')^\dag(B-\nu')$ is positive semi-definite, the existence of 
	such a $\ket{w}$ implies $(B-\nu')^\dag (B-\nu')$ possesses an eigenvalue 
	at most $n^2\epsilon^2/4r^2$. By the normality of $B$, this implies in turn 
	that $B$ contains an eigenvalue $\nu$ such that $\abs{\nu-\nu'}\leq 
	n\epsilon/2r$. Using this we can see that the error with respect to $B$ 
	gains a factor of $2$
	\begin{align*}
		\norm{B\ket{w}-\nu\ket{w}}\leq \norm{B\ket{w}-\nu'\ket{w}}+\abs{\nu-\nu'}\leq n\epsilon/r.
	\end{align*}
	Now taking $r=\sqrt{\epsilon}$, we find the stated bound of $n\sqrt{\epsilon}$.
	}
\end{proof}

\del{
In general $\mu$ is not an eigenvalue of $B$. We can impose that this value have unit norm in the unitary case, or impose it to be an eigenvalue of $B$ in the Hermitian case, at the additional cost of a factor of $\sqrt{2}$.
\begin{corr}[Unitary case]
	\label{corr:unit}
	Suppose that $A$ and $B$ are $n$-dimensional matrices, such that $A$ is normal, $B$ is unitary, and $\norm{\comm{A}{B}}\leq \epsilon$. For a fixed $\lambda$ which is an eigenvalue of $A$, there exists a vector $\ket{u}$ and $\mu$ with $\abs{\mu}=1$, such that
	\[ \norm{A\ket{u}-\lambda\ket{u}},\norm{B\ket{u}-\mu\ket{u}}\leq n\sqrt{\epsilon}. \]
\end{corr}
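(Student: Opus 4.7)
The plan is to reuse the construction from Theorem~\ref{appthm:fixed} verbatim, but to rebalance the radius parameter $r$ so as to absorb the extra error incurred when the approximate eigenvalue is projected onto the unit circle. First I would form the same block decomposition $V \oplus \bar V$ around $\lambda$ and take $\ket{u}$ to be a normalized right eigenvector of $B_{VV}$, with eigenvalue $\mu'$. As in the proof of Theorem~\ref{appthm:fixed}, Lemma~\ref{lem:strictineq} then gives
\[
\norm{A\ket{u}-\lambda\ket{u}} \le nr, \qquad \norm{B\ket{u}-\mu'\ket{u}} \le n\epsilon/(2r).
\]
Generically $\abs{\mu'}\neq 1$, so $\mu'$ itself does not meet the conclusion and we must replace it by a nearby unit-modulus scalar.

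The crucial new step is to use unitarity of $B$ to pin $\abs{\mu'}$ near $1$. Since $\ket{u}$ is normalized, $\norm{B\ket{u}}=1$ while $\norm{\mu'\ket{u}}=\abs{\mu'}$, so the reverse triangle inequality applied to the second bound above gives
\[
\bigl|\,\abs{\mu'}-1\bigr| \;\le\; \norm{B\ket{u}-\mu'\ket{u}} \;\le\; n\epsilon/(2r).
\]
Provided the right-hand side is strictly less than $1$ (otherwise the corollary's bound exceeds $1$ and is trivially true for any unit $\mu$), we have $\mu'\neq 0$, and I may set $\mu := \mu'/\abs{\mu'}$. Then $\abs{\mu}=1$ and $\abs{\mu-\mu'} = \bigl|1-\abs{\mu'}\bigr| \le n\epsilon/(2r)$, so by another triangle inequality
\[
\norm{B\ket{u}-\mu\ket{u}} \;\le\; \norm{B\ket{u}-\mu'\ket{u}} + \abs{\mu-\mu'} \;\le\; n\epsilon/r.
\]

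Finally I would rebalance $r$. The two bounds to equalize are now $nr$ (on $A$) and $n\epsilon/r$ (on $B$); setting $r=\sqrt{\epsilon}$ yields the advertised common bound $n\sqrt{\epsilon}$ on both errors. I do not anticipate any real obstacle: all the heavy lifting resides in Theorem~\ref{appthm:fixed}, and the only new ingredient is the elementary unitarity argument pinning $\abs{\mu'}$ to the unit circle. The $\sqrt{2}$ loss relative to Theorem~\ref{appthm:fixed} emerges exactly because, after this pinning step, the $B$-error is doubled, forcing the optimal $r$ to shift from $\sqrt{\epsilon/2}$ to $\sqrt{\epsilon}$.
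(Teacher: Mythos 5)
Your proposal is correct and follows the paper's own proof essentially step for step: the same eigenvector $\ket{u}$ of $B_{VV}$, the same radial projection $\mu=\mu'/\abs{\mu'}$ onto the unit circle, the same triangle-inequality doubling of the $B$-error to $n\epsilon/r$, and the same rebalancing to $r=\sqrt{\epsilon}$. The only (harmless) difference is in how $\abs{1-\abs{\mu'}}$ is bounded: you use the reverse triangle inequality with $\norm{B\ket{u}}=1$, while the paper uses the block-unitarity identity $B_{VV}^\dag B_{VV}+B_{\bar{V}V}^\dag B_{\bar{V}V}=I_V$; both yield the same $n\epsilon/(2r)$ estimate.
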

\begin{proof}
	The proof is similar to above. Suppose we took $\ket{u}$ to be a right eigenvector of $B_{VV}$ with eigenvalue $\mu'$, which since $B$ is unitary implies that $|\mu'| \le 1$. Let $\mu=\mu'/\abs{\mu'}$ such that $\abs{\mu-\mu'}=1-\abs{\mu'}$. Requiring that $B$ is a unitary matrix implies
	\[ B_{VV}^\dag B_{VV}+B_{\bar VV}^\dag B_{\bar VV}=I_{V}. \]
	As $\ket{u}$ is a $\mu'$-eigenvector of $B_{VV}$, we can upper bound the difference $\abs{\mu-\mu'}$:
	\begin{align*}
		\abs{\mu-\mu'}
		&=1-\abs{\mu'}\\
		&\leq 1-\abs{\mu'}^2\\
		&=\bigl\langle u\bigr| \bigl[I_V-B_{VV}^\dag B_{VV}\bigr]\bigl| u\bigr\rangle\\
		&\leq\bigl\| I_V-B_{VV}^\dag B_{VV}\bigr\|\\
		&=\bigl\| B_{\bar VV}^\dag B_{\bar VV}\bigr\|\\
		&=\bigl\|B_{\bar VV}\bigr\|^2\\
		&\leq\norm{B_{\bar VV}}\\
		&< n\epsilon/2r \,.
	\end{align*}
	The cost associated with $B$ now becomes
	\begin{align*}
		\norm{B\ket{u}-\mu\ket{u}}
		&\leq\norm{B\ket{u}-\mu'\ket{u}}+\abs{\mu-\mu'}\\
		&< n\epsilon/2r+n\epsilon/2r\\
		&=n\epsilon/r \,.
	\end{align*}
	The bound of $\norm{A\ket{u}-\lambda\ket{u}} \le n r$ carries over unchanged. Once again minimaxing over $r$ ($r=\sqrt{\epsilon}$), we get the stated bound.
\end{proof}
\begin{corr}[Hermitian case]
	\label{corr:herm}
	Suppose that $A$ and $B$ are $n$-dimensional matrices, such that $A$ is normal, $B$ is Hermitian, and $\norm{\comm{A}{B}}\leq \epsilon$. For a fixed $\lambda$ which is an eigenvalue of $A$, there exists a vector $\ket{u}$ and $\mu$ which is an eigenvalue of $B$, such that
	\[ \norm{A\ket{u}-\lambda\ket{u}},\norm{B\ket{u}-\mu\ket{u}}\leq n \sqrt{\epsilon}. \]
\end{corr}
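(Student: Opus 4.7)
The plan is to repeat the construction used in the statement immediately above (\cref{appthm:fixed})---the same eigenvalue-clustering $I$, subspace $V$, and block decomposition of $A$ and $B$---and then use Hermiticity to promote the approximate eigenvalue of $B$ into an exact one. I would start by taking $\ket{u}$ to be a normalized right eigenvector of the block $B_{VV}$ with eigenvalue $\mu'$; since $B$ is Hermitian, so is $B_{VV}$, and hence $\mu'\in\mathbb{R}$. Just as in the proof of \cref{appthm:fixed}, the diameter bound on $I$ yields $\norm{A\ket{u}-\lambda\ket{u}}\leq nr$, and \cref{lem:strictineq} gives $\norm{B\ket{u}-\mu'\ket{u}}=\norm{B_{\bar V V}\ket{u}}<n\epsilon/(2r)$.

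The new ingredient---the one that distinguishes this statement from the general ``$B$ normal'' result---is that $\mu'$ can be upgraded to an actual eigenvalue of $B$ at essentially no extra cost. I would consider the block-diagonal truncation $\tilde B:=B_{VV}\oplus B_{\bar V\bar V}$, which is Hermitian. Because Hermiticity of $B$ forces $B_{V\bar V}=B_{\bar V V}^\dag$, the perturbation $B-\tilde B$ is itself self-adjoint, and its operator norm equals $\norm{B_{\bar V V}}$. Weyl's inequality applied to the Hermitian pair $(B,\tilde B)$ then guarantees that for the eigenvalue $\mu'$ of $\tilde B$ there exists some eigenvalue $\mu$ of $B$ with $\abs{\mu-\mu'}\leq\norm{B-\tilde B}<n\epsilon/(2r)$.

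A triangle inequality produces $\norm{B\ket{u}-\mu\ket{u}}<n\epsilon/r$, while the bound on the $A$-side is unchanged, so balancing $nr$ against $n\epsilon/r$ by setting $r=\sqrt{\epsilon}$ gives the claimed $n\sqrt{\epsilon}$ bound. I do not anticipate any serious obstacle: the argument is essentially the Hermitian specialization of the ``both normal'' half of \cref{appthm:fixed}, with Hermiticity making the Weyl-type step cleaner than a generic normal-matrix perturbation argument would be---in particular, one avoids having to route the bound through $(B-\mu')^\dag(B-\mu')$ as in the normal case, since Weyl's inequality applies directly to the real spectra. One could alternatively simply observe that Hermitian matrices are normal and invoke the second half of \cref{appthm:fixed} verbatim, but the direct argument above has the virtue of showing that no constants are lost relative to the appthm bound.
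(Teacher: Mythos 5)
Your proposal is correct and is essentially the paper's own argument: the paper likewise takes $\ket{u}$ to be an eigenvector of $B_{VV}$, compares $B$ to its pinching $B_{VV}\oplus B_{\bar V\bar V}$ (whose distance from $B$ is $\max\{\norm{B_{\bar VV}},\norm{B_{V\bar V}}\}\leq n\epsilon/2r$), invokes Weyl's inequality to find a true eigenvalue $\mu$ of $B$ within $n\epsilon/2r$ of $\mu'$, and balances with $r=\sqrt{\epsilon}$. Your observation that Hermiticity makes the off-diagonal perturbation self-adjoint with norm exactly $\norm{B_{\bar VV}}$ is a harmless refinement of the same step.
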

\begin{proof}
	Once again, let $\ket{u}$ be an eigenvector of $B_{VV}$ of value $\mu'$. Next let $B'$ be the pinching of $B$ to $V\oplus \bar{V}$
	\[ B'=\begin{pmatrix}
	B_{VV} & 0 \\ 0 & B_{\bar V\bar V}
	\end{pmatrix}. \]
	First we note that $B$ and $B'$ are close in the operator norm, as
	\begin{align*}
		\norm{B-B'}
		&=\norm{B_{\bar VV}+B_{V\bar V}}\\
		&=\max\lbrace \norm{B_{\bar VV}},\norm{B_{V\bar V}}\rbrace\\
		&\leq n\epsilon/2r \,.
	\end{align*}
	By definition $\mu'$ was a eigenvalue of $B_{VV}$, and therefore of $B'$. Weyl's inequality~\cite{Weyl1912} gives that there exists an eigenvalue $\mu$ of $B$ that is within the operator norm distance of $\mu'$, i.e.\
	\[ \abs{\mu-\mu'}\leq n\epsilon/2r. \]
	As with the unitary case, this $\abs{\mu-\mu'}$ also induces a factor of $2$ in the cost associated with $B$, and therefore gives the same final errors bounds.
\end{proof}
In all of the above analysis, we have considered $\lambda$ to be fixed. If instead we allow $\lambda$ to be \emph{any} eigenvalue of $A$, then we can get a factor of 2 improvement in our bounds.
\begin{appthm}[Free eigenvalue]
	\label{appthm:free}
	Suppose that $A$ and $B$ are $n$-dimensional matrices, such that $A$ is normal and $\norm{\comm{A}{B}}\leq \epsilon$. There exists a $\lambda$ which is an eigenvalue of $A$, and $\ket{u}$ and $\mu$ such that
	\[ \norm{A\ket{u}-\lambda \ket{u}},\norm{B\ket{u}-\mu\ket{u}}\leq n\sqrt{\epsilon}/2. \]
	Once again if $B$ is unitary and we require $\mu$ to have unit norm, or if $B$ is Hermitian and we require $\mu$ to be an eigenvalue, then the errors grow to
	\[ \norm{A\ket{u}-\lambda \ket{u}},\norm{B\ket{u}-\mu\ket{u}}\leq n\sqrt{\epsilon/2}. \]
\end{appthm}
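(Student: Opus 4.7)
The statement is essentially \cref{appthm:fixed} sharpened by allowing $\lambda$ to be chosen \emph{a posteriori}: we pay the same $B$-error, but by picking a well-centred eigenvalue of $A$ inside the relevant cluster we roughly halve the $A$-error, which then permits rebalancing the radius $r$ and improves the overall bound by a factor of $\sqrt2$.

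My first step would be to reuse the cluster construction from the proof of \cref{appthm:fixed} verbatim. Pick any eigenvalue $\lambda_0$ of $A$, grow $I$ by iteratively adding eigenvalues within Euclidean distance $r$, and set $V=\Span\{\ket{i}:i\in I\}$. \Cref{lem:strictineq} still yields $\norm{B_{\bar V V}}\leq n\epsilon/(2r)$. Choose $\ket{u}$ to be a unit eigenvector of $B_{VV}$ with eigenvalue $\mu$; then exactly as before,
\[
\norm{B\ket{u}-\mu\ket{u}}=\norm{B_{\bar V V}\ket{u}}\leq n\epsilon/(2r).
\]

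The improvement appears in the $A$-error. By construction $I$ is a connected component in the graph on the eigenvalues of $A$ whose edges join pairs at Euclidean distance at most $r$; it has at most $n$ vertices, so any spanning tree has graph-diameter at most $n-1$ and a centre vertex of graph-eccentricity at most $\lceil(\abs{I}-1)/2\rceil\leq n/2$. Letting $\lambda^{\ast}\in\{\lambda_i:i\in I\}$ be such a centre, the triangle inequality along the tree path gives $\abs{\lambda_i-\lambda^{\ast}}\leq (n/2)r$ for every $i\in I$, so
\[
\norm{A\ket{u}-\lambda^{\ast}\ket{u}}=\norm{(A_V-\lambda^{\ast}\mathbbm 1_V)\ket{u}}\leq\max_{i\in I}\abs{\lambda_i-\lambda^{\ast}}\leq nr/2.
\]
Taking $\lambda=\lambda^{\ast}$ and balancing $nr/2=n\epsilon/(2r)$ gives $r=\sqrt{\epsilon}$ and both errors at most $n\sqrt{\epsilon}/2$, matching the basic bound in the theorem.

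For the unitary and Hermitian variants I would import the rounding arguments that accompany the analogous corollaries of \cref{appthm:fixed}: enforcing $\abs{\mu}=1$ (via $B_{VV}^{\dagger}B_{VV}+B_{\bar V V}^{\dagger}B_{\bar V V}=\mathbbm 1_V$) or $\mu\in\mathrm{spec}(B)$ (via Weyl's inequality applied to the pinching of $B$ onto $V\oplus\bar V$) each costs at most an extra $n\epsilon/(2r)$ in the $B$-error, doubling it to $n\epsilon/r$. Rebalancing $nr/2=n\epsilon/r$ now yields $r=\sqrt{2\epsilon}$, and both errors are bounded by $n\sqrt{2\epsilon}/2=n\sqrt{\epsilon/2}$, as required. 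The main obstacle is the combinatorial step bounding the $A$-error: one must carefully handle degenerate eigenvalues and verify that the tree centre is realised by an actual eigenvalue (not merely a convex combination). Both issues are clean if one works with an explicit BFS spanning tree rooted at $\lambda_0$ and invokes the classical fact that every finite tree has a centre vertex of eccentricity at most $\lceil\mathrm{diam}/2\rceil$.
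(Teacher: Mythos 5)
Your proposal is correct and follows essentially the same route as the paper's proof: the same cluster construction and \cref{lem:strictineq}, an eigenvector of $B_{VV}$, a ``central'' eigenvalue of the cluster chosen via graph distance to halve the $A$-error (the paper takes the midpoint of a longest path, you take a spanning-tree centre, which is the same idea stated slightly more carefully), then rebalancing $r$, with the unitary and Hermitian cases handled exactly as in the paper's corresponding corollaries. No gaps.
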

\begin{proof}
	The proof here is once again similar to \cref{appthm:fixed}. Suppose we construct a set $I$ as before, from a cluster of eigenvalues with each element of $I$ within $r$ of at least one other element of $I$. The distance between any two eigenvalues with indices in $I$ is at most $nr$. There also exist eigenvalues with indices in $I$ such that all other such eigenvalues lie within $nr/2$ as follows. Consider the graph $G$ whose vertices are the points $\lambda_i \in I$ and where $(\lambda_i, \lambda_j)$ is an edge iff $|\lambda_i - \lambda_j| \le r$. This graph had diameter less than $n$ (the maximum number of point in $I$). Consider a maximum-length path in this graph, and choose the vertex corresponding to a midpoint of this path. In the complex plane, by the triangle inequality the eigenvalue associated to this midpoint vertex is at most distance $nr/2$ away from the other eigenvalues.
	Let $\lambda$ be such a midpoint eigenvalue. This choice allows us to improve the error bound on the $A$ approximate eigenvector by a factor of two,
	\[ \norm{A\ket{u}-\lambda \ket{u}}\leq nr/2 \,.\]
	Here the analysis works the same as previously. This gives an overall factor of $\sqrt{2}$ improvement,
	\[ \norm{A\ket{u}-\lambda \ket{u}},\norm{B\ket{u}-\mu\ket{u}}\leq n\sqrt{\epsilon}/2\,. \] 	
	Using the same techniques as Corollaries~\ref{corr:unit}~and~\ref{corr:herm} we can impose $\mu$ to be either unit norm or an eigenvalue respectively, growing the error associated with $B$ to
	\[ \norm{B\ket{u}-\mu\ket{u}}\leq n\epsilon/r. \]
	This factor of two carries through to a $\sqrt{2}$ in the final error, giving the stated final bound of $n\sqrt{\epsilon/2}$.
\end{proof}
}


\section{An algorithm for the certifiable degeneracy of a twisted pair}
\label{app:lowerbound}

In this appendix we sketch how, for a given pair of parameters $\alpha$ and $\delta$, we can calculate the minimum possible dimension of unitaries $u$ and $v$ such that $\normu{\twicomm{u}{v}}\leq\delta.$ Lemmas~\ref{lem:t-col:change1}~and~\ref{lem:t-col:eigen1} give that for all $j\in \mathbb{Z}$, there exists an eigenvalue $e^{i\phi_j}$ of $u$ such that
\[ \abs{\phi_j-2\pi \alpha j}\leq \cos^{-1}(1-\abs{j}\delta). \]
The question now is to find the minimum number of eigenvalues such that at least one lies in each of the above arcs. This is known as the \emph{transversal number}, and can be efficiently calculated by a greedy algorithm~\cite{GyarfasLehel1985}. We now sketch this algorithm for the example parameters $\alpha=1/4$ and $\delta=1/2$ (indicated by the turquoise dot in \cref{fig:mountains}).

The first thing to note is that these arcs are trivial for $\delta \abs{j}\geq 2$, in that they are the entire unit circle. For this reason we need only consider a finite number of arcs for $j=-\lfloor 2/\delta\rfloor,\ldots,\lfloor 2/\delta\rfloor$. In our case this corresponds $j=-3,\dots,3$. Below we have drawn these non-trivial arcs, omitting the trivial $j=0$ arc.

\begin{center}
	\ifcompile
	\tikzsetnextfilename{Arcs1}
	\begin{tikzpicture}[scale=0.75]
	\def\dr{0.125}
	\draw[very thick] (0,0) circle (1);
	\draw[very thick] (0,0) circle (1+11*\dr);
	
	\def\L{30} \def\U{150}
	\draw[fill=blue] (\L:1+\dr) arc (\L:\U:1+\dr) -- (\U:1+2*\dr) arc (\U:\L:1+2*\dr) -- cycle;
	\draw[fill=blue] (-\L:1+3*\dr) arc (-\L:-\U:1+3*\dr) -- (-\U:1+4*\dr) arc (-\U:-\L:1+4*\dr) -- cycle;
	
	\def\L{90} \def\U{270}
	\draw[fill=blue] (\L:1+5*\dr) arc (\L:\U:1+5*\dr) -- (\U:1+6*\dr) arc (\U:\L:1+6*\dr) -- cycle;
	
	\def\L{150} \def\U{390}
	\draw[fill=blue] (\L:1+7*\dr) arc (\L:\U:1+7*\dr) -- (\U:1+8*\dr) arc (\U:\L:1+8*\dr) -- cycle;
	\draw[fill=blue] (-\L:1+9*\dr) arc (-\L:-\U:1+9*\dr) -- (-\U:1+10*\dr) arc (-\U:-\L:1+10*\dr) -- cycle;
	
	\def\a{0}
	\end{tikzpicture}
	\else
	\includegraphics{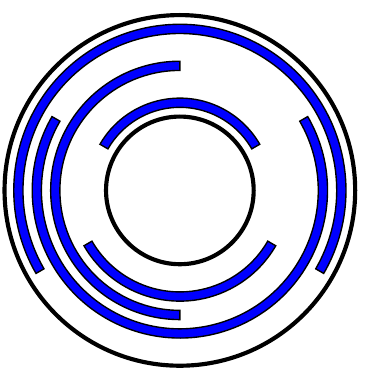}
	\fi
\end{center}

~

Next we note that the $j=0$ arc is simply a point, implying that $u$ must contain a $+1$ eigenvalue. Given this, any arc containing $+1$ can be thrown away (indicated in red below), allowing us to unfold our arcs on a circle into intervals on a line.

\begin{center}
	\ifcompile
	\tikzsetnextfilename{Arcs2}
	\begin{tikzpicture}[scale=0.75]
	\def\dr{0.125}
	\draw[very thick] (0,0) circle (1);
	\draw[very thick] (0,0) circle (1+11*\dr);
	
	\def\L{30} \def\U{150}
	\draw[fill=blue] (\L:1+\dr) arc (\L:\U:1+\dr) -- (\U:1+2*\dr) arc (\U:\L:1+2*\dr) -- cycle;
	\draw[fill=blue] (-\L:1+3*\dr) arc (-\L:-\U:1+3*\dr) -- (-\U:1+4*\dr) arc (-\U:-\L:1+4*\dr) -- cycle;
	
	\def\L{90} \def\U{270}
	\draw[fill=blue] (\L:1+5*\dr) arc (\L:\U:1+5*\dr) -- (\U:1+6*\dr) arc (\U:\L:1+6*\dr) -- cycle;
	
	\def\L{150} \def\U{390}
	\draw[fill=red] (\L:1+7*\dr) arc (\L:\U:1+7*\dr) -- (\U:1+8*\dr) arc (\U:\L:1+8*\dr) -- cycle;
	\draw[fill=red] (-\L:1+9*\dr) arc (-\L:-\U:1+9*\dr) -- (-\U:1+10*\dr) arc (-\U:-\L:1+10*\dr) -- cycle;
	
	\def\a{0}
	\draw[black,very thick] (\a:1) -- (\a:1+11*\dr);
	\draw[very thick] (3.25,0) -- (5,0) -- (4.8,0.2)  (4.8,-0.2) -- (5,0);	
	\begin{scope}[xshift=6cm,xscale=1/90]
	\draw[very thick] (0,0) -- (360,0);
	\draw[very thick] (0,7*\dr) -- (360,7*\dr);
	\draw[fill=blue] (30,\dr) rectangle (150,2*\dr);
	\draw[fill=blue] (90,3*\dr) rectangle (270,4*\dr);
	\draw[fill=blue] (330,5*\dr) rectangle (210,6*\dr);
	\end{scope}
	\end{tikzpicture}
	\else
	\includegraphics{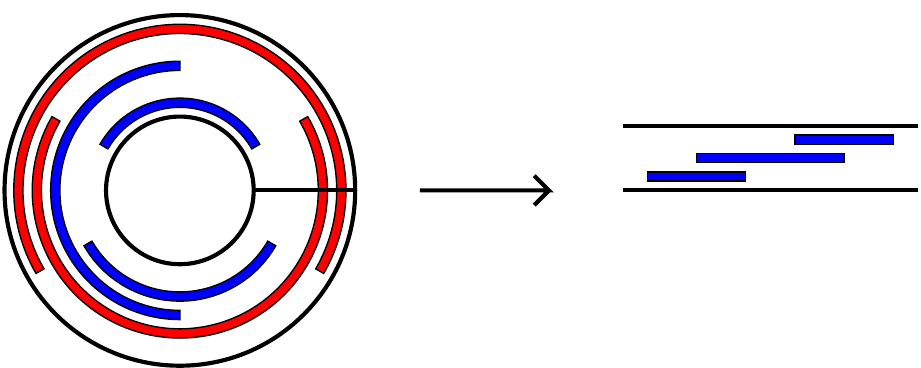}
	\fi
\end{center}

~

We then take the intervals to be sorted by end-point. Considering each interval in order, we place an eigenvalue at the end of each interval as necessary, indicated as a green line below. Note that any interval which already contains an included eigenvalue when we arrive at it can be ignored, indicated by the red interval below. 

\begin{center}
	\ifcompile
	\tikzsetnextfilename{Interval}
	\begin{tikzpicture}
		\def\dr{0.125}
		\begin{scope}[xscale=1/90]
			\draw[very thick] (0,0) -- (360,0);
			\draw[very thick] (0,7*\dr) -- (360,7*\dr);
			\draw[fill=blue] (30,\dr) rectangle (150,2*\dr);
			\draw[fill=red] (90,3*\dr) rectangle (270,4*\dr);
			\draw[fill=blue] (330,5*\dr) rectangle (210,6*\dr);
			\draw[green, very thick] 
			(150,0) -- (150,7*\dr) (330,0) -- (330,7*\dr);
		\end{scope}
	\end{tikzpicture}
	\else
	\includegraphics{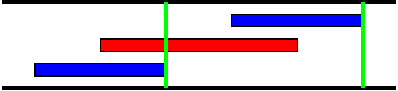}
	\fi
\end{center}
Including the already found eigenvalue at $+1$, this gives us the minimum number of points necessary to satisfy each arc. Applying the algorithm for a large number of points, we can plot the certified degeneracy as in \cref{fig:mountains}.


\end{document}